\documentclass[letterpaper, 10 pt, conference]{ieeeconf}  

\IEEEoverridecommandlockouts                              
\makeatletter
\let\NAT@parse\undefined
\makeatother

\usepackage[hidelinks]{hyperref}
\usepackage{graphicx} 
\usepackage{amsmath} 
\usepackage{amssymb}  
\usepackage{cleveref}
\usepackage{booktabs}
\usepackage{algorithm}
\usepackage{algorithmic}
\usepackage[compress, sort]{cite}

\usepackage{xcolor}
\usepackage{cuted}
\usepackage{multirow}
\usepackage{xfrac}

\let\labelindent\relax
\usepackage{enumitem}

\usepackage{tikz}
\usetikzlibrary{arrows.meta}

\newtheorem{remark}{Remark}
\newtheorem{example}{Example}
\newtheorem{definition}{Definition}

\newtheorem{proposition}{Proposition}

\newcommand{\mysec}[2]{\section{#1}}
\newcommand{\mysubsec}[2]{\subsection{#1}}
\newcommand{\miniskip}[0]{\vspace{2pt plus 0.5pt minus 0.5pt}}

\newcommand*{\eigvec}{w}
\newcommand*{\eigval}{\hat{\lambda}}
\newcommand{\innerprod}[2]{\langle #1, #2 \rangle}
\newcommand{\apprx}{\operatorname{approx}}
\newcommand{\aug}{\operatorname{aug}}
\newcommand{\nD}{n}

\newcommand*{\eigfun}{\varphi}
\newcommand*{\eigfunvec}{\eigfun_{[n]}}

\newcommand*{\basisfun}{\psi}
\newcommand*{\basisfunvec}{\Psi}
\newcommand*{\basisfunvecbar}{\bar{\Psi}}
\newcommand*{\basismat}{\Psi}

\newcommand*{\basisfunvariant}{\zeta}
\newcommand*{\basisfunvecvariant}{\basisfunvariant_{[n]}}

\newcommand*{\observable}{g}

\newcommand*{\decoderfun}{h}
\newcommand*{\decoderfunvec}{\decoderfun}

\newcommand{\longthmtitle}[1]{\mbox{}{\textit{(#1):}}}

\title{\LARGE \bf
Koopman operator theory: fundamentals, control, and applications
}

\author{Igor Mezi{\'c}, Jorge Cort\'es, Karl Worthmann, Mircea Lazar, and Armin Lederer
\thanks{${}^*$Authors are listed in the order of their talks during the tutorial session.}%
\thanks{I. Mezi\'c is with the Department of Mechanical Engineering, University of California, Santa Barbara, USA
        {\tt\small mezic@ucsb.edu}}%
\thanks{J. Cort\'es is with the Department of Mechanical and Aerospace Engineering, University of California, San  Diego, USA
        {\tt\small cortes@ucsd.edu}}%
\thanks{K.~Worthmann is with the Optimization-based Control Group, 
TU Ilmenau, 
Germany
        {\tt\small karl.worthmann@tu-ilmenau.de}}%
\thanks{M.~Lazar is with the Constrained Control of Complex Systems Lab, Control Systems Group, Department of Electrical Engineering, 
TU Eindhoven, The Netherlands
        {\tt\small m.lazar@tue.nl}}%
\thanks{A.~Lederer is with the Department of Electrical and Computer Engineering, National University of Singapore, Singapore
        {\tt\small armin.lederer@nus.edu.sg}}%
}

\begin{document}

\maketitle
\thispagestyle{empty}
\pagestyle{empty}

\begin{abstract}
    The Koopman operator has
    gained considerable attention due to its ability to provide a global linear representation of highly complex dynamical systems. 
    The operator describes nonlinear dynamics in a linear way through the lens of real- or complex-valued observable functions. 
    Recently proposed data-driven techniques, like extended dynamic mode decomposition (EDMD), its kernelized variant, and machine-learning methods, can be used to generate finite-dimensional approximations accompanied by finite-data error bounds.
    In this tutorial paper, we provide a concise introduction into Koopman operator theory and its use in systems and control. 
    A particular focus is put on data-driven surrogate models, their extension to systems with inputs, and controller design using Koopman operator theory. 
    Moreover, we demonstrate the key techniques, i.e., EDMD and Koopman MPC. To this end, we provide simulation studies including source code on GitHub to enable the interested reader to experience the Koopman operator in systems and control step by step.\looseness=-1
\end{abstract}

\section{Motivation and outline}
\label{sec:intro}

The analysis and control of complex nonlinear systems pose fundamental challenges in modern control theory, particularly regarding closed-form controller design~\cite{isidori85:nonlinear, krstic95:nonlinear}, the rigorous characterization of asymptotic properties such as stability~\cite{khalil02:nonlinear} and invariant sets~\cite{blanchini08:set}, and the computation of accurate long-term predictions~\cite{mezic2024koopman}. Traditionally, these issues are mitigated either through highly specialized, domain-specific solutions that require costly expert knowledge~\cite{recht19:tour}, or through a linearization around nominal operating points~\cite{shamma02:analysis}. However, these conventional approaches exhibit significant limitations: expert-driven designs lack broad applicability, while locally valid linear approximations often experience severe performance degradation outside their restricted operating envelopes. Consequently, the development of flexible, global control methodologies for complex nonlinear systems remains a critical open problem.

Koopman operator theory offers a compelling perspective to address these challenges by providing an alternative representation of nonlinear dynamical systems via a potentially infinite-dimensional, yet  
linear, composition operator \cite{MeziBana04, mezi05, RowlMezi09}. Because this framework yields a global linear representation, it inherently facilitates the resolution of the aforementioned difficulties: it significantly simplifies controller design by rendering classical control-theoretic techniques applicable \cite{StraWort26}, enables the analysis of asymptotic properties such as stability and invariant sets using techniques from linear systems theory \cite{mauroy2016global}, and allows for accurate long-term predictions through a linear propagation of latent states \cite{bevanda23:koopman}.
Coupled with its natural suitability for data-driven formulations, these advantages have catalyzed the widespread adoption of Koopman operator-theoretic approaches within the systems and control community \cite{bevanda21:koopman, brunton22:modern, Mauroy20Susu20:book}. However, practical implementation necessitates the approximation of this infinite-dimensional operator using finite data sets to yield a tractable, finite-dimensional model. 
Deriving approximations that are simultaneously amenable to control design, preserve insightful spectral information, and maintain long-term predictive accuracy demands carefully designed approximation techniques; see, e.g., \cite{StraScha26:SafEDMD} for a comprehensive treatment of approximation errors and closed-loop guarantees, and \cite{shi26:koopman} for a recent survey emphasizing robotic applications.

\paragraph*{Overview}
This tutorial paper accompanies our tutorial session at the $65$th IEEE Conference on Decision and Control, Honolulu, Hawaii. 
We will introduce the foundations of Koopman operator theory, present elementary techniques for determining data-driven approximations of the composition operator, explain dedicated techniques for controller design based on Koopman operator models, and highlight recent connections to machine learning. 
On the one hand, we will provide a comprehensive overview of Koopman operator-theoretic methods.
On the other hand, we will \textit{decode} a selection of methods to enable the reader to apply the presented tool chain to complex systems. 
Further, we aim to facilitate the understanding and applicability of the presented methods through demonstrations on representative applications, which are publicly available at \url{https://github.com/KOT-tutorial/CDC26}.

\begin{table}[htb!]
\setlength{\tabcolsep}{1pt}
    \begin{center}
    \caption{Outline and structure of the paper.}
    \label{tab:overview}
    \small
    \begin{tabular}{p{0.25cm} p{7.35cm} c}
    \toprule
        Section & & Page\\
        \midrule
        \multicolumn{2}{p{7.5cm}}{\ref{sec:intro} Motivation and outline} \vspace{0.10cm} & \pageref{sec:intro} 
        \\
        \multicolumn{2}{p{7.5cm}}{\ref{sec:KOT} Koopman operator theory} & \pageref{sec:KOT}
        \\
        & \ref{subsec:eigenfunctions} Koopman eigenfunctions \& spectral analysis & \pageref{subsec:eigenfunctions}
        \\
        & \ref{subsec:Koopman-invariance} Koopman invariance \vspace{0.10cm}  & \pageref{subsec:Koopman-invariance}
        \\
        \multicolumn{2}{p{7.5cm}}{\ref{sec:EDMD} Extended dynamic mode decomposition} & \pageref{sec:EDMD}\\
        & \ref{subsec:EDMD} EDMD algorithm & \pageref{subsec:EDMD}\\
        & \ref{subsec:EDMD_spectral} Eigenfunctions/spectral approximations & \pageref{subsec:EDMD_spectral}\\
        & \ref{subsec:kEDMD} Kernel EDMD & \pageref{subsec:kEDMD} \\
        & \ref{subsec:EDMD_error} EDMD error analysis: projection and estimation & \pageref{subsec:EDMD_error}\\
        & \ref{subsec:EDMD_error_full} Kernel EDMD: full approximation error bounds \vspace{0.10cm} & \pageref{subsec:EDMD_error_full}\\
        \multicolumn{2}{p{7.5cm}}{\ref{sec:control} Extension to control systems} &  \pageref{sec:control}\\
        & \ref{subsec:LTI} On exact LTI embeddings & \pageref{subsec:LTI}\\
        & \ref{subsec:control:family} Koopman control family & \pageref{subsec:control:family}\\
        & \ref{subsec:product:Hilbert} Koopman operator for control via tensor products & \pageref{subsec:product:Hilbert}\\
        & \ref{subsec:EDMD:control} EDMD with control & \pageref{subsec:EDMD:control}\\
        & \ref{subsec:EDMDc:implementation} Implementation and numerical comparison & \pageref{subsec:EDMDc:implementation}\\
        & \ref{subsec:observer} Koopman observers \vspace{0.10cm}  & 
        \pageref{subsec:observer}\\
        \multicolumn{2}{p{7.5cm}}{\ref{sec:controller} Controller design} & \pageref{sec:controller} \\
        & \ref{subsec:closed-form_control} Closed-form control laws & \pageref{subsec:closed-form_control} \\
        & \ref{subsec:MPC:theory} Model predictive control & \pageref{subsec:MPC:theory} \\
        & \ref{subsec:MPC:tutorial} A hands-on tutorial on Koopman MPC \vspace{0.10cm}  & \pageref{subsec:MPC:tutorial} \\
        \multicolumn{2}{p{7.5cm}}{\ref{sec:ML} Koopman meets machine learning} & \pageref{sec:ML}\\
        & \ref{subsec:stat_learn} Statistical learning theory for Koopman  models & \pageref{subsec:stat_learn}\\
        & \ref{subsec:Bayesian} Bayesian perspective on kernel methods & \pageref{subsec:Bayesian} \\
        & \ref{subsec:deep_koopman} Deep Learning-based Koopman models  & \pageref{subsec:deep_koopman} \\
        & \ref{subsec:ml_beyond} Exploiting Koopmanism in machine learning \vspace{0.10cm}  &  \pageref{subsec:ml_beyond}\\
        \multicolumn{2}{p{7.5cm}}{\ref{sec:outlook} Outlook} \vspace{0.10cm}  & \pageref{sec:outlook}\\
        \multicolumn{2}{p{7.5cm}}{References} & \pageref{sec:references}\\
        \bottomrule
    \end{tabular}
    \end{center}
    \vspace{-0.5cm}
\end{table}

\miniskip
\noindent\Cref{sec:KOT}: \textbf{Koopman operator theory}. 
We provide a concise introduction to Koopman operator theory 
emphasizing two topics of particular importance for prediction and stability analysis of complex dynamical systems. 
In \Cref{subsec:eigenfunctions}, we briefly consider the spectrum of the Koopman operator with a focus on the discrete part and its relation to Koopman eigenfunctions. \textit{Koopman eigenfunctions} are pivotal
\begin{itemize}
    \item to uncover inherent structures of nonlinear dynamical systems, e.g., invariant sets or symmetries
    \item for reliable long-term predictions of quantities of interest along the flow of complex dynamical systems
    \item to conduct a stability analysis
\end{itemize}
Then, we deal with \textit{Koopman invariance} in \Cref{subsec:Koopman-invariance}. First, we briefly recap known results on Koopman invariance of function spaces and spectral decompositions. 
We explain the role that projections with respect to a given inner product play in 
producing approximations of the operator over finite-dimensional spaces and  how the recently proposed concept of \textit{invariance proximity} can help in building (approximately) invariant representations.
\miniskip

\noindent\Cref{sec:EDMD}: \textbf{Extended Dynamic Mode Decomposition (EDMD)}.
We recap EDMD as a data-driven approach to compute a surrogate model of the Koopman operator. 
We present the foundations of EDMD for discrete-time dynamical systems, whose usage we demonstrate with a hands-on example for computing a linear predictor in \Cref{subsec:EDMD}. We further discuss how the introduced methodology can be employed to approximate the Koopman generator for continuous-time systems and demonstrate how the previously discussed invariance proximity can be computed in closed-form. Due to the significance of eigenvalues and eigenfunctions, we introduce an EDMD-based formulation for their approximation, which we illustrate in a numerical example in \Cref{subsec:EDMD_spectral}. 
In \Cref{subsec:kEDMD}, we present kernel EDMD as an important variant of EDMD and provide intuitive explanations for kernels as well as their induced function spaces. Finally, 
we provide an overview on existing results on the approximation errors for EDMD in \Cref{subsec:EDMD_error} and kernel EDMD in \Cref{subsec:EDMD_error_full}, emphasizing the role of suitably chosen function spaces for Koopman invariance and, thus, the error analysis.
\miniskip

\noindent\Cref{sec:control}: \textbf{Koopman representations for nonlinear systems with inputs}. 
We extend Koopman operator theory and (kernel) EDMD to dynamical systems with control inputs. 
To this end, we first introduce an infinite-dimensional linear time-invariant (LTI) representation and, then, provide an in-depth treatment of exact LTI Koopman embeddings, 
and point out limitations in \Cref{subsec:LTI}. Exact LTI Koopman models are particularly attractive due to their potential to fully leverage results from linear systems theory for nonlinear control systems.
Then, we present the Koopman control family as a recently proposed framework to unify existing results on linear and bilinear Koopman models and briefly discuss its connection to 
nonlinear representations in \Cref{subsec:control:family}. A product Hilbert space perspective on systems with control inputs is provided in \Cref{subsec:product:Hilbert}.
In \Cref{subsec:EDMD:control}, we present EDMD extensions to control systems illustrating our numerical simulations.
Finally, in \Cref{subsec:observer}, we provide an introduction to Koopman observer designs.%
\miniskip

\noindent\Cref{sec:controller}: \textbf{Controller design}.
We leverage the presented Koopman operator theory, the data-driven surrogate models generated by (kernel) EDMD, and the respective extensions to systems with inputs to systematically construct stabilizing controllers for nonlinear dynamical systems. 
To this end, we illuminate key challenges in data-driven controller design with an emphasis on robustness to model-plant mismatch. Then, we recap recently proposed controller designs, i.e., LQR control using linear Koopman models, robust controller design using bilinear 
representations 
in \Cref{subsec:closed-form_control}, and Koopman model predictive control (MPC) with and without (stabilizing) terminal conditions in \Cref{subsec:MPC:theory}. 
Hereby, we provide a rather general framework for data-driven MPC and show that Koopman operator theory allows to rigorously verify all assumptions on the data-driven surrogate model leveraging the previously presented error analysis.
Moreover, we provide a step-by-step tutorial on Koopman MPC including source code on GitHub, a detailed description of its implementation and illustrating examples to highlight pros and cons of the different Koopman MPC designs in \Cref{subsec:MPC:tutorial}.\looseness=-1
\miniskip
        
\noindent\Cref{sec:ML}: \textbf{Koopman operator meets machine learning}. 
Due to the data-driven nature of methods for approximating Koopman operators and the prevalence of dynamical systems in machine learning, there has been a significant mutual influence between the fields in recent years. We first illustrate the benefits of statistical learning theory to analyze Koopman operator learning approaches without assumptions on the data distribution. To this end, \Cref{subsec:stat_learn} provides a brief introduction to statistical learning theory exemplified for EDMD. As alternative to the frequentist setting of statistical learning theory, we introduce a Bayesian formulation for Koopman operator learning in \Cref{subsec:Bayesian}. Practical neural networks approaches for learning deep Koopman operator models and their relationship to modern \textit{world model} AI architectures are discussed in \Cref{subsec:deep_koopman}. Finally, we highlight the benefits that Koopman operator based models can provide in machine learning algorithms such as diffusion models, neural network pruning, and reinforcement learning in \Cref{subsec:ml_beyond}.\looseness=-1


\section{Koopman operator theory} 
\label{sec:KOT}

We consider the discrete-time dynamical system
\begin{equation}\label{eq:dynamics:DT}
    x^+ = F(x),
\end{equation}
where the successor state~$x^+$ is given by the image of the continuous map $F: \Omega \rightarrow  \Xi$ applied to the current state~$x$. 
While $\Omega, \Xi$ might, in general, be  topological spaces, we choose $\Omega \subseteq \mathbb{R}^n$ and $\Xi := \{ y \in \mathbb{R}^d \mid \exists\,x \in \Omega: F(x) = y \}$ to keep the presentation technically simple and refer to \cite[Appendix~A]{kohne2025error} for regularity conditions on the boundary of the open domain~$\Omega$.

In the 1930s, Bernard Osgood Koopman\footnote{See P.M.~Morse's article~\cite{Morse82:Koopman} for biographical information.} proposed to consider the dynamical system~\eqref{eq:dynamics:DT} through the lens of observables, i.e., measurable functions $\observable: \Xi \rightarrow \mathbb{K}$ contained in some function space~$\mathcal{F}(\Xi)$, where the field~$\mathbb{K}$ denotes either the real or complex numbers, i.e., $\mathbb{K} = \mathbb{C}$ or $\mathbb{K} = \mathbb{R}$, respectively. 
Then, the Koopman operator~$\mathcal{K}$ is defined by 
\begin{equation}\label{eq:Koopman:identity}
    \mathcal{K} \observable = \observable \circ F, 
\end{equation}
see the original works~\cite{Koop31} and~\cite{KoopNeum32}. 
The Koopman identity~\eqref{eq:Koopman:identity} states that, for every state~$x \in \Omega$, the observable~$\observable \in \mathcal{F}(\Xi)$ evaluated at the successor state~$F(x)$ exhibits the very same function value as the propagated observable $\mathcal{K}\observable$ at~$x$.\footnote{Formally, measurability is defined w.r.t.\ some measure~$\mu$. Hence, we only require equality in the Koopman identity~\eqref{eq:Koopman:identity} $\mu$-almost everywhere.}
Hence, the Koopman operator $\mathcal{K}$ maps functions $\observable \in \mathcal{F}(\Xi)$ to functions defined on the set~$\Omega$,  
which also explains the terminology \textit{Kolmogorov backward operator}, which is used for stochastic systems, see Remark~\ref{rem:stochastic-dynamics}.
The Koopman operator is linear, since we have
\begin{equation}\nonumber 
    \mathcal{K}(\alpha \observable + \beta h) = (\alpha \observable + \beta h) \circ F = \alpha (\observable \circ F) + \beta (h \circ F) 
\end{equation}
for all $\alpha, \beta \in \mathbb{K}$ and $\observable,h \in \mathcal{F}(\Xi)$. 
Koopman operator theory allows for an equivalent \textit{linear} description of the nonlinear map~$F$, which is, however, infinite dimensional.
Further, the linear Koopman operator is bounded if the function space~$\mathcal{F}$ is Koopman invariant, i.e., $\mathcal{K}\observable \in \mathcal{F}$ for all $\observable \in \mathcal{F}$.

Analogously, one may begin with a continuous-time dynamical system governed by the ordinary differential equation\looseness=-1
\begin{equation}\label{eq:dynamics:CT}\tag{ODE}
    \dot{x}(t) = f(x(t))
\end{equation}
where $f: \Omega \rightarrow \mathbb{R}^d$ is a locally Lipschitz continuous map. We assume forward invariance of the open and sufficiently regular 
set $\Omega \subseteq \mathbb{R}^d$ to ensure global existence and uniqueness of solutions~$x(t;\hat{x})$ on $[0,\infty)$ for the initial value problem consisting of~\eqref{eq:dynamics:CT} and the initial condition $x(0;\hat{x}) = \hat{x}$ for all $\hat{x} \in \Omega$.
Then, the strongly continuous semigroup $(\mathcal{K}^t)_{t \geq 0}$ of bounded linear operators is defined by $\mathcal{K}^t \observable = \observable \circ x(t;\cdot)$ for each $t \geq 0$.
Further, on the domain $\mathcal{D}(\mathcal{L})$ of the Koopman generator defined by $\{ \observable \in \mathcal{F}(\Omega) \mid \exists\, \lim_{t \searrow 0} \frac 1t \mathcal{K}^t \observable \}$, the generator of the Koopman semigroup is given by 
\begin{equation}\label{eq:generator:identity}
    \mathcal{L} \observable = \langle \nabla \observable, f \rangle \qquad\forall\,x \in \Omega.
\end{equation}
We emphasize that the Koopman generator is, in general, an unbounded operator.

\begin{remark}\label{rem:stochastic-dynamics}\longthmtitle{Stochastic dynamics}
    We may define the Koopman operator also for stochastic dynamics, see, e.g., \cite{mezic2000comparison,vcrnjaric2020koopman,sinha2020robust,wanner2022robust,HertPhil25}. Let us consider the stochastic differential equation
    \begin{equation}\label{eq:SDE}\tag{SDE}
        \mathrm{d}X(t) = f(X(t))\,\mathrm{d}t + \sigma(X(t))\,\mathrm{d}W_t
    \end{equation}
    with drift~$f$, diffusion~$\sigma$, and Brownian motion~$W_t$. 
    Then, the Koopman semigroup $(\mathcal{K}^t)_{\geq 0}$ corresponding to~\eqref{eq:SDE} is defined by\looseness=-1
    \begin{equation}
        \mathcal{K}^t \observable(x) = \mathbb{E}[ \observable(X_t) \mid X_0 = x],
    \end{equation}
    i.e., the expectation value of the stochastic process~$X_t$ conditioned on the initial value $X_0 = x$ at time $t = 0$. 
    The associated Koopman generator~\eqref{eq:generator:identity} can be represented as $\mathcal{L} = f \cdot \nabla + \frac 12 \sigma \sigma^T : \nabla^2$ with $A:B := \sum_{i,j=1}^d a_{ij}b_{ij}$ being the standard Frobenius inner product for matrices.
    Stochastic dynamics are of particular interest, e.g., w.r.t.\ molecular dynamics~\cite{wu2017variational,luzzatto2026data}.
\end{remark}

In the remainder of this section, we 
recap two key concepts when dealing with Koopman operator theory: eigenfunctions and invariance.
Eigenfunctions correspond to dominant patterns of dynamical systems characterizing, among others, the stability behaviour~\cite{mezic2020spectrum}. 
Koopman invariance is another essential property. 
On the one hand, invariance of function spaces is key to enable a spectral analysis and, in addition, to rigorously derive bounds on the approximation error. 
On the other hand, Koopman invariance of finite-dimensional (sub-) spaces is of particular importance since it 
provides the theoretical foundation of finite-dimensional linear representations of the action of the Koopman operator and, as we show later, will inform Koopman-based controller design.

\subsection{Koopman eigenfunctions and spectral analysis}
\label{subsec:eigenfunctions}

While finite-dimensional operators on $n$-dimensional complex or real vector spaces always have $n$ eigenvalues (counting multiplicities) by the Fundamental Theorem of Algebra, linear operators on infinite-dimensional Banach spaces have a more complicated spectral structure \cite{yosida2012functional}.

To gain insight into this structure, we assume that the considered function space $\mathcal{F}$ is a Banach space.
Let $\mathcal K:\mathcal{F}\to\mathcal{F}$
be the Koopman operator on this complex Banach space $\mathcal{F}$   
and let
$\lambda\in\mathbb C$.
Then, the complex plane decomposes as the disjoint union\looseness=-1
\begin{equation*}
\mathbb C
=
\rho(\mathcal K)\,\dot\cup\,\sigma_p(\mathcal K)\,\dot\cup\,
\sigma_c(\mathcal K)\,\dot\cup\,\sigma_r(\mathcal K)
\end{equation*}
with the following components: The \textit{resolvent set}~$\rho(\mathcal K)$ consists of all $\lambda\in\mathbb C$ such that
$\mathcal K-\lambda I$ is bijective and its inverse,
\begin{equation*}
R(\lambda)
=
(\mathcal K-\lambda I)^{-1}:\mathcal{F}\to\mathcal{F}
\end{equation*}
exists and is bounded. 
The \textit{point spectrum}~$\sigma_p(\mathcal K)$ consists of all $\lambda\in\mathbb C$ such that
$\mathcal K-\lambda I$ is not injective.
Equivalently, $\lambda\in\sigma_p(\mathcal K)$ if there exists a nonzero
$f\in\mathcal{F}$ satisfying $\mathcal Kf=\lambda f$. 
The \textit{continuous spectrum}~$\sigma_c(\mathcal K)$ consists of all $\lambda\in\mathbb C$ such that
$\mathcal K-\lambda I$ is injective,
$\operatorname{Ran}(\mathcal K-\lambda I)$ is dense in $\mathcal{F}$,
but $(\mathcal K-\lambda I)^{-1}$ does not exist as a bounded operator.
And, finally, the \textit{residual spectrum}~$\sigma_r(\mathcal K)$ consists of all $\lambda\in\mathbb C$ such that
$\mathcal K-\lambda I$ is injective, but
$\operatorname{Ran}(\mathcal K-\lambda I)$ is not dense in $\mathcal{F}$.
\looseness=-1

In practice, we are often particularly interested in the point spectrum of the Koopman operator since it contains eigenvalues $\lambda\in\mathbb C$.
Clearly, $(\mathcal K-\lambda I)^{-1}$, or equivalently $(\mathcal K^t-e^{\lambda t}I)^{-1}$, is unbounded for these eigenvalues, which illustrates the similarity to eigenvalues of matrices.
Therefore, we can analogously assign a `direction' to the eigenvalue, which corresponds to the eigenfunctions. 

\begin{definition}\label{def:Koopman:eigenfunction}\longthmtitle{Koopman eigenfunction; point spectrum}
    An eigenfunction $\eigfun\in\mathcal{F}$ is an observable that, given an eigenvalue $\lambda\in\mathcal{C}$, satisfies 
    \begin{equation}\label{eq:eigfun_discrete}
        \mathcal K\eigfun=\lambda\eigfun
    \end{equation}
    in the discrete-time case and
    \begin{equation}
        \mathcal K^t\eigfun=e^{\lambda t}\eigfun \qquad\forall\,t \in [0,\infty)
    \end{equation}
    in the continuous-time case. 
\end{definition}

Note that we retain the notation $\lambda$ for eigenvalues rather than $e^{\lambda t}$ in the continuous-time case.

In discrete time, we have the following result on the algebraic structure of eigenfunctions under products.
\begin{proposition}\label{prop:efunc-semigroup}
    Let $\mathcal K$ be the Koopman operator associated with the dynamics~\eqref{eq:dynamics:DT}.
    Assume $\mathcal F$ is a subset of all $\mathbb C$-valued functions 
    such that $\mathcal F$ is a commutative algebra, i.e.,
    \begin{enumerate}
        \item forms a vector space closed under pointwise products of functions and
        \item contains the constant function equal to one. 
    \end{enumerate}
    Then the set $\mathcal E$ of eigenfunctions of $\mathcal K$ in $\mathcal F$
    is an Abelian monoid under pointwise multiplication of functions.
    In particular, if $\eigfun_1,\eigfun_2\in\mathcal{F}$ 
    are eigenfunctions of
    $\mathcal K$ with eigenvalues $\beta_1$ and $\beta_2$, then $\eigfun_1\eigfun_2$
    is an eigenfunction of $\mathcal K$ with eigenvalue
    $\beta_1\beta_2$.
\end{proposition}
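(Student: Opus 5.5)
The argument rests on one observation: composition with $F$ is multiplicative, i.e., $\mathcal K$ is an algebra homomorphism on $\mathcal F$. For any $g,h\in\mathcal F$ and every $x\in\Omega$,
\begin{equation*}
(\mathcal K(gh))(x) = (gh)(F(x)) = g(F(x))\,h(F(x)) = (\mathcal Kg)(x)\,(\mathcal Kh)(x).
\end{equation*}
So $\mathcal K(gh)=(\mathcal Kg)(\mathcal Kh)$. The product $gh$ lies in $\mathcal F$ by assumption 1), so $\mathcal K(gh)$ is defined. If equality in the Koopman identity~\eqref{eq:Koopman:identity} is only required $\mu$-almost everywhere, the same computation holds outside the union of the two exceptional null sets, which is again a null set.

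\textbf{Closure under products.} Let $\eigfun_1,\eigfun_2\in\mathcal E$ with $\mathcal K\eigfun_i=\beta_i\eigfun_i$. By the identity above,
\begin{equation*}
\mathcal K(\eigfun_1\eigfun_2)=(\beta_1\eigfun_1)(\beta_2\eigfun_2)=\beta_1\beta_2\,\eigfun_1\eigfun_2 .
\end{equation*}
This is exactly the ``in particular'' statement.

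\textbf{Identity element.} The constant function $\mathbf 1\in\mathcal F$ by assumption 2). It satisfies $\mathcal K\mathbf 1=\mathbf 1\circ F=\mathbf 1$, so it is an eigenfunction with eigenvalue $1$, and it acts as the identity for pointwise multiplication.

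\textbf{Associativity and commutativity.} Both are inherited from pointwise multiplication of $\mathbb C$-valued functions, hence from $\mathbb C$. Together with the two previous steps, this shows that $\mathcal E$ is an Abelian monoid.

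\textbf{Main obstacle: the zero function.} Eigenfunctions are normally required to be nonzero, as in the definition of $\sigma_p$. The product of two nonzero eigenfunctions may vanish identically if their supports are disjoint. I would handle this by reading $\mathcal E$ as the set of all $\eigfun\in\mathcal F$ with $\mathcal K\eigfun=\lambda\eigfun$ for some $\lambda$. Definition~\ref{def:Koopman:eigenfunction} imposes no nonvanishing condition, and the zero function satisfies this trivially. If nonzero eigenfunctions are insisted upon, the set is closed under products only when the product is nonzero, for instance when the eigenfunctions are nowhere vanishing or $\Omega$ is connected and the functions are analytic. I would state this caveat explicitly in a remark.
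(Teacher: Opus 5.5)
Your proof is correct and is essentially the paper's argument: the constant function $1$ serves as the identity eigenfunction with eigenvalue $1$, and multiplicativity of composition with $F$ yields $\mathcal K(\eigfun_1\eigfun_2)=\beta_1\beta_2\,\eigfun_1\eigfun_2$. Your caveat about the zero function is a genuine point that the paper's proof passes over silently. The indicator functions of two disjoint invariant sets, such as the two basins of attraction in the Duffing example, are nonzero eigenfunctions with eigenvalue $1$ whose product vanishes, so the monoid claim holds as stated only when $\mathcal E$ is read as including $0$, as you propose.
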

\begin{proof}
    First note that the function equal to $1$ 
    everywhere is always an eigenfunction of the Koopman operator with eigenvalue~$1$. This is also the identity element of the monoid.

    Let the function~$\observable$ be defined by $\observable(x)=\eigfun_1(x)\eigfun_2(x)$ for the eigenfunctions 
    \begin{equation}\nonumber
        \mathcal K\eigfun_1
        =\beta_1\eigfun_1
        \qquad\text{ and }\qquad \mathcal K\eigfun
        _2=\beta_2\eigfun_2.
    \end{equation}
    Then, we have
    \begin{align}
        \mathcal K\observable( x) & \stackrel{\eqref{eq:Koopman:identity}}{=} (\observable \circ F) (x) = \eigfun_1(F(x))\eigfun_2(F(x)) \nonumber \\
    &=\mathcal K\eigfun_1(x)\,
      \mathcal K\eigfun_2(x) \nonumber = \beta_1\beta_2
      \eigfun_1(x)\eigfun_2(x) \\
    & = \beta_1\beta_2\,\observable(x). \nonumber
    \end{align}
    Hence, the set of eigenfunctions is closed under pointwise products.
\end{proof}
An analogous result holds for continuous-time systems.
\miniskip

\noindent
\textbf{Koopman Mode Decomposition (KMD)}. 
Koopman Mode Decomposition, first formulated in~\cite{mezi05} for on-attractor measure preserving dynamics, holds for a large class of dynamical systems, including systems that are not necessarily measure preserving. 
Hence, 
Koopman eigenvalues may lie in the complex plane and not only on the unit circle.

Let $\mathcal K$ be the Koopman operator acting on a suitable space of observables such that the observable~$\observable$
admits a decomposition into a part associated with the point spectrum and a part associated with the continuous spectrum, see, e.g., \cite{mezic2020spectrum} for the construction of such spaces for dissipative systems. 
Then, the evolution of $\observable$ 
under $\mathcal K^k$ may be written formally as\looseness=-1
\begin{equation}\label{eq:KMD}
    \begin{split}
        \mathcal K^k \observable(x)
        =
        P_0 \observable(x)
        +
        \sum\nolimits_i \lambda_i^k
        \langle \observable,\eigfun_i^*\rangle\,\eigfun_i
        (x) \\
        +
        \int_{\sigma_c(\mathcal K)}
        \lambda^k\,\observable(\lambda)\,\eigfun
        (x,\lambda)\,\mathrm{d}\mu_{\observable}(\lambda),
    \end{split}
\end{equation}%
where $P_0$ denotes the projection onto the eigenspace associated with the eigenvalue $1$, the functions $\eigfun_i$ are Koopman eigenfunctions satisfying \eqref{eq:eigfun_discrete} and $\eigfun_i^*$ 
denotes the corresponding dual eigenfunction or spectral projection functional. 
The remaining integral 
term represents the contribution from the continuous spectrum of $\mathcal K$, with spectral parameter $\lambda\in\sigma_c(\mathcal K)\subset\mathbb C$ and spectral measure $\mathrm{d}\mu_{\observable}(\lambda)$ associated with the observable~$\observable$. 
The function $\varphi(x,\lambda)$, referred to as eigenmeasure, generalizes the concept of eigenfunctions, see~\cite{mezic2020spectrum} for details. 
Note that the eigenvalues and eigenfunctions are properties of the operator, 
while the spectral projection terms $\langle \observable,\eigfun_i^*\rangle$, the so-called Koopman modes, depend on the observable.
The point spectrum reflects the almost periodic component of the system dynamics in KMD, 
i.e., it describes the behavior towards equilibria. In contrast, the continuous spectrum captures chaotic behavior that is not related to fixed points of the dynamics and is empty for many types of dynamical systems, see~\cite{mezi05}. The 
remainder of this tutorial paper will focus mainly on the point spectrum as common in the literature; see also~\cite{Mezi16} for an in-depth discussion on spectral properties of the Koopman operator.

While the exact Koopman Mode Decomposition~\eqref{eq:KMD} provides a linear representation of the dynamics, it contains a potentially infinite dimensional sum. Therefore, we are often interested in the Koopman operator restricted to a finite-dimensional subspace~$\mathbb{V}$ of the function space~$\mathcal{F}$, e.g., the span of finitely many 
eigenfunctions $\mathbb{V}=\operatorname{span}(\{ \eigfun_i \mid i \in [n] \})$.
Concatenating these eigenfunctions into a vector~$\eigfunvec$, i.e., $[\eigfunvec]_i=\eigfun_i$, we can then define the finite-dimensional restricted Koopman operator $\mathcal{K}|_{\mathbb{V}}$ by minimizing the error in the identity~\eqref{eq:Koopman:identity}, i.e., 
\begin{align}\label{eq:LS_EV}
    \min\nolimits_{\mathcal{K}|_{\mathbb{V}}} \|\eigfunvec\circ F - \mathcal{K}|_{\mathbb{V}} \eigfunvec\|,
\end{align}
where the action of the operator on a vector-valued map is defined in a componentwise 
manner.
Using Definition~\ref{def:Koopman:eigenfunction} of Koopman eigenfunctions, it is straightforward to see that the solution to this optimization problem is exact and can be represented through a finite-dimensional matrix $K=\mathrm{diag}(\lambda_1,\ldots,\lambda_n)\in\mathbb{C}^{n\times n}$ with eigenvalues $\lambda_i$, $i\in [n]$. These properties can be generalized to arbitrary finite-dimensional Koopman-invariant subspaces. 
Indeed, every observable function~$\observable$ contained in the subspace~$\mathbb{V}$ can be represented as a linear combination of basis functions in~$\mathbb{V}$
enabling fast evaluation of the action of the Koopman operator. 
For application to arbitrary observables~$\observable$, $\observable\in\mathcal{F}$, this matrix defines the action of the Koopman operator on the observable~$\observable$ projected on the subspace~$\mathbb{V}$, i.e., 
$\mathcal{K} P_{\mathbb{V}}$, 
via 
\begin{equation}\label{eq:EDMD_operator_eigenfunction}
    \mathcal{K} P_{\mathbb{V}} = \sum_{j=1}^n \lambda_j \eigfun_j \left<\cdot,\eigfun_j^*\right> = \underbrace{\eigfunvec^\top}_{\text{lifting}}\hspace*{0.5mm}\underbrace{K^\top_{\phantom{[n]}} }_{\text{propagation}} \underbrace{T^{-1} \langle  \eigfunvec, \cdot \rangle}_{\text{reconstruction}},
\end{equation}
where $[T]_{ij}= \langle \eigfun_i, \eigfun_j \rangle$ 
and $\langle \eigfunvec, \cdot \rangle$ is the vector of componentwise  
scalar products.\footnote{We assume complex inner products to be linear in their first argument and sesquilinear in their second argument throughout this paper.
} 
In this expression, the reconstruction vector $v= T^{-1} \langle \eigfunvec, \observable \rangle$ effectively describes the observable~$\observable$ in the coordinate system induced by the eigenfunctions~$\eigfun_i$, i.e., it is the vector of Koopman modes. 
Therefore, \eqref{eq:EDMD_operator_eigenfunction} simplifies to $\mathcal{K}P_{\mathbb{V}} \eigfunvec=K\eigfunvec$ when applied to the eigenfunction vector $\eigfunvec$. Crucially, the usage of the span of eigenfunctions~$\mathbb{V}$ 
ensures that the linearity of~\eqref{eq:KMD} is preserved, resulting in $(\mathcal{K} P_{\mathbb{V}})^k = \mathcal{K}^k P_{\mathbb{V}}$ and, thus,
\begin{align}\label{eq:multistep_pred}
    \!(\mathcal{K} P_{\mathbb{V}})^k  \!=\! \sum_{j=1}^n \lambda_j^k \eigfun_j \left<\cdot,\eigfun_j^*\right> \!=\! (K^k\eigfunvec)^{\hspace*{-0.5mm}\top} T^{-1} \langle \eigfunvec, \cdot \rangle\!
\end{align}%
Similar definitions are possible for the Koopman generator.
\miniskip

\noindent \textbf{Stability analysis using Koopman eigenfunctions}. 
A useful application of eigenfunctions in the context of control theory lies in the global analysis of stability. The core of this approach lies in the observation that an eigenvalue of the Koopman generator of a stable dynamical system has a negative real part if and only if its corresponding eigenfunction equals zero at the attractor of the dynamical system~\cite{mauroy2016global}. If the eigenfunction exhibits a different value at these points, the corresponding non-constant eigenfunction must be zero. 
Since an eigenfunction of the Koopman generator satisfies
\begin{align*}
    \mathcal{L}\eigfun = (\nabla \eigfun)^\top f = \lambda \eigfun,
\end{align*}
the negative real part of the eigenvalue immediately implies that the corresponding system modes vanish asymptotically, see, e.g., \cite{mauroy2020koopman}. Hence, the intersection of the zero level sets of these eigenfunctions describes the globally stable attractor. Importantly, this approach not only allows the analysis of equilibria, but straightforwardly extends to limit cycles as well. Moreover, it intuitively induces potentially non-smooth Lyapunov functions
\begin{align*}
    V(x)=\left( \sum\nolimits_{i=1}^n c_i|\eigfun_i(x)|^p \right)^{1/p}
\end{align*}
with $c \in \mathbb{R}_{>0}$ and $p\geq 1$, see~\cite{mauroy2013spectral} for details.

\subsection{Koopman invariance}
\label{subsec:Koopman-invariance}

A subspace $\mathbb{V} \subset \mathcal{F}$ is invariant under the Koopman operator
if the following inclusion holds:
\begin{align*}
  \mathcal{K} \observable \in \mathbb{V} \qquad\forall\,\observable \in \mathbb{V}.
\end{align*}
As an example, the subspace generated by a collection of eigenfunctions is automatically invariant. 
Although finite-dimensional Koopman-invariant subspaces capturing complete
information about the system are often unavailable analytically and can be difficult to compute with high accuracy, 
particularly when the original system is not known analytically, their study is key because the concept of subspace invariance determines the general form of finite-dimensional models and provides a bedrock for  approximations on non-invariant subspaces.\looseness=-1

Koopman invariance and, thus, boundedness of the Koopman operator can be rigorously shown for infinite-dimensional spaces, e.g., the space of bounded continuous functions provided that the dynamics~$F$ in~\eqref{eq:dynamics:DT} is continuous or, using Sobolev regularity, suitably chosen reproducing kernel Hilbert spaces. Proofs exist both for deterministic and stochastic dynamics, see~\cite{kohne2025error} and~\cite{HertPhil25} for details, respectively. 
However, finite-dimensional subspaces are, in general, not invariant. 
Hence, the action of the Koopman operator on them cannot be equivalently represented by a matrix, which necessitates 
approximations of the Koopman operator on non-invariant subspaces accompanied by an in-depth analysis, as we explain next. 

Assume that the space~$\mathcal{F}$ is equipped with a complex inner
product $\innerprod{\cdot}{\cdot}: \mathcal{F} \times \mathcal{F} \to \mathbb{K}$.
For instance, one can think of $\mathcal{F} = L^2_\mu(\Omega)$  
equipped 
with
$\innerprod{\observable_1}{\observable_2}= \int_\Omega \observable_1(x) \overline{\observable_2(x)}\,\mathrm{d}\mu (x)$,    
where $\bar{\cdot}$ denotes the complex conjugate.
The inner product 
induces a norm $\| \cdot \|: \mathcal{F} \to \mathbb{R}_{\geq 0}$ by $\|\observable\| = \sqrt{\innerprod{\observable}{\observable}}$. To approximate the action
of the Koopman operator on a finite-dimensional space
$\mathbb{V} \subset \mathcal{F}$ which is not Koopman invariant, 
consider the orthogonal projection operator $\mathcal{P}_{\mathbb{V}}: \mathcal{F} \to \mathbb{V}$, mapping a function in $\mathcal{F}$ to the closest function in~$\mathbb{V}$. 
We then approximate the Koopman operator~$\mathcal{K}$
by
\begin{align*}
    \mathcal{K}_{\apprx} := \mathcal{P}_{\mathbb{V}} \mathcal{K}: \mathcal{F} \to \mathbb{V} \subset \mathcal{F} .
\end{align*}
Note that, even though the subspace $\mathbb{V}$ might not be invariant under $\mathcal{K}$, it is invariant under $\mathcal{K}_{\apprx}$. This means that, with a similar exposition as the one presented in Subsection~\ref{subsec:eigenfunctions}, one can describe the action of $\mathcal{K}_{\apprx}$ on $\mathbb{V}$ with a matrix representation. 
Given a basis $\basisfunvec = (\basisfun_1,\dots,\basisfun_{\nD})$, $\nD=\dim{\mathbb{V}}$, for $\mathbb{V}$, we 
obtain
\begin{align}\label{eq:restriction-on-noninvariant-basis}
	\mathcal{K} \basisfunvec \approx \mathcal{K}_{\apprx} \basisfunvec
	=  K_{\apprx} \basisfunvec, 
\end{align}
where $K_{\apprx} \in \mathbb{K}^{\nD \times \nD}$. Moreover,
similarly to~\eqref{eq:EDMD_operator_eigenfunction}, for any
function $\observable \in \mathbb{V}$ with representation $\observable = v^\top \basisfunvec$, one can
approximate the action of the Koopman operator on $\observable$ by
\begin{align}\label{eq:restriction-on-noninvariant-function}
	\mathcal{K} 
    \observable \approx \mathcal{K}_{\apprx} 
    \, \observable =
	v^\top K_{\apprx} \basisfunvec .
\end{align}
\begin{figure}[htb]
    \centering
    \begin{center}
\scalebox{0.75}
{\begin{tikzpicture}[]

    \fill[blue, draw=black, opacity=0.3] plot coordinates{(1,2) (6,2) (6.8,3.8) (1.8,3.8) (1,2)}{};
    \node[blue] at (2.,3.6) {$\mathbb{V} $};

    \draw[thin, fill=black] (3.95,2.8) circle (1.5pt) node[label=left:$\basisfunvec$]{};
    \draw[thin, fill=black] (5.5,2.5) circle (1.5pt); 
    \node at (6.1,2.5){$\basisfunvec\circ F$};
    \draw[thick,dashed, -{Latex}] plot coordinates{(3.95,2.8) (5.47,2.51)};

    \draw[thick,blue,dashed,-{Latex}] plot coordinates{(5.6,3.9) (5.6,3.34)};
    \draw[thin,blue,fill=blue] (5.6,3.3) circle (1.5pt);
    \node at (6.5,3.3) {\textcolor{blue}{$ \mathcal{K}_{\mathrm{approx}}\basisfunvec$}};

    \fill[red, draw=black, opacity=0.25] plot coordinates{(1.2,1) (6,3) (6.8,4.8) (2.0,2.8) (1.2,1)}{};
    \draw[thin] plot coordinates{(3.6,2) (4.4,3.8)};

    \draw[thin,red,fill=red] (5.6,3.9) circle (1.5pt);
    \node (a) at (6.0,3.9) {\textcolor{red}{$\mathcal{K}\basisfunvec$}};
    \draw[thick,red,dashed,-{Latex}] plot coordinates{(3.95,2.8) (5.58,3.88)};

\end{tikzpicture}}
\end{center}    
    \vspace{-0.5cm}
    \caption{Illustration of the lack of Koopman invariance of the subspace~$\mathbb{V}$.
    }
    \label{fig:invariance_illustration}
\end{figure}
The quality of the approximation
in~\eqref{eq:restriction-on-noninvariant-basis} 
directly depends on the invariance properties of the subspace~$\mathbb{V}$. If $\mathbb{V}$ is
invariant under the Koopman operator,
equations~\eqref{eq:restriction-on-noninvariant-basis}-\eqref{eq:restriction-on-noninvariant-function}
reduce to~\eqref{eq:EDMD_operator_eigenfunction} and there is no approximation error. Otherwise, the projection
in~\eqref{eq:restriction-on-noninvariant-basis} 
leads to information loss and an approximation error as illustrated in \Cref{fig:invariance_illustration}. 
Therefore, it is critical to quantify how close to invariant a 
given subspace is, and the impact that this has in the approximation accuracy. The notion that captures this is called \emph{invariance proximity}~\cite{HaseCort26:control:family,MH-JC:24-arxiv}. 
Invariance proximity is formally given by\looseness=-1
\begin{align}\label{def:invariance-proximity}
	\mathcal{I}_{\mathcal{K}} (\mathbb{V})
	:=   \sup_{\observable \in \mathbb{V}, \| \mathcal{K} \observable \| \neq 0} \frac{\| \mathcal{K} \observable -
		\mathcal{K}_{\apprx} \observable \|}{ \| \mathcal{K} \observable \|}.
\end{align}
This measures the worst-case relative error of the
approximation~\eqref{eq:restriction-on-noninvariant-basis} of the
operator's action.
It only depends on the Koopman operator~$\mathcal{K}$ and the subspace $\mathbb{V}$ (since
$\mathcal{K}_{\apprx}$ only depends on $\mathcal{K}$ and $\mathbb{V}$), and
\emph{does not} depend on the choice of basis for~$\mathbb{V}$.  Even though computing invariance proximity might seem like a daunting task based on its definition, it turns out that it has a simple geometric interpretation that 
makes its computation straightforward. To describe this formally, we need to introduce the concept of principal angles. Given the subspace $\mathbb{V}$, consider its image  under the Koopman operator, $\mathcal{K} \mathbb{V}$. The principal angles $0 \leq \theta_1 \leq \cdots \leq \theta_{\dim(\mathcal{K} \mathbb{V})} \leq \frac{\pi}{2}$ between the subspaces $U := \mathbb{V}$ and $V := \mathcal{K} \mathbb{V}$ are 
defined iteratively by:
	\begin{align*}
		\cos(\theta_i) :=
		&\max_{u \in U} \max_{v \in V} |\innerprod{u}{v} | =: \innerprod{u_i}{v_i}
		\\
		&\text{subject to:} \;
		\innerprod{u}{u_k} = 0, \innerprod{v}{v_k} =0, \; \forall\, k \in [{i-1}]
		\nonumber
		\\
		&\|u\| = 1, \|v\| = 1.
	\end{align*}
Principal angles~\cite{CJ:1875,HH:92} measure the spatial relationship between the two linear subspaces, and  generalize the concept of angle between two vectors. One can show~\cite[Theorem~10]{MH-JC:24-arxiv} that 
\begin{align*}
	\mathcal{I}_{\mathcal{K}} (\mathbb{V})
	=   \sin (\theta_{\dim(\mathcal{K} \mathbb{V})}) .
\end{align*}
This expressions makes clear how the measure $\mathcal{I}_{\mathcal{K}}$ captures the invariance properties of the subspace $\mathbb{V}$. The more aligned the vector spaces $\mathbb{V}$ and $\mathcal{K} \mathbb{V}$ are, i.e., the closer $\mathbb{V}$ is to being invariant, the smaller the maximum principal angle is, and therefore the smaller the invariance proximity is.
\begin{example}
    To illustrate the relationship between  Koopman invariance and prediction accuracy, and the role that invariance proximity plays in quantifying them, we introduce a simple example problem adopted from~\cite{MH-JC:24-arxiv}. Consider the system with state $x = [x_1,x_2]^T$ on $\Omega = [-1,1]^2$,
    \begin{align*}
    	x_1^+ = 0.9 \, x_1,  &&
    	x_2^+ = 0.4 \, \big(\sin(x_2) + x_1^2 \big) + 0.01 \, x_2^2.
    \end{align*}
    Consider the subspaces
    \begin{align*}
        \mathbb{V}_1 &= \operatorname{span}\{1, x_1, x_1^2\},
        \\
        \mathbb{V}_2 &= \operatorname{span}\{1, x_1, x_2, x_1^2\} ,
        \\
        \mathbb{V}_3 &= \operatorname{span} \{1, x_1, x_2, x_1^2, x_2^2\}.
    \end{align*}
    Note that $\mathbb{V}_1 \subset \mathbb{V}_2 \subset \mathbb{V}_3$. 
    Clearly, $\mathbb{V}_1$ is Koopman invariant, since its functions are monomials of the first state variable~$x_1$ and the evolution of $x_1$ abides by a linear dynamics. 
    This is consistent with the fact that $\mathcal{I}_{\mathcal{K}} (\mathbb{V}_1)=0$. The invariance proximity of $\mathbb{V}_2$ is rather small, $\mathcal{I}_{\mathcal{K}} (\mathbb{V}_2)=0.048$, indicating that the worst-case relative function prediction error is $4.8 \%$. On the
    other hand, $\mathcal{I}_{\mathcal{K}} (\mathbb{V}_3)=0.823$, indicating that the worst-case relative function prediction error for $\mathbb{V}_3$ is $82.3 \%$, rendering this model unreliable. 
    This also illustrates the fact that a larger subspace does not lead to a better model if this comes at the cost of negatively impacting the invariance properties.\looseness=-1 \hfill$\blacksquare$
\end{example}

\begin{figure}[htb]
    \centering
    \begin{center}
\scalebox{0.75}
{\begin{tikzpicture}[]
    \draw[smooth, tension=0.5] plot coordinates{(1,2) (2,2.4) (4,1.78) (6.3,2.8)}{};
    \draw[smooth, tension=0.5] plot coordinates{(6.3,3.8) (4.5,3) (2,3.5) (1.1,3)}{};
    \draw[smooth, tension=1] plot coordinates{(1,2) (1.1,2.5) (1.1,3)}{};
    \draw[smooth, tension=1] plot coordinates{(6.3,2.8) (6.4,3.4) (6.3,3.8)}{};
    \node at (0.55,2.6) {$\basisfunvec(\mathbb{R}^d) $};
    \draw[line width=.08em] (0.5,0.3)--(1,1.3)--(7,1.3)--(6.5,0.3) -- cycle;

    \draw[smooth, tension=0.5] plot coordinates{(2,0.5) (2.5,0.8) (3.5,0.5) (4,0.7)} node[above right=-.05cm] {$F(x)$};
    \draw[thin, fill=black] (2,0.5) circle (1.5pt) node[label=left:$x$]{};
    \draw[thin, fill=black] (4,0.7) circle (1.5pt);

    \draw[smooth, tension=0.5] plot coordinates{(2,2.8) (2.5,2.9) (3.5,2.3) (4,2.5)};
    \node at (4,2.75) {$\basisfunvec(F(x))$};

    \draw[thin, fill=black] (2,2.8) circle (1.5pt); 
    \node at (2,3.15) {$\basisfunvec(x)$};
    \draw[thin, fill=black] (4,2.5) circle (1.5pt);
    \draw [dotted] (2,2.8) -- (2,0.5);
    \draw [dotted] (4,0.7) -- (4,2.5);

    \draw[thin,red,fill=red] (4.8,1.5) circle (1.5pt);
    \node (a) at (6.0,1.5) {\textcolor{red}{$K_{\mathrm{approx}}\basisfunvec(x)$}};

    \node (a) at (7,0.5) {$\mathbb{R}^{d}$};

    \draw[red,dashed,-{Latex}] plot coordinates{(4.8,1.5) (4.8,2.36)};
    \draw[thin,red,fill=red] (4.8,2.4) circle (1.5pt);
    \node at (5.9,2.0) {\textcolor{red}{\sffamily\small reprojection}};
    \node at (7.1,2.6) {\textcolor{red}{$\arg\min_{z} \|z-K_{\mathrm{approx}}\basisfunvec(x)\|$}};

\end{tikzpicture}}
\end{center}    
    \vspace{-0.5cm}
    \caption{Illustration of the state space reprojections commonly used to compensate for a lack of invariance. Illustration adapted from \cite{van23:reprojection}.
    }
    \label{fig:reprojection_illustration}
\end{figure}
To conclude this section, we briefly mention a consequence of the subspace~$\mathbb{V}$ not being invariant under the Koopman operator for multi-step predictions, where the approximated predictor~$\mathcal{K}_{\apprx}$ is applied more than once. 
Then, nonlinear relationships between the observables contained in~$\mathbb{V}$, e.g., $g_1(x)^2 = g_2(x)$ for $g_1(x) = x_1$ and $g_2(x) = x_1^2$, are not preserved rendering multi-step predictions unreliable due to a missing consistency between the successor state and inherent properties of the subspace~$\mathbb{V}$.
On the one hand, this pronounces the importance of finding (approximately) invariant subspaces using the concept of invariance proximity~\eqref{def:invariance-proximity}. 
On the other hand, reprojection steps may be employed for injective functions $\basisfunvec$ satisfying any of the conditions of \cite[Proposition~4.22]{lee2012} to \textit{restore} consistency between the predicted state and the (nonlinear) manifold $\{ y \in \mathbb{R}^{\nD} \mid \exists\,x \in \mathbb{R}^d: \basisfunvec(x) = y \} \subsetneq \mathbb{R}^n$, $\nD > d$, 
by projecting back to that manifold as illustrated in \Cref{fig:reprojection_illustration} and originally proposed in~\cite{mauroy2019koopman} and further elaborated in~\cite{van23:reprojection,van2025maximum}, where the latter also contains an event-triggering condition.

\mysec{Extended dynamic mode decomposition}{\secivpages}
\label{sec:EDMD}

This section introduces 
extended dynamic mode decomposition~(EDMD) as a prototypical data-driven method to approximate the Koopman operator and generator, see also~\cite{Colb24:multiverse} for a comprehensive overview of 
sophisticated EDMD variants avoiding spurious eigenvalues and enhancing robustness. 

In \Cref{subsec:EDMD}, we concisely recap the basic extended dynamic mode decomposition algorithm, demonstrate its applicability by a simulation example showcasing its dependence on the chosen ansatz space~$\mathbb{V}$. Then, in \Cref{subsec:EDMD_spectral}, we put our focus on the approximation of eigenfunctions as a key analysis tool for nonlinear dynamical systems as shown in \Cref{subsec:eigenfunctions}, before the approximation error is briefly discussed in \Cref{subsec:EDMD_error}.
In \Cref{subsec:kEDMD}, we present an extension to kernel EDMD (kEDMD). 
Kernel EDMD is attractive due to its close link to machine learning and its potential to leverage recent advances in kernel-based methods.  
Furthermore, kEDMD allows for finite-data bounds on the full approximation error as sketched in \Cref{subsec:EDMD_error_full}.

\mysubsec{EDMD algorithm}{\secivapages}
\label{subsec:EDMD}

In practice, we usually do not know the eigenfunctions of the Koopman operator, such that we resort to using an arbitrary set of basis functions $\{\basisfun_i \mid i \in [n]\}$, usually referred to as dictionary, for estimating a matrix representation of the Koopman operator resembling~
\eqref{eq:LS_EV} or, to be more precise, the compression $\mathcal{P}_{\mathbb{V}} \mathcal{K}|_{\mathbb{V}}$ for $\mathbb{V} = \operatorname{span}( \basisfunvec)$. 
To avoid the computation of the function norm, we approximate it using a data set $\{(x^{(i)},y^{(i)}=F(x^{(i)})) \mid i\in[N]\}$. 
Concatenating the basis 
functions evaluated at data points into matrices $[\basismat_{X}]_{i,j}=\basisfun_i(x^{(j)})$ and $[\basismat_{Y}]_{i,j}=\basisfun_i(y^{(j)})$, this allows us to approximate \eqref{eq:LS_EV} via the least squares problem
\begin{align}\label{eq:LS_dictionary}
    \min\nolimits_{K} \|\basismat_{Y}-K \basismat_{X}\|_{\mathrm{Fr}}^2.
\end{align}
The closed-form solution is the EDMD estimate $K_{\basismat}= \basismat_{Y}\basismat_{X}^\top(\basismat_{X} \basismat_{X}^\top)^{\dagger} $ assuming full column rank of the data matrix~$\basismat_X$, which is typically satisfied if the number~$N$ of samples is larger than the dictionary size~$n$. Note that the pseudo-inverse $(\basismat_{X} \basismat_{X}^\top)^{\dagger}$ can be efficiently determined via the singular value decomposition of $\basismat_{X}$, see~\cite{TuRowl14}, which may also be computed using reduced-rank approximations~\cite{rowley17:model}. 
Based on the estimated matrix~$K_{\basismat}$, we define a data-driven estimate of the Koopman operator $\mathcal{K}$ as
\begin{align}\label{eq:EDMD_operator}
     \hat{\mathcal{K}}= \basisfunvec^\top K_{\basismat}^\top T_\basismat^{-1} \langle \basisfunvec, \cdot \rangle
\end{align}
analogously to \eqref{eq:EDMD_operator_eigenfunction},  where $[T_\basismat]_{ij}= \langle \basisfun_i, \basisfun_j \rangle
$ for notational convenience.
Note that there exists a representation using dual basis functions $\basisfunvec^*=\overline{(T_{\basismat}^{-1})}\bar{\basisfunvec}$ similar to \eqref{eq:EDMD_operator_eigenfunction}.
This approach for obtaining an EDMD-based linear predictor for a given observable function is outlined in lines 1-6 of \Cref{alg:EDMD}.\looseness-1

In contrast to the matrix representation~$K$ computed in~\eqref{eq:LS_EV} using a dictionary of eigenfunctions, which can be seen as the exact solution to $\Phi_Y=K\Phi_X$ with $\Phi_Y,\Phi_X$ defined similar as $\basismat_Y,\basismat_X$, $\basismat_Y=K_{\basismat}\basismat_X$ does not have an exact solution in general as arbitrary functions $\basisfun_i$, $i \in [n]$, do not generate a Koopman-invariant subspace~$\mathbb{V}$, c.f., \Cref{subsec:Koopman-invariance}.  
Thus, $K_\basismat$ is generally a projection, such that $\hat{\mathcal{K}}$ approximates 
the action of the Koopman operator on a observable~$\observable$ projected on the subspace~$\mathbb{V}$
and projected back onto $\mathbb{V},$ i.e., the compression $ \mathcal{K}_{\apprx}  = \mathcal{P}_{\mathbb{V}} \mathcal{K}|_{\mathbb{V}}$. Here, the projection is taken with respect to the norm associated to the the empirical inner product
\begin{align}\label{eq:empirical}
    \innerprod{\observable_1}{\observable_2} = \frac{1}{N} \sum\nolimits_{i=1}^{N} \observable_1(x_i) \overline{\observable_2(x_i)} .
\end{align}
Observe that this projection has an intricate implication. At the one hand, we need a dictionary that enables the accurate representation of a wide range of observables to mitigate reconstruction errors $\|\basisfunvec^\top T_{\basismat}^{-1} \langle \basisfunvec, \observable \rangle -\observable \|$. On the other hand, the dictionary needs to be invariant under the restricted Koopman operator to reduce  projection errors. 
Thus, a larger dictionary is not necessarily better as it may cause a decrease in reconstruction errors at the cost of higher projection errors \cite{haseli23:generalizing,MH-JC:25-access}, which renders the choice of the dictionary challenging. In practice, the dictionary is often chosen from common parameterizations such as polynomials \cite{haseli23:generalizing}, Fourier features \cite{DeGennaro19:scalable, wormell25:orthogonal}, radial basis functions (RBFs) \cite{WillKevr15:EDMD},
and using approaches for learning the dictionary \cite{li17:extended, lusch18:deep}. For further discussions on the dictionary see, e.g.,  \cite{WillKevr15:EDMD}.

\begin{algorithm}[t]
	\caption{EDMD predictor and spectral approximation}
	\begin{algorithmic}[1]
		\label{alg:EDMD}
        \STATE{Obtain data set $\{(x^{(i)},y^{(i)}=F(x^{(i)})),i\in[N]\}$}
        \STATE Specify dictionary $\mathbb{V}=\operatorname{span}(\{\basisfun_i\mid i\in[n]\})$
        \STATE Compute $[\basismat_{X}]_{i,j}=\basisfun_i(x^{(j)})$ and $[\basismat_{Y}]_{i,j}=\basisfun_i(y^{(j)})$
        \STATE Set $\hat{K}_\basismat\gets \basismat_{Y}\basismat_{X}^\top(\basismat_{X} \basismat_{X}^\top)^{\dagger}$
        \STATE Define $c= T^{-1}_\basismat \langle \basisfunvec, \observable \rangle
        $, $[T_\basismat]_{ij}= \langle \basisfun_i, \basisfun_j \rangle
        $
        \STATE Approximate predictor $ \basisfunvec^\top K_{\basismat}^\top c$
        \STATE Determine eigenvalues $\eigval_i$ and eigenvectors $\eigvec_i$ of $K_{\basismat}$
        \STATE Define $R\gets [\eigvec_1,\ldots,\eigvec_n] $
        \STATE Approximate eigenfunctions $\hat{\varphi}_{[n]}\gets R^{-1}\basisfunvec$
        \STATE Approximate Koopman modes $[\hat{v}_1,\ldots,\hat{v}_n]^\top\gets R c$
	\end{algorithmic}
\end{algorithm}

\begin{example}
To illustrate the influence of different dictionaries on the resulting EDMD estimate, we apply it to a system identification problem adopted from \cite{WillKevr15:EDMD}. 
As system dynamics, this problem considers the Duffing oscillator
\begin{align}\label{eq:duffing}
    \dot{x}_1 = x_2, && \dot{x}_2= -\delta x_2-x_1(\beta + \alpha x_1^2)
\end{align}
with  $\delta =0.5$, $\beta=-1$, and $\alpha=1$. We generate 25 training pairs by simulating a single time step with $T=0.25$ and initial states on a uniform grid over $[-2,2]^2$. Comparative results for trajectory data can be generated via the demo code in the GitHub repository. Based on this training data, we determine the EDMD estimate $K_{\psi}$ using three different dictionaries: \looseness=-1
\begin{enumerate}
    \item polynomial basis with degree 4
    \item $100$ random Fourier features $[\sin(\omega^\top x), \cos(\omega^\top x)]^\top$ with $\omega\sim\mathcal{N}(0,1)$ \cite{Rahimi07:random}
    \item $100$ RBFs with k-means clustering center \cite{WillKevr15:EDMD}
\end{enumerate}
Since the goal is the identification of the system dynamics, the considered observables are $\observable_1(x)=x_1$ and $\observable_2(x)=x_2$. 
Additionally, we include $x$ in every dictionary to enable the exact reconstruction via $c_1=[1, 0, \ldots, 0]^\top$ and $c_2=[0,1,0,\ldots, 0]^\top$, see~\cite{Mauroy21:koopman}.
Note that we could alternatively determine suitable vectors $c_j$, $j=1,2$ via the least squares problem\looseness=-1
\[
    \min\nolimits_{c \in \mathbb{R}^n} \sum\nolimits_{i=1}^N \|\observable_j(x^{(i)})-c^\top \basisfunvec(x^{(i)})\|^2,
\]
which can also be solved in closed form~\cite{bevanda21:koopman}.\looseness=-1
\begin{figure}[htb]
    \centering
    \includegraphics[scale=0.83]{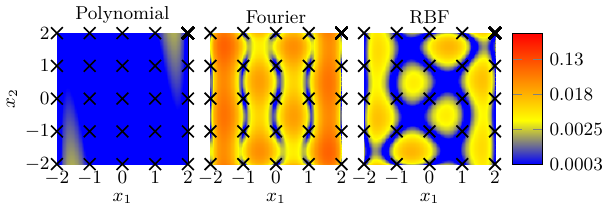}
    \vspace{-0.7cm}
    \caption{One-step prediction errors for different dictionaries. 
    }
    \label{fig:edmd_error}
\end{figure}

As illustrated in \Cref{fig:edmd_error}, the different dictionaries lead to fundamentally different one-step prediction errors. The polynomial dictionary exhibits the largest errors in the upper right and lower left corners, where the fourth order terms are maximal. 
In contrast, the errors are maximal in points which are equidistant to the closest training data and decay radially for the RBF dictionary. Despite the low number of points, all dictionary allow us to effectively capture the fundamental behavior of the Duffing oscillator. 
We illustrate this by employing the Koopman operator estimate \eqref{eq:EDMD_operator} for predicting trajectories analogously to \eqref{eq:multistep_pred} as depicted in \Cref{fig:edmd_traj}, but reproject after every four steps to avoid the excessive accumulation of projection errors~\cite{van23:reprojection}. That means that we lift the predicted state using $\psi$ after every four steps and predict the subsequent four states based on this vector. While the level of accuracy varies between the dictionaries, every model manages to roughly identify the equilibria $[\pm 1, 0]^\top$.
We want to highlight that the polynomial dictionary yields the lowest average single-step prediction error, but exhibits a large training loss. This effect indicates poor Koopman invariance, which is likely due to the inclusion of monomials of order higher than 2, and explains the poor long-term prediction accuracy.
\begin{figure}[htb]
    \centering
    \includegraphics[scale=0.9]{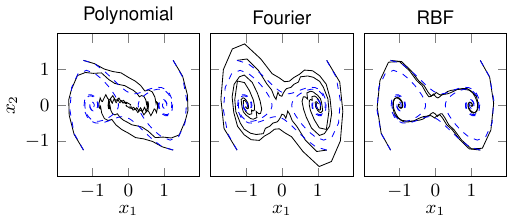}
    \vspace{-0.3cm}
    \caption{Example trajectories (black) for long-term prediction using different Koopman operator estimates with reprojection after 4 steps. Ground-truth trajectories are depicted as dashed blue lines.}
    \label{fig:edmd_traj}
\end{figure}

We further investigate the effect of growing the dictionary size. For this, we increase the training data to a uniform grid over $[-2,2]^2$ with 225 samples to ensure that $N>n$. We evaluate the performance of each method using the mean squared errors (MSE) for one-step prediction as well as the training loss, i.e., $\|\basismat_Y-K_\basismat \basismat_X\|_{\mathrm{Fr}}^2$. As illustrated in \Cref{fig:edmd_dict}, the Fourier and RBF dictionaries yield decreasing errors, which eventually stagnate at a low level. Note that the fluctuations with the Fourier dictionary results from the randomness of its construction. In contrast, the polynomial dictionary suffers from continuously growing training errors despite growing degree of the polynomials, which eventually also causes negative effects on the one-step prediction accuracy. This effect can be explained through an increasing lack of invariance of the spanned subspace $\mathbb{V}$ under the Koopman operator, which highlights the practical importance of employing approximately Koopman-invariant dictionaries. Another important point is that, for a given subspace, the training loss considered here may change with the dictionary selected to generate the subspace~\cite{HaseCort23:Forward-backward_consistency}. Said another way, two different dictionaries that generate the same subspace may have different training losses. This highlights the importance of employing error metrics that do not depend on the specific representation of the subspace, like invariance proximity~\eqref{def:invariance-proximity}.
\hfill$\blacksquare$

\begin{figure}[htb]
    \centering
    \includegraphics[scale=0.9]{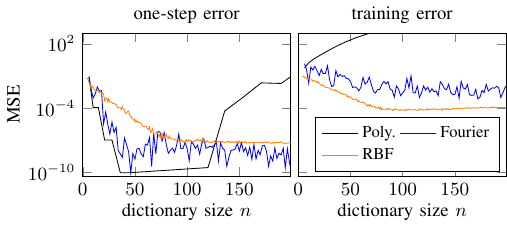}
    \vspace{-0.3cm}
    \caption{Average one-step prediction and training errors depending on the dictionary size. The polynomial dictionary spans subspaces which increasingly lack Koopman invariance, explaining the growing training errors with larger polynomial degrees.}
    \label{fig:edmd_dict}
\end{figure}
\end{example}

\begin{remark}\longthmtitle{Consistency index}\label{rem:consistency_index}
    For the particular case of the function space with respect to the empirical measure on the data set, which is where EDMD method operates, one can obtain a closed-form expression for the invariance proximity defined in Section~\ref{subsec:Koopman-invariance} based on the application of EDMD forward and backward in time~\cite{HaseCort23:Forward-backward_consistency}. This is also known as the \emph{consistency index}. To see this, note that 
    the data set $\{(x^{(i)},y^{(i)}=F(x^{(i)})) \mid i\in[N]\}$ corresponds to the evolution forward in time of the points $x^{(i)}$ into $y^{(i)}$. Likewise, one can interpret the data set $\{(y^{(i)},x^{(i)}) \mid i\in[N]\}$ as the  evolution backward in time of the points $y^{(i)}$ into $x^{(i)}$. In this way, performing the EDMD approximation yields the backward EDMD matrix $\tilde{K}_{\basismat}= \basismat_{X}\basismat_{Y}^\top(\basismat_{Y} \basismat_{Y}^\top)^{\dagger} $. Note that if the dictionary $\{\basisfun_i\mid i\in[n]\}$ spans a Koopman-invariant subspace $\mathbb{V}$, then $K_{\basismat} \tilde{K}_{\basismat} = I$. In other words, moving one step forward in time and then one step backward in time should get you back to the same place.  Otherwise, of the spanned subspace is not invariant, the forward and backward EDMD matrices will not be the inverse of each other, which motivates the definition of the consistency matrix $M_C$ as 
    \begin{align*}
        M_C = I -  K_{\basismat} \tilde{K}_{\basismat} .
    \end{align*}
    Interestingly, this matrix is similar to a symmetric matrix, is diagonalizable with a complete set of eigenvectors, and its spectrum belongs to the interval $[0,1]$. The consistency index is defined as its \textbf{sp}ectral \textbf{rad}ius (abbreviated as $\operatorname{sprad}$). 
    
    Now, if we consider the inner product~\eqref{eq:empirical}
    and the associated induced norm, and compute the invariance proximity of the subspace $\mathbb{V}$, one can show~\cite[Theorem~5.1]{HaseCort23:Forward-backward_consistency} that 
    \begin{align*}
        \mathcal{I}_{\mathcal{K}} (\mathbb{V}) = \sqrt{\operatorname{sprad}(M_C)} ,
    \end{align*}
    i.e., invariance proximity corresponds in this case to the square root of the consistency index.
    Notably, this error metric does not depend on the specific dictionary selected to represent the subspace~$\mathbb{V}$.
 \end{remark}

\miniskip
\noindent\textbf{Generator EDMD (gEDMD).} When we want to learn an approximation for the Koopman generator \eqref{eq:generator:identity} of a continuous-time model \eqref{eq:dynamics:CT}, we can follow a similar approach. By changing the target value of the least squares problem to $\dot{\basismat}_X$ with $[\dot{\basismat}_X]_{ij}=\dot{\basisfun}_i(x^{(j)})$ for a sufficiently smooth dictionary, we obtain the matrix representation $L_{\basismat}= \dot{\basismat}_{X}\basismat_{X}^\top(\basismat_{X} \basismat_{X}^\top)^{\dagger} $ of the Koopman generator estimate \cite{klus:nuske:peitz:niemann:clementi:schutte:2020}. Note that $\dot{\basisfun}_i(x^{(j)})=(\dot{x}^{(i)})^\top \nabla \basisfun_i(x^{(j)})$, such that temporal derivatives of the state are necessary to determine $L_{\basismat}$, which can be obtained, e.g., via temporal differences. Given $L_{\basismat}$, the Koopman generator can be approximated analogoulsy to \eqref{eq:EDMD_operator} via\looseness=-1 
\begin{align*}
     \hat{\mathcal{L}}= \basisfunvec^\top L_{\basismat}^\top T_\basismat^{-1} \langle \basisfunvec, \cdot \rangle.
\end{align*}
If we want to avoid the usage of explicit numerical differentiation, we can alternatively use the EDMD estimate $K_{\basismat}$ obtained from data of the form $y^{(i)}=x(\Delta T,x^{(i)})$ with small $\Delta T$ to approximate the matrix representation of the Koopman generator via $\hat{L}_\basismat = \tfrac{1}{\Delta T}\log(K_{\basismat})$ \cite{Mauroy16:linear}.

\noindent\textbf{Extensions and further perspectives on (g)EDMD.}
While we present here an introduction of (g)EDMD that comes from a linear regression perspective, these methods are frequently introduced as a finite-data approximation of a Galerkin projection \cite{boyd2013:chebyshev}, and (g)EDMD indeed converge to the Galerkin projection in the infinite data limit \cite{WillKevr15:EDMD, klus:nuske:peitz:niemann:clementi:schutte:2020}. In addition to the infinite data limit, the behavior of (g)EDMD in the limit of an infinite dictionary has been studied, see e.g., \cite{KordMezi18:convergence}. 
Note that variations of (g)EDMD for stochastic systems also exist \cite{WillKevr15:EDMD, klus:nuske:peitz:niemann:clementi:schutte:2020} with corresponding theoretical extensions of these theoretical results. 
Moreover, (g)EDMD is also frequently applied to PDEs via spatial discretization and an interpretation as finite set of coupled ODEs  \cite{Kutz18:applied}. Finally, the requirement of state knowledge can is often circumvented through delay embeddings \cite{Arbabi17:ergodic, Kamb20:time-delay}, i.e., replacing states through sequences of observable measurements, which are inspired by  Takens theorem \cite{noakes91:takens}.\looseness=-1

\mysubsec{Eigenfunctions/Spectral approximation}{\secivbpages}
\label{subsec:EDMD_spectral}

If we assume for the moment that we use a dictionary that does not cause a significant projection error, i.e.,  $\operatorname{span}(\{ \basisfun_i \mid i \in [n] \})\approx \operatorname{span}(\{ \eigfun_i \mid i \in [n] \})$, we can straightforwardly approximate eigenvalues and eigenfunctions of the Koopman operator $\mathcal{K}$. To see this, note that the approximate equivalence of the spanned subspaces implies the existence of a matrix $R$ such that $\basisfunvec\approx R \eigfun_{[n]}$. Substituting this identity in \eqref{eq:EDMD_operator}, we obtain
\begin{align*}
    \hat{\mathcal{K}}&\approx \eigfun_{[n]}^\top R^\top K_{\basismat}^\top T_\basismat^{-1} R \langle \eigfun_{[n]}, \cdot \rangle
    \\
    &\approx \eigfun_{[n]}^\top R^\top K_{\basismat}^\top R^{-\top} R^\top T_\basismat^{-1} R \langle \eigfun_{[n]}, \cdot \rangle.
\end{align*}
Since this finite-dimensional Koopman operator approximation is expressed in terms of eigenfunctions, we can immediately identify $K\approx R^{-1} K_{\basismat} R$ and $T^{-1}\approx R^\top T_\basismat^{-1} R$ by comparison with \eqref{eq:EDMD_operator_eigenfunction}. Finally, we reformulate $K_{\basismat}\approx R K R^{-1}$, where we know that $K$ in \eqref{eq:EDMD_operator_eigenfunction} is a diagonal matrix containing eigenvalues of the Koopman operator. Therefore, it can be easily seen that $RKR^{-1}$ approximately corresponds to the eigendecomposition of the matrix $K_{\basismat}$, where $R\approx [\eigvec_1,\ldots,\eigvec_n]$ can be approximated by determining the eigenvectors $\eigvec_i$ of $K_{\basismat}$. Given the matrix $R$, we straightforwardly obtain approximate eigenfunction vectors $\hat{\xi}_{[n]}=R^{-1}\basisfunvec$ with corresponding approximate eigenvalues $\hat{\lambda}_i$ of $K_{\basismat}$. Observe that the inversion of $R$ can be avoided by employing the eigenvectors of $K_{\basismat}^\top$ instead \cite{WillKevr15:EDMD}. Furthermore, we can approximate the vector of Koopman modes $\hat{v}=R^\top T_\basismat^{-1} R \langle \eigfun_{[n]}, \psi \rangle
= R^\top c$ with $c=T_\basismat^{-1}\langle \basisfunvec, \observable \rangle
$. Note that the approximation of eigenvalues and eigenfunctions is analogous for the Koopman generator using gEDMD, such that we skip it here for brevity. The computation of the approximate eigenvalues, eigenfunctions and Koopman modes using EDMD is summarized in lines 7-10 of \Cref{alg:EDMD}.

\begin{example}
For illustrating the eigenfunctions and eigenvalues obtained using this approach, we reconsider the Duffing oscillator \eqref{eq:duffing} discretized with a time step $T=0.25$ as introduced in \Cref{subsec:EDMD}. We use a $100\times 100$ grid to generate training data to compute the eigenvalues and eigenfunctions using EDMD with polynomial (degree~7), Fourier (300~features) and RBF dictionaries (600~centers) following \Cref{alg:EDMD}. The resulting normalized approximate eigenfunctions $\hat{\varphi}_2$ corresponding to the approximate eigenvalue that is second closest to $1$ are illustrated in \Cref{fig:edmd_eigen}. Note that these approximate eigenfunctions $\hat{\varphi}_2$ being real-valued is a special case, while most approximate eigenfunctions $\hat{\varphi}_i$ corresponding to other approximate eigenvalues $\hat{\lambda}_i$ are complex-valued. As discussed in \Cref{subsec:eigenfunctions}, eigenfunctions corresponding to the eigenvalue $\lambda=1$ provide a special insight for discrete-time systems as the corresponding modes remain constant. In particular, the space of true eigenfunctions corresponding to the eigenvalue $\lambda=1$ is 3 dimensional: constant functions and two 1-dimensional function subspaces corresponding to indicator functions on each basin of attraction. We empirically find that
the first approximate eigenfunction corresponds 
to the eigenvalue $\lambda=1$ representing 
a global constant offset, i.e., a constant function. The second approximate eigenfunction estimates one of the indicator eigenfunctions in our example, which is a discontinuous function \cite{mauroy13:isostables}.
Even though none of the chosen dictionaries allows to represent discontinuous function, the sign of the continuous approximations $\hat{\varphi}_2$ can still be used to approximate the regions of attraction \cite{WillKevr15:EDMD}.
As highlighted in \Cref{fig:edmd_eigen}, the estimates obtained this way are highly similar between dictionaries despite clearly visible differences in the approximate eigenfunctions themselves.

\begin{figure}[htb]
    \centering
    \includegraphics[scale=0.9]{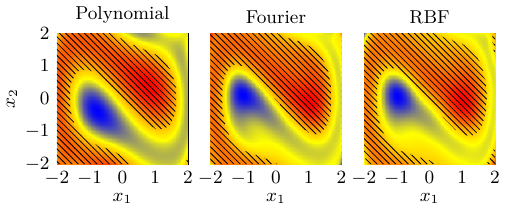}
    \vspace{-0.3cm}
    \caption{Approximate eigenfunctions with corresponding eigenvalue $\hat{\lambda}\approx 1$ for different dictionaries. The sign of this function approximately indicates the region of attraction for the stable equilibria as illustrated by the patterned area for the equilibrium at $x=[-1,0]^\top$.}
    \label{fig:edmd_eigen}
\end{figure}

It should be noted, that we could only obtain these results reliably using the large number of data ($10,000$ samples) and the large dictionaries, which comes at the cost of computational complexity. This can be explained by the fact that only a comparatively small fraction of the approximate eigenvalues $\hat{\lambda}_i$ are meaningful. For example, we can estimate $600$ eigenvalues in our example with the RBF dictionary, but only $75$ are larger than $0.01$.
More generally, the non-smoothness of these eigenfunctions is an issue for the EDMD method. Note that alternative techniques to EDMD do not necessarily suffer from non-smoothness in the same way, see e.g., \cite{mezi05, mohr2014construction, Mezi22:approximations}.
\hfill$\blacksquare$
\end{example}

In practice, the focus lies in the accurate identification of the leading eigenvalues and eigenfunctions , i.e., the most slowly decaying tuples for discrete-time systems \cite{WillKevr15:EDMD}. Recall that these leading eigenvalues and eigenfunctions are sufficient to precisely represent the long-term dynamics since the modes corresponding to small eigenvalues will disappear after very few time steps. 
Furthermore, the full spectrum of systems  with fixed points, but more generally any quasi-periodic attractor, is of the so-called lattice type and depends only on the spectral properties  of the linearization around the attractor \cite{mezic2020spectrum}. These insights
can be directly used to construct a reduced-order model that captures the core evolution of the system via a subset of the approximated eigenfunctions \cite{ChenTu12:DMD} as an alternative to the SVD-based model reduction discussed in \Cref{subsec:EDMD}. Moreover, it can be used for a dedicated separation of different time scales in the dynamics \cite{froyland:14computational}.

\begin{remark}\longthmtitle{Spectral pollution}
    Since (g)EDMD can be viewed as a Galerkin method, it also suffers from the existence of spurious eigenvalues that is common with this class of techniques \cite{lewin10:spectral}. A spurious eigenvalue is a value that is not associated with the actual spectrum of the Koopman operator, but a sequence of eigenvalue estimators with increasing dictionary size converges to this value. This spectral pollution is closely tied to the choice of dictionary and its invariance. Since we only have the approximation $\operatorname{span}(\{ \basisfun_i \mid i \in [n] \})\approx \operatorname{span}(\{ \eigfun_i \mid i \in [n] \})$, the analog of \eqref{eq:LS_EV} with $\basisfunvec$ exhibits a projection error which depends crucially on the measure $\mu$. Thereby, it introduces an additional dependency on the data distribution \cite{Mezi22:approximations} implying, e.g., that single trajectory data only reveals on-attractor eigenvalues. As a consequence, dealing with spectral pollution has been a significant effort in research leading to approaches for the detection of spurious eigenvalues \cite{levitin04:spectral}, error control \cite{colbrook19:compute, colbrook24:rigorous}, and variants of (g)EDMD without spectral pollution \cite{Drmac18:data, mauroy24:analytic}. 
   A full categorization of such issues, alongside an algorithm that guarantees spectral convergence for a broad class of autonomous dynamical systems can be found in \cite{ColbMezi2024:limits}.
    For a detailed introduction to more advanced methods for the numerical approximation of the operator spectra, we refer to \cite{Colbrook26:infinite}.\looseness=-1 
\end{remark}

\begin{remark}\longthmtitle{Discovering coherent structures}
    This section focuses on Koopman operator eigenvalue and eigenfunctions approximations, but modes estimated using EDMD are commonly employed for discovering coherent structures in the context of partial differential equations, e.g., vortices and eddies in flow physics \cite{RowlMezi09, ChenTu12:DMD}. For an exemplary discussion of EDMD-based Koopman modes in fluid flow problems we refer to \cite{mezic13:analysis}.
\end{remark}

\mysubsec{Kernel EDMD}{\secivcpages}
\label{subsec:kEDMD}

A special case of EDMD employs the canonical features of kernels as dictionary, such that the dictionary size $n$ corresponds to the size~$N$ of the data set~\cite{WillRowl16:kEDMD}. 
A kernel $k:\mathbb{R}^d\times\mathbb{R}^d\rightarrow \mathbb{R}$ is 
a symmetric function of two arguments, i.e., $k(x,x')=k(x',x)$. In the context of machine learning and interpolation theory, we additionally assume that the kernel $k$ is positive definite, which is equivalent to the positive semi-definiteness of its Gram matrices $k_X$ with $[k_X]_{ij}=k(x^{(i)},x^{(j)})$ for all sets of (pairwise distinct) points $\{x^{(i)} \in \mathbb{R}^d \mid i \in [N] \}$. 
A key feature of kernels is that they induce unique Hilbert spaces~$\mathbb{H}$ with beneficial properties -- so called reproducing kernel Hilbert spaces (RKHSs). A RKHS is the completion of the pre-Hilbert space of functions
\begin{align*}
    \mathbb{H}_0 = \Big\{f \in L^2_\mu(\Omega) \mid  \psi=\sum\nolimits
    _{i=1}^n w_i k(\cdot,x^{(i)}), ~~x^{(i)}\in\Omega \Big\}
\end{align*}
under the norm $\|\psi\|_{\mathbb{H}}=\sum_{i,j=1}^\infty w_iw_j k(x^{(i)},x^{(j)})$, which is equipped with an inner product that satisfies the reproducing property \looseness=-1
$\langle \observable, k(\cdot,x)\rangle_{\mathbb{H}}=\psi(x)$
For a thorough introduction to kernels and RKHS theory, we refer to \cite{scholkopf02:learning, berlinet11:reproducing, manton15:a}.

Given a positive definite kernel, we
define our dictionary via the kernel's canonical features as $\basisfun_i=k(x^{(i)},\cdot)$, $i\in[N]$, for the set $\mathcal{X} := \{x^{(i)} \in \Omega \mid i \in [N]\}$ of pairwise distinct data points. 
Moreover, we exploit that $\basismat_X^{\dagger}=\basismat_X^\top (\basismat_X\basismat_X^\top)^{\dagger}$ holds~\cite{WillKevr15:EDMD} together with $\basismat_X^{\dagger}= k_X^{\dagger}$ implying $\basismat_X^{\dagger} = k_X^{-1}$ when $x^{(i)}\neq x^{(j)}$ for all $i,j\in[N]$ due to positive definiteness of the kernel $k$. Similarly, we have $\basismat_Y = k_{XY}$ for $[k_{XY}]_{ij}=k(x^{(i)},y^{(j)})$. 
Hence, we obtain the kernel EDMD (kEDMD) estimate $K_k=k_{XY}k_X^{-1}$ as approximate matrix representation for the Koopman operator $\mathcal{K}$. 
By approximating the Koopman operator on the RKHS $\mathbb{H}$, we can further define $[T_\basismat]_{ij}= \langle \basisfun_i, \basisfun_j \rangle_{\mathbb{H}} = k(x^{(i)},x^{(j)})$ with the equality following from the reproducing property of the RKHS \cite{scholkopf02:learning}. Thus, we can approximate the Koopman operator via\looseness=-1
\begin{align*}
    \hat{\mathcal{K}}_k = k_{Xx}^\top k_X^{-1}k_{YX} k_X^{-1} \langle k_X, \cdot \rangle_{\mathbb{H}}
\end{align*}
with column vector function $[k_{Xx}]_i=k(x^{(i)},\cdot)$. Note that the usage of the canonic kernel features allows the straightforward computation of the scalar product for arbitrary observables $\observable\in\mathbb{H}$ as $\langle k_X, \observable \rangle_{\mathbb{H}}=\psi_X$ due to the reproducing property of the RKHS.  Note that a similar approach for gEDMD can be followed to obtain a kernel-based estimator~\cite{klus20:kernel}.\looseness=-1
\begin{table}[h!tb]
    \begin{center}
    \caption{Expressions for the commonly used Mat\'ern class, squared exponential (SE), rational quadratic (RQ) and Wendland kernels. The lengthscale parameter is denoted by $l$.}
    \footnotesize
    \begin{tabular}{p{1.4cm} p{1.2cm}|p{4.6cm} }
    \toprule
     kernel & parameter & expression \\
     \midrule \midrule
     \multirow{3}{*}{Mat\'ern} & $\nu=1/2$ & $\exp(-\tfrac{|r|}{l})$\\[2pt]
      & $\nu=3/2$ & $(1 + \tfrac{\sqrt{3}|r|}{l}) \exp(-\tfrac{\sqrt{3}|r|}{l})$\\[2pt] 
      & $\nu=5/2$ & $(1 + \tfrac{\sqrt{5}|r|}{l} + \tfrac{5r^2}{3l^2}) \exp(-\tfrac{\sqrt{5}|r|}{l})$ \\[2pt] \midrule
    SE & & $\exp(-\tfrac{r^3}{2l^2})$\\[2pt] \midrule
    RQ & $\alpha$ & $(\tfrac{1+r^2}{2\alpha l^2})^{-\alpha}$\\[2pt] \midrule
    \multirow{3}{*}{Wendland} & $q=0$ & $\max(1-|r|,0)^{\lfloor \tfrac{d}{2}\rfloor+1 }$ \\[2pt]
    & $q=1$ & $\max(1-|r|,0)^{\lfloor \tfrac{d}{2}\rfloor+3 } ((\lfloor \tfrac{d}{2}\rfloor+3)r+1)$ \\[2pt]
    & $q=2$ & \mbox{$\max(1-|r|,0)^{\lfloor \tfrac{d}{2}\rfloor+5 }$} $ \cdot \tfrac{((\lfloor \tfrac{d}{2}\rfloor+5)^2-1)r^2+3(\lfloor \tfrac{d}{2}\rfloor+5)r+3}{3}$ \\[2pt]
    \bottomrule \label{tab:common_kernels}
    \end{tabular}
    \end{center}
\end{table}
\begin{figure}[htb]
    \centering
    \includegraphics[scale=0.9]{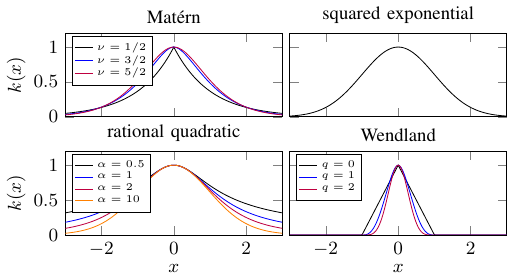}
    \vspace{-0.3cm}
    \caption{Illustration of different kernel functions $k(\cdot,0)$ with different smoothness parameters and lengthscale $l=1$.}
    \label{fig:kernel_visual}
\end{figure}

Due to the strong connection to EDMD, the choice of the kernel with its induced RKHS essentially takes over the role of the dictionary selection in kEDMD. While the impact of this choice might seem less obvious, the properties of kernels are generally well-studied. For example, it is well-known that Mat\'ern class \cite{matern60:spatial}, squared exponential (SE) (also known as radial basis function), and Wendland kernels \cite{Wend04} are universal approximators \cite{micchelli06:universal}. 
Exemplary expressions and illustrations of these kernels are provided in~\Cref{tab:common_kernels} and \Cref{fig:kernel_visual}. 
Additionally, the RKHS of these kernels is well understood: the RKHS of SE kernels is a subset of the analytic functions \cite{vandervaart11:information}, while the RKHS of Mat\'ern and Wendland kernels is isomorphic to Sobolov spaces with a smoothness parameter specifying the degree of weak differentiability \cite{Wend04,kanagawa18:gaussian}. 
This study of RKHSs has recently also been extended to specifically analyze their impact on Koopman operator estimates~\cite{kohne2025error,gonzalez24:the,HertPhil25}.
For example, the close relation (unitary equivalence) between the RKHS of SE kernels and Fock spaces indicates a lack of invariance of this RKHS under Koopman operators, except for linear dynamics \cite{ColbMezi2024:limits}.\looseness=-1

\begin{remark}\longthmtitle{Computational complexity}
    Since kEDMD requires the inversion of a $(N\times N)$ matrix, its naïve computational complexity grows cubically. To mitigate this issue and ensure scalability to large data set sizes, one can resort to scalable approximations commonly used for kernel methods such as the Nystr\"om method~\cite{williams00:using}, random features~\cite{Rahimi07:random}, or sketching~\cite{woodruff14:sketching}. 
    Advanced techniques relying on these foundations and additionally exploiting GPUs can scale kernel methods to millions of data points as shown in~\cite{meanti20:kernel}.\looseness=-1
\end{remark}

\mysubsec{EDMD error analysis: projection and estimation}{\secivdpages}
\label{subsec:EDMD_error}

In Section~\ref{subsec:Koopman-invariance}, we discussed about Koopman invariance of the finite-dimensional subspace
\[
    \mathbb{V} := \operatorname{span}\{ \basisfun_i \mid i \in [N] \}
\]
w.r.t.\ the action of the Koopman operator. In particular, we introduced the concept of approximate Koopman invariance, essentially by a normalized weighting of the projection error, cf. the definition of invariance proximity $\mathcal{I}_{\mathcal{K}}(\mathbb{V})$ in~\eqref{def:invariance-proximity}. 
When using data, the second source of error is estimation. While only a limited number of data samples is required for a Koopman-invariant subspace, the approximation given by the inner product~\eqref{eq:empirical} (the one employed by EDMD) only converges to the inner product $\langle \basisfun_i(x), \basisfun_j(x) \rangle$ 
in the infinite-data limit, see~\cite{KordMezi18:convergence}. 
The paper~\cite{ColbMezi2024:limits} shows that, e.g., for $\mathcal{F} = L_\mu^2 (\Omega)$ and the deterministic dynamics~\eqref{eq:dynamics:DT}, more than one limit is necessary based on the concept of the complexity solvability index. 
Hence, for EDMD, one has to analyze the estimation and the projection error, which was --~to the best of our knowledge~-- first done in~\cite{Mezi22:approximations} (estimation) and~\cite{zhang:zuazua:2023} (projection) leading overall to probabilistic $L^2$-bounds (based on the Lebesgue measure):
For a desired accuracy $\varepsilon > 0$ and a probabilistic tolerance $\delta > 0$, we have the probabilistic bound\looseness=-1
\begin{equation*}
    \mathbb{P}( \| \hat{K} - \mathcal{P}_{\mathbb{V}}\mathcal{K}|_{\mathbb{V}} \|_{\mathrm{Fr}}^2 \leq \varepsilon) \geq 1 - \delta
\end{equation*}
on the estimation error assuming that the number~$N$ of i.i.d.\ drawn samples is of order $\mathcal{O}(n^2/(\delta \varepsilon^2))$ as shown in~\cite{nuske2023finite} for the Koopman generator and operator. 
Based on these findings, various extensions towards stochastic dynamics and systems with input for i.i.d.\ and ergodic sampling were proposed~\cite{nuske2023finite}, see also~\cite{philipp2026variance} for a variance representation formula and an overview on recent results on the estimation error for deterministic and stochastic dynamics. 
In addition, the projection error has to be analyzed. To this end, it is important to note that the Koopman operator is of transport type as easily inferrable from identity~\eqref{eq:generator:identity} for the Koopman generator, see also \cite[Subsection~2.1]{StraWort26}. 
Hence, using tools originally developed for the numerical analysis of approximation schemes tailored to \textit{hyperbolic} partial differential equations, $L^2$-bounds on the projection error can be derived; also for systems with inputs~\cite{schaller2023towards}.
If, however, Koopman invariance holds (see \Cref{subsec:Koopman-invariance}), there is no projection error resulting in $L^\infty$, i.e., pointwise error bounds. To this end, dictionary learning based on, e.g., invertible neural networks can be leveraged as proposed in~\cite{bevanda22:diffeomorphically}. 
To this end, also techniques developed for the optimization on manifolds can be employed, see, e.g., \cite{schurig2025shaping} for an approach to shape the dictionary on the Grassmanian. This might, e.g., be combined with error bounds tailored to particular observables, which allows to tighten the bounds on approximating the Koopman operator and, thus, designed for arbitrary observables contained in the underlying function space.

\mysubsec{Kernel EDMD: full approximation error bounds}{\secivepages}
\label{subsec:EDMD_error_full}

Kernel EDMD as introduced in \Cref{subsec:kEDMD} provides a remedy to rigorously derive $L^\infty$-bounds, i.e., uniform pointwise bounds on the approximation error. 
To this end, we exemplarily employ Wendland~\cite{Wend04} or Matérn kernels, see \Cref{tab:common_kernels}. \looseness=-1

Following the approach proposed in~\cite{kohne2025error}, we provide some insights into the derivation of the respective uniform error bounds using the notation of \Cref{sec:KOT}, i.e., without imposing invariance of the set~$\Omega$ and, thus, defining the respective natives spaces~$\mathcal{N}(\Xi)$, and $\mathcal{N}(\Omega)$ corresponding to the RKHSs generated by Wendland kernels on~$\Xi$ and~$\Omega$ and refer to~\cite{HertPhil25} for similar results using Matérn kernels as well as for stochastic dynamics. 
Then, we split the approximation error into two parts, where $\| \cdot \| = \| \cdot \|_{\mathcal{N}(\Xi) \rightarrow L^\infty(\Omega)}$ is used for ease of notation and show the inequality
\begin{align*}
    \| \widehat{\mathcal{K}} - \mathcal{K} \| & =  \| P_{\mathcal{X}} \mathcal{K} ( P_{\mathcal{Y}} - I) + (P_{\mathcal{X}} - I) \mathcal{K} \| \\
    & \leq  \| P_{\mathcal{X}} \mathcal{K} ( P_{\mathcal{Y}} - I) \| + \| (P_{\mathcal{X}} - I) \mathcal{K} \| \\
    & \leq  C_1 h_{\mathcal{Y}}^{k+1/2} + C_2 h_{\mathcal{X}}^{k+1/2}
\end{align*}
with fill distances~$h_{\mathcal{X}}$, $h_{\mathcal{Y}}$ on the sets~$\mathcal{X} = \{x^{(i)} \mid i \in [N]\} \subset \Xi$ and $\mathcal{Y} = \{y^{(i)} \mid i \in [N]\} \subset \Omega$, respectively.\footnote{The fill distance $h_{\mathcal{X}}$ on the set $\mathcal{X} = \{x^{(i)} \mid i \in [N]\} \subset \Xi$ is defined as $\sup_{x \in\Xi} d(x,\mathcal{X})$, where $d(x,\mathcal{X}) := \min\{ \| x - x^{(i)} \| \mid i \in [N]\}$.} The operators~$P_\mathcal{X}$ and~$P_\mathcal{Y}$ project on the spaces 
spanned by the canonical features $\{ \basisfun_i = k(x^{(i)},\cdot) \mid i \in [N] \}$ and $\{ \basisfun_i = k(y^{(i)},\cdot) \mid i \in [N] \}$, respectively, meaning that $P_\mathcal{X}$ corresponds to~$P_{\mathbb{V}}$.
For the last inequality, we briefly explain the estimates on the two terms. 
The first follows directly from approximation-theoretic results for the orthogonal projection~$P_{\mathcal{Y}}$ from the native space~$\mathcal{N}(\Xi)$ on~$L^\infty(\Omega)$ as well as $\| P_{\mathcal{X}} \|_{L^\infty(\Omega) \rightarrow L^\infty(\Omega)} = 1$ (projection) and $\| \mathcal{K} \|_{L^\infty(\Omega) \rightarrow L^\infty(\Omega)} = 1$.
The second, however, requires Koopman invariance of the native space~$\mathcal{N}$, i.e., $\mathcal{K}(\mathcal{N}(Y)) \subseteq \mathcal{N}(X) = H^{\sigma_{d,k}}(X)$ with $\sigma_{d,k} := (d+1)/2 + k$, where $k$ denotes the smoothness parameter of the Wendland kernel and $F \in \mathcal{C}_b^m$, $m > \sigma$, is assumed for the dynamics~\eqref{eq:dynamics:DT}.
This implies boundedness of the Koopman operator~$\mathcal{K}$, i.e., $\| \mathcal{K} \|_{\mathcal{N}(Y) \rightarrow \mathcal{N}(X)} < \infty$, see \cite[Theorem~C.1]{kohne2025error} for an estimate of the bound on Sobolev spaces for integer order (for non-integer orders, interpolation theory is additionally required). 
Then, again, approximation theory is used resulting in the fill distance~$h_{\mathcal{X}}$ to the $(k+\frac 12)$-power.

We point out, that supposing sufficiently smooth dynamics~$F$ in~\eqref{eq:dynamics:DT}, the \textit{proportional} bound
\begin{equation*}
    |(\widehat{\mathcal{K}}f)(x) - (\mathcal{K}f)(x)| \leq C h^{k - 1/2}_{\mathcal{X}} \operatorname{dist}(x, \mathcal{X}) \| f \|_{\mathcal{N}_{\basisfun_{d,k}}(\Omega)} 
\end{equation*}
can be established, see~\cite{BoldPhil25}, where also extensions for control systems were derived (including the integration of a regularization term in the RKHS regression problem).\footnote{$\| \cdot \|_{\basisfun_{d,k}}(\Omega)$ denotes the norm of the RKHS generated by Wendland kernels with smoothness parameter~$k$ on~$\mathbb{R}^n$.} 
The terminology \textit{proportional} stems from the fact, that the error (bound) decays to zero if the state approaches the origin, which is key for, e.g., set-point stabilization as shown later in \Cref{sec:controller}.
For control systems, \cite{schmitz2025excitation} provides (optimality) conditions analysing the excitation of the inputs using the concept of subspace angles to further tighten the control-related bounds and alleviate the data requirements. Note that these techniques are applicable to both (generator) EDMD and kernel EDMD.\looseness=-1

We point out that kernel EDMD may also be analyzed using so-called Mercer features as proposed in~\cite{PhilScha24:kernel}, see also~\cite{PhilSchaWort2025} and~\cite{bevanda2026nonparametric} for an extension to control, where in the latter a Koopman-inspired kernel regression problem is set up to allow for an extremely efficient incorporation of inputs for particular observables of interest. 
Further, using a symmetrization approach, Koopman invariant features over trajectory data can be constructed to ensure a bounded generalization gap to unseen data \cite{bevanda23:koopman}. This approach has been proven to increase the convergence rate w.r.t.\ learning linear predictors~\cite{bevanda25:koopman}. 
Further error bounds were, e.g., proposed in the preprint~\cite{llamazares2024data}. An alternative approach to approximate the Koopman operator is proposed in~\cite{yadav2025approximation} using Bernstein polynomials, which is, however, not (directly) related to (kernel) EDMD.\looseness=-1

\mysec{Extension to control systems}{\secvpages}
\label{sec:control}

We consider the discrete-time dynamical system
\begin{equation}\label{eq:controlled-dynamics:DT}
    x^+ = F(x,u),
\end{equation}
where the successor state~$x^+$ is given by the image of the continuous map $F: \Omega \times \mathbb{U} \rightarrow  \Xi$, $\mathbb{U} \subset \mathbb{R}^m$, applied to the current state~$x$ and the input~$u$. Note that the dependence of $F$ on $u$ is arbitrary, in particular, not necessarily affine.

\mysubsec{On exact LTI embeddings}{\secvapages}
\label{subsec:LTI}

The nonlinear system \eqref{eq:controlled-dynamics:DT} admits an exact Linear Time-Invariant (LTI) embedding if there exists a set of linearly independent functions $\basisfun_1, \dots, \basisfun_{\nD}: \Omega \rightarrow \mathbb{K}$  such that
\begin{equation}
\label{eqn:lifted-state}
\basisfunvec(x^+) = A\basisfunvec(x) + Bu, \quad x = C\basisfunvec(x),
\end{equation} 
holds for all $u \in \mathbb{U}$, where $A \in \mathbb{R}^{\nD \times \nD}, B \in \mathbb{R}^{\nD \times m}, C \in \mathbb{R}^{d \times \nD}$ are constant matrices. This means that the value of the functions $\basisfunvec$ evolves linearly along system trajectories for any input~$u$. 
For convenience, we also require that the original system state~$x$ belongs to the subspace~$\mathbb{V}$ generated by the functions $\basisfun_1, \dots, \basisfun_{\nD}$, i.e., $x = C \basisfunvec(x)$ with a constant matrix $C \in \mathbb{R}^{d \times \nD}$. 
In this case, we define $z := \basisfunvec(x)$ as a lifted state, and the nonlinear system \eqref{eq:controlled-dynamics:DT} admits the linear model representation\looseness=-1
\begin{equation}\label{eqn:Koopman-linear}
z^+ = Az + Bu, \quad x = Cz.
\end{equation}
Note that the Koopman linear model \eqref{eq:EDMD_operator_eigenfunction} is a special case of the LTI embedding~\eqref{eqn:lifted-state} (when the state belongs to the subspace $\mathbb{V}$) with zero input. 
Even when an exact LTI embedding does not exist, the approximate Koopman linear representation \eqref{eqn:Koopman-linear} has been widely used in predictive control applications to model nonlinear dynamics \eqref{eq:controlled-dynamics:DT}, e.g., \cite{KordMezi18,StraWort26,shi26:koopman}. \looseness=-1

\begin{example}
Consider the following two-dimensional nonlinear system~\cite{brunton2016koopman,ShanHasi26:LTI}: 
\begin{equation*}
\begin{aligned}
x_1^+  \!=\! x_2^2 + x_1 + u, \qquad 
x_2^+  \!=\! 0.9 x_2.
\end{aligned}
\end{equation*}
The choice of functions  $\basisfunvec(x) = (x_1, x_2, x_2^2)$ reveals that the system admits an exact LTI embedding. Indeed,
\begin{align*}
\basisfunvec(x^+) &= \begin{bmatrix}
        1 & 0 & 1 \\ 0 &  0.9 & 0 \\ 0 & 0 & 0.81
    \end{bmatrix}\basisfunvec(x) + \begin{bmatrix}
        1 \\ 0 \\ 0
    \end{bmatrix} u,
    \\
     x &= \begin{bmatrix}
        1 & 0 & 0 \\ 0 & 1 & 0
    \end{bmatrix} \basisfunvec(x). 
\end{align*}
\hfill$\blacksquare$
\end{example}

The availability of an exact LTI embedding provides a key strategic advantage in dealing with the complexity of nonlinear systems because it allows to bring the full range of well-established tools from linear control systems into their analysis and control design.  This raises the fundamental question of which classes of nonlinear systems admit an exact LTI embedding. 
The recent paper~\cite{XS-JC-YZ:25-csl} explores the extent to which such systems are amenable to powerful data-driven linear control techniques, like Willems' fundamental lemma.\looseness=-1

A precise and complete characterization of the  class of systems that admits an exact LTI embedding is as follows, cf.~\cite[Theorem~2]{ShanHasi26:LTI}: 
the nonlinear system \eqref{eq:controlled-dynamics:DT} admits an exact LTI  embedding if and only if\looseness=-1
\begin{enumerate}
    \item there exists an invertible matrix $T \in \mathbb{R}^{d \times d}$ such that, with the coordinate transformation $\tilde{x} = T x:= (\tilde{x}_1, \tilde{x}_2)$, the dynamics take the \emph{control-affine preserved (CAP)} form:
    \begin{subequations}
    \begin{equation}\label{eq:CAP+Koopman-invariant}
    \begin{bmatrix}
       \tilde{x}_1^{+} \\ \tilde{x}_2^{+}
    \end{bmatrix} = \begin{bmatrix}
        h_1(\tilde{x}_2) + A_1 \tilde{x}_1 \\h_2(\tilde{x}_2)
    \end{bmatrix} + \begin{bmatrix}
       D \\  0
    \end{bmatrix} u ,
    \end{equation}
    for $\tilde{x} \in \tilde{\Omega}$ and $u\in \mathbb{U}$,  
    where $\tilde{\Omega} := \{T x \in \mathbb{R}^d \mid x \in \Omega \}$, and $h_1$, $h_2$, $A_1$, and $D$ are functions and matrices of compatible 
    dimensions, and
    \item there exist $\basisfunvecbar \!=\! (\bar{\basisfun}_1,\dots,\bar{\basisfun}_{\bar{\nD}})\!:\! \mathrm{proj}_{\tilde{x}_2}(\tilde{\Omega}) \!\to\! \mathbb{K}^{\bar{\nD}}$ and 
     constant matrices \(A_2, \bar{C}\) of compatible dimensions such that\looseness=-1
        \begin{equation}\label{eq:Koopman-closed-autonomous}
        \!\basisfunvecbar(h_2(\tilde{x}_2)) \!=\! A_2 \basisfunvecbar(\tilde{x}_2), \; (\tilde{x}_2, h_1(\tilde{x}_2)) \!=\!\bar{C} \basisfunvecbar(\tilde{x}_2).\!
        \end{equation}
        \end{subequations}
\end{enumerate}
The condition~\eqref{eq:Koopman-closed-autonomous} yields a Koopman-invariant subspace under the autonomous system $\tilde{x}_2^+ = h_2(\tilde{x}_2)$ that contains its own state $\tilde{x}_2$ and all nonlinear terms $h_1(\tilde{x}_2)$. The condition~\eqref{eq:CAP+Koopman-invariant} presents all required structural properties for a nonlinear system to admit an LTI embedding. The ``only if" implication of this statement is easy to establish. In fact, if $\tilde{x}_1 \in  \mathbb{R}^{n_1}, \tilde{x}_2 \in \mathbb{R}^{n_2}$, then 
the choice of functions $\basisfunvec(x):= (\tilde{x}_1, \basisfunvecbar(\tilde{x}_2))$ satisfies~\eqref{eqn:lifted-state} with matrices 
\begin{align*}
A &\!=\! \begin{bmatrix}
    C & [{0} \ I_{n_1}] \bar{C}  \\ {0} & \bar{A}
\end{bmatrix},
~~\!\!
 B \!=\!  \begin{bmatrix}
    D \\ {0}
\end{bmatrix}, ~~\!\!
C \!=\! T^{-1}\!
\begin{bmatrix}
    I_{n_1} & {0} \\ {0} & [I_{n_2} \ {0}] \bar{C}
\end{bmatrix}\!.
\end{align*}
Instead, the ``if" implication is harder, cf.~\cite{ShanHasi26:LTI}.

\begin{example}
 This example, taken from~\cite{ShanHasi26:LTI}, illustrates the characterization of nonlinear systems that admit exact LTI embeddings. 
Consider the nonlinear system 
\[
    \!\begin{bmatrix}
        x_1^+ \\ x_2^+ \\ x_3^+
    \end{bmatrix}\! =\! 
    \begin{bmatrix}
        (x_2\!+\!x_3)^2 \!+\! (x_1\!+\!x_2)\!+\!u\\
        (x_2\!+\!x_3)^2\!\cdot\!(x_2\!+\!x_3-1)\!+\!x_1 \!-\!2u\\
        (x_2\!+\!x_3)^2\!\cdot\!(1\!-\!x_2\!-\!x_3)\!-\!x_1\!+\!0.5x_2\!+\!0.5x_3\! +\! 2u
    \end{bmatrix}\!\!,
\]
where $x := (x_1, x_2, x_3) \in \mathbb{R}^3$. This system  admits an exact LTI embedding by choosing 
\[
\basisfunvec(x) \!=\! (x_1, -\!2x_1\!-\!x_2, x_2\!+\!x_3, (x_2\!+\!x_3)^2, (x_2\!+\!x_3)^3) .
\]
In fact, if we set $z = \basisfunvec(x)$, we can verify that 
\begin{equation} \label{eq:Koopman-model-example}
\begin{aligned}
z^+ =& \begin{bmatrix}
    -1 & -1 & 0 & 1 & 0 \\ 1 & 2 & 0 & -1 & -1\\ 0 & 0 & 0.5 & 0 & 0\\ 0 & 0 & 0 & 0.25 & 0 \\ 0 & 0 & 0 & 0 & 0.125
\end{bmatrix} z + \begin{bmatrix}
    1 \\ 0 \\ 0 \\ 0 \\ 0
\end{bmatrix} u, \\
x =& \begin{bmatrix}
    1 & 0 & 0 & 0 & 0 \\ 
    -2 & -1 & 0 & 0 & 0\\
    2 & 1 & 1 & 0 & 0
\end{bmatrix} z.
\end{aligned}
\end{equation}
This embedding may not seem immediate. 
Indeed, we can find the invertible matrix
\begin{align*}
    T = \begin{bmatrix}
        1 & 0 & 0
        \\
        -2 & -1 & 0
        \\
        0 & 1 & 1
    \end{bmatrix}
\end{align*}
to transform the system into the CAP structure~\eqref{eq:CAP+Koopman-invariant}. In the new coordinates $\tilde{x} := (\tilde{x}_1, \tilde{x}_2, \tilde{x}_3) = Tx$, we have
\begin{equation}\label{eq:CAP-example}
\begin{bmatrix}
    \tilde{x}_1^+ \\ \tilde{x}_2^+ \\ \tilde{x}_3^+
\end{bmatrix} = \begin{bmatrix}
    \tilde{x}_3^{2} - \tilde{x}_1- \tilde{x}_2 + u \\ -\tilde{x}_3^{3}-\tilde{x}_3^{2}+\tilde{x}_1+2\tilde{x}_2 \\ 0.5 \tilde{x}_3
\end{bmatrix}.
\end{equation}
Then, we can identify the associated lifting function for $\tilde{x}_3^{+} = h_1(\tilde{x}_3) := 0.5\tilde{x}_3$ as $\basisfunvecbar := (\tilde{x}_3, \tilde{x}_3^{2}, \tilde{x}_3^{3})$. The function $\basisfunvecbar$ includes all nonlinearities of the system (i.e., $\tilde{x}_3^{2}, \tilde{x}_3^{3}$) and evolves linearly as $h_2$ is a linear function.  The linear embedding~\eqref{eq:Koopman-model-example} can be obtained directly from~\eqref{eq:CAP-example}.
 \hfill$\blacksquare$
\end{example}

As the above characterization reveals, only systems with a particular structure admit exact LTI embeddings. Therefore,
in general, LTI modeling induces a structural bias~\cite{heeg2026limitations} such that only approximate embeddings can be obtained. 
This raises a key question: while it is clear from our discussion on Koopman invariance, cf. Section~\ref{subsec:Koopman-invariance}, that when we deal with autonomous systems, we are approximating the Koopman operator, what is the appropriate mathematical object that we are approximating when we deal with systems with inputs? We introduce this object in the next section.

\mysubsec{Koopman control family}{\secvbpages}
\label{subsec:control:family}

The challenge for extending Koopman operator theory 
to systems with inputs is that, unlike system~\eqref{eq:dynamics:DT}, the behavior of the control system~\eqref{eq:controlled-dynamics:DT} cannot be determined without knowledge of the input sequence. In what follows, we describe the key role that the notion of invariance plays in articulating a consistent extension of the notion of Koopman operator to control systems.

We start from the observation that, if we fix the input as a constant, we get a system in the
form of~\eqref{eq:dynamics:DT}, which admits a well-defined Koopman operator, see, e.g., \cite{WillHema16:control,ProcBrun18:control}.
Motivated by this idea, one can model the system~\eqref{eq:controlled-dynamics:DT}  by
switching between constant input systems. Formally, given a constant $u \in \mathbb{U}$, let $F_u: \Omega \rightarrow \Xi$ be defined by
\begin{align*}
    x \mapsto F_u(x) = F(x,u). 
\end{align*}
Note that any trajectory $\{x_k\}_{k=0}^L \subset{\Omega}$ of system~\eqref{eq:controlled-dynamics:DT} generated with input sequence
$\{u_k\}_{k=0}^{L-1} \subset\mathbb{U}
$ can be described as $x_k = F_{u_{k-1}} \circ \cdots \circ F_{u_0}  (x_0)$.

Consider the discrete-dynamical system
\begin{equation*}
    x^+ = F_u(x) .
\end{equation*}
Being this now a dynamics of the same form as~\eqref{eq:dynamics:DT}, we can consider the associated Koopman operator, that we denote as~$\mathcal{K}_u$. To capture the entirety of possible evolutions, we really need to consider all possible choices of inputs. This then naturally gives rise to the notion of \emph{Koopman control family (KCF)}, defined as $\{\mathcal{K}_{u} : \mathcal{F} \to \mathcal{F} \}_{u \in \mathbb{U}}$, see~\cite{HaseCort26:control:family}. This is the object that encodes, from a Koopman perspective, the realm of possibilities encoded by all possible control inputs, which is what we have at our disposal when we consider systems of the form~\eqref{eq:controlled-dynamics:DT}.

The fact that, when dealing with control systems, we move from one Koopman operator to an infinite family of them seems daunting. However, we know these operators are  closely related to each other as they all arise from the same controlled dynamics~\eqref{eq:controlled-dynamics:DT}. 
As we show next, the notion of Koopman invariance plays a key role in dealing with this complexity. \looseness=-1

Extending the notion of Koopman invariance introduced in \cref{subsec:Koopman-invariance}, we call a subspace $\mathbb{V} \subset \mathcal{F}$ invariant under the Koopman control family
if, for all $u \in \mathbb{U}$, one has
\begin{align*}
  \mathcal{K}_u \observable \in \mathbb{V} \qquad\forall\,\observable \in \mathbb{V} .
\end{align*}
This means that $\mathbb{V}$ is a \emph{common} invariant subspace for all the Koopman operators in the family.

One can show, cf.~\cite[Theorem~4.3]{HaseCort26:control:family}, that the Koopman control family has a finite-dimensional common invariant subspace $\mathbb{V}$
if and only if it admits a basis $\basisfunvec = (\basisfun_1,\dots,\basisfun_{\nD}): \Omega \to \mathbb{K}^{\nD}$, with $\nD=\dim{\mathbb{V}}$, and a function $\mathcal{A}: \mathbb{U} \to \mathbb{K}^{\nD \times \nD}$ such that
\begin{align}\label{eq:composition-to-separation}
    \basisfunvec(x^+) = \mathcal{A}(u) \basisfunvec(x), \quad \forall\,(x,u) \in \Omega \times \mathbb{U}.
\end{align}
This means that all the observables in $\mathbb{V}$ evolve linearly under all the members of the Koopman control family. We refer to the general form for the evolution of the common invariant subspace described in~\eqref{eq:composition-to-separation} as the \emph{input-state separable} form.

Interestingly, the input-state separable form encompasses commonly employed Koopman-inspired approximation models when dealing with control systems, like linear and bilinear ones. For instance, a linear finite-dimensional lifted representation of the form
\begin{align*}
\basisfunvecvariant(x^+) = A \basisfunvecvariant(x) + B u,
\end{align*}
where $\basisfunvecvariant: \Omega \to \mathbb{K}^{n_{\basisfunvariant}}$, $A \in \mathbb{R}^{n_\basisfunvariant \times n_\basisfunvariant}$ and $B \in \mathbb{R}^{n_\basisfunvariant \times m}$ corresponds to the common invariant subspace $\mathbb{V} = \mathrm{span} (\basisfunvecvariant,1)$ under the Koopman control family, with input-state separable representation
\begin{align*}
\begin{bmatrix}
\basisfunvecvariant(x^+)
\\
1
\end{bmatrix}
= \begin{bmatrix}
A
& B u
\\
0
&
1
\end{bmatrix}
\begin{bmatrix}
\basisfunvecvariant(x)
\\
1
\end{bmatrix}.
\end{align*}
In particular, an exact LTI embedding~\eqref{eqn:lifted-state} is a
special case of the input-state separable form where the subspace $\mathbb{V}$ contains the state variables.

Likewise,  a bilinear finite-dimensional lifted representation of the form
\begin{align*}
\basisfunvecvariant(x^+) = A \basisfunvecvariant(x) + \sum\nolimits_{i=1}^m  D_i \basisfunvecvariant(x) u_i + B u,
\end{align*}
where $B_i \in \mathbb{R}^{n_\basisfunvariant \times n_\basisfunvariant}$, $i \in [m]$, corresponds to the common invariant subspace $\mathbb{V} = \mathrm{span} (\basisfunvecvariant,1)$ under the Koopman control family, with input-state separable representation
\begin{align*}
\begin{bmatrix}
\basisfunvecvariant(x^+)
\\
1
\end{bmatrix}
= \begin{bmatrix}
A +\sum_{i=1}^m  u_i D_i
& Bu
\\
0
&
1
\end{bmatrix}
\begin{bmatrix}
\basisfunvecvariant(x)
\\
1
\end{bmatrix}.
\end{align*}
This observation means that, if one seeks to find a linear or bilinear representation of the evolution of the control system~\eqref{eq:controlled-dynamics:DT}, what one is really doing is presuming the existence of a particular type of common invariant subspace of the infinite family of Koopman operators associated to the dynamics. The converse is not necessarily true, as corroborated by the fact that a common invariant subspace under the KCF exists if and only if the evolution can be written in input-state separable form. The following example, taken from~\cite{HaseCort26:control:family}, illustrates this point. 
\begin{example}
    Consider  the system with state $x = [x_1,x_2]^T \in \mathbb{R}^2$ and input $u \in \mathbb{R}$
    \begin{align*}
        x_1^+ & = a x_1 +b u \nonumber \\
        x_2^+ & = c x_2 + d x_1^2  + e x_1 u + fu + g \sin(u) + h
    \end{align*}
    where $a,b,c,d,e,f,g,h \in \mathbb{R}$.
    The system admits the input-state separable form
    \begin{align*}
        \begin{bmatrix}
            x_1 \\
            x_2 \\
            x_1^2 \\
            1
        \end{bmatrix}^+ \!\! = \begin{bmatrix}
            a & 0 & 0 & b \,u \\
            e \, u & c & d & fu + g \sin(u) + h \\
            2ab \, u & 0 & a^2 & b^2 u^2 \\
            0 & 0 & 0 & 1
        \end{bmatrix} \begin{bmatrix}
            x_1 \\
            x_2 \\
            x_1^2 \\
            1
        \end{bmatrix}.
    \end{align*}
    If additionally $b \!=\! g \!=\! 0$, the system admits the
    bilinear form 
    \begin{align*}
        \begin{bmatrix}
        x_1 \\ x_2 \\ x_1^2 \\ 1
        \end{bmatrix}^+
        \!\!=  
        \begin{bmatrix}
        a & 0 & 0 & 0 
        \\
        0 & c & d & h 
        \\
        0 & 0 & a^2 & 0
        \\
        0 & 0 & 0 & 1
        \end{bmatrix}
        \begin{bmatrix}x_1 \\ x_2 \\ x_1^2 \\ 1 \end{bmatrix}
        +
        \begin{bmatrix}
        0 & 0 & 0 & 0 
        \\
        e & 0 & 0 & f 
        \\
        0 & 0 & 0 & 0
        \\
        0 & 0 & 0 & 0
        \end{bmatrix}
        \begin{bmatrix}x_1 \\ x_2 \\ x_1^2 \\ 1 \end{bmatrix} u.
    \end{align*}
    If in addition $e = 0$, the system admits the linear form 
    \begin{align*}
        \begin{bmatrix}
            x_1 \\ x_2 \\ x_1^2 \\ 1
        \end{bmatrix}^+ = \begin{bmatrix}
            a & 0 & 0 & 0 \\
            0 & c & d & h \\
            0 & 0 & a^2 & 0 \\
            0 & 0 & 0 & 1
        \end{bmatrix} \begin{bmatrix}
            x_1 \\ x_2 \\ x_1^2 \\ 1 
        \end{bmatrix} + \begin{bmatrix}
            0 \\
            f \\
            0 \\
            0
        \end{bmatrix}u
    \end{align*}
     \hfill$\blacksquare$
\end{example}

\noindent\textbf{Koopman invariance for control systems}.
In general, finding a subspace $\mathbb{V}$ that is invariant under the KCF is challenging. In a similar fashion to what we have described when discussing Koopman invariance for autonomous systems, cf. Section~\ref{subsec:Koopman-invariance}, the ability of a model like~\eqref{eq:composition-to-separation} to accurately capture the evolution of the original system~\eqref{eq:controlled-dynamics:DT} hinges upon how close $\mathbb{V}$ is to being invariant under the KCF. 
Again, this can be precisely quantified by resorting to the notion of invariance proximity, as we explain next. To do so, we find it convenient to consider the augmented system
\begin{align}\label{eq:augmented-system}
(x^+,u^+) =  F^{\aug}(x,u) := (F(x,u), u), 
\end{align}
for $(x,u) \in \Omega \times \mathbb{U}$.  Note that
in~\eqref{eq:augmented-system}, $u$ is a part of the state vector and
not an input. To define the Koopman operator associated with this augmented system, we first specify the function space. This is given by $\mathcal{F}^{\aug}$, a linear space of $\mathbb{K}$-valued
functions with domain $\Omega \times \mathbb{U}$, with the following properties
\begin{enumerate}
\item for all  $\upsilon \in \mathcal{F}^{\aug}$, $ \upsilon \circ F^{\aug}$ belongs to  $\mathcal{F}^{\aug}$;
\item for all $\observable \in \mathcal{F}$, $\observable \circ p$ belongs to  $\mathcal{F}^{\aug}$, where $p:\Omega \times \mathbb{U} \rightarrow \Omega$ is the projection onto the first component;
\item for all  $\upsilon \in \mathcal{F}^{\aug}$ and all $u \in \mathbb{U}$, the map $x \mapsto \upsilon (x,u)$ belongs to $\mathcal{F}$.
\end{enumerate}
These properties allow us to relate the observables defined on the state space $\Omega$, suitable for reasoning about~\eqref{eq:controlled-dynamics:DT}, and the ones defined on $\Omega \times \mathbb{U}$, suitable for reasoning about~\eqref{eq:augmented-system}. In fact, we can go from $\mathcal{F}$ to $\mathcal{F}^{\aug}$ thanks to 2), describe the evolution of the resulting observable thanks to 1), and then go back from  $\mathcal{F}^{\aug}$ to $\mathcal{F}$ thanks to 3).

We refer to the Koopman operator corresponding to the system~\eqref{eq:augmented-system} as the \emph{augmented Koopman operator}
$\mathcal{K}^{\aug}: \mathcal{F}^{\aug} \to \mathcal{F}^{\aug}$, defined by
$   \mathcal{K}^{\aug} \phi = \phi \circ F^{\aug}$ for $\phi \in \mathcal{F}^{\aug}$. As we show next, this operator is particularly useful to identify invariant subspaces under the KCF. In doing so, it is convenient to define the notion of normal space. A subspace $\mathbb{W} \subset \mathcal{F}^{\aug}$ of dimension $s=\dim (\mathbb{W})$ is \emph{normal} if it admits a basis in the following form:
\begin{subequations}
     \begin{align}\label{eq:normal}
 \Upsilon(x,u) = \begin{bmatrix} I_{n \times n} \\ G(u) \end{bmatrix} \basisfunvec(x) ,
 \end{align}
where $\basisfunvec: \Omega \rightarrow \mathbb{K}^\nD$, $G: \mathbb{U} \rightarrow \mathbb{K}^{(s-\nD) \times \nD}$, and $\nD \le s$.

One can show~\cite[Theorem 7.3]{HaseCort26:control:family} that if a normal subspace $\mathbb{W}$ is invariant under $\mathcal{K}^{\aug}$, then
\begin{align}\label{eq:assoc-common-inv}
  \mathbb{V} = \operatorname{span} (\basisfunvec)  
\end{align}
\end{subequations}
is a common invariant subspace under the KCF. In fact, since  $\mathbb{W}$ is invariant, we have
\begin{align*}
\Upsilon(x^+,u^+) = \begin{bmatrix}	A_{11} & A_{12} \\ A_{21} & A_{22}\end{bmatrix} \Upsilon(x,u) .
\end{align*}
Using now the fact that $\mathbb{W}$ is normal, we deduce
\begin{align*}
\basisfunvec(x^+) = \big(A_{11} + A_{12} G(u)\big)  \basisfunvec(x) ,
\end{align*}
which is exactly in the input-state separable form~\eqref{eq:composition-to-separation}. 

This discussion sets the basis for dealing with Koopman-based approximations on subspaces $\mathbb{V}$ that are not invariant under the KCF. In fact, assuming $\mathcal{F}^{\aug}$ is equipped with an inner product, we can reproduce our discussion in Section~\ref{subsec:Koopman-invariance}, now for the operator $\mathcal{K}^{\aug}$ instead of $\mathcal{K}$, to precisely quantify the error 
incurred by approximations on non-invariant subspaces. Formally, given a subspace $\mathbb{W} \subset \mathcal{F}^{\aug}$, consider 
\begin{align*}
    \mathcal{K}^{\aug}_{\apprx} := \mathcal{P}_{\mathbb{W}} \mathcal{K}^{\aug}: \mathcal{F}^{\aug} \to \mathbb{W} \subset \mathcal{F}^{\aug} .
\end{align*}
Note that $\mathbb{W}$ is invariant under the operator $ \mathcal{K}^{\aug}_{\apprx}$. The invariance proximity of 
$\mathbb{W}$ is then
\begin{align*}
I_{\mathcal{K}^{\aug}} (\mathbb{W}) = \sup_{\upsilon \in \mathbb{W}, \|\mathcal{K}^{\aug} \upsilon \| \neq 0} \frac{\|\mathcal{K}^{\aug} \upsilon -  \mathcal{K}^{\aug}_{\apprx} \upsilon \|}{\| \mathcal{K}^{\aug} \upsilon \|}.  
\end{align*}
By definition, we have that $\mathbb{W}$ is invariant under $\mathcal{K}^{\aug}$ if and only if $I_{\mathcal{K}^{\aug}} (\mathbb{W})=0$. Invariance proximity measures the worst-case relative error of approximation by projecting the action of $\mathcal{K}^{\aug}$ on~$\mathbb{W}$. Therefore, to obtain good approximate representations of~\eqref{eq:controlled-dynamics:DT}, one should look for normal subspaces $\mathbb{W}$ of $\mathcal{F}^{\aug}$, cf.~\eqref{eq:normal}, with low values of invariance proximity, which yield approximately invariant subspaces under the KCF, cf.~\eqref{eq:assoc-common-inv}, with rigorous accuracy guarantees.

\mysubsec{Koopman operator for control via tensor products}{\secvbpages}
\label{subsec:product:Hilbert}
A third route to handling control inputs, complementary to the
exact LTI embeddings of \Cref{subsec:LTI} and the Koopman control
family of \Cref{subsec:control:family}, has been recently proposed
in \cite{lazar2026product}. Rather than lifting only the state and
treating the input as an exogenous signal, or parameterizing a
family of state-Koopman operators by the input, the input is
endowed with its own Hilbert space of observable functions. Let
$\mathcal{F}_x$ and $\mathcal{F}_u$ denote separable Hilbert spaces
of state and input observables, with Riesz bases
$\{\basisfun_i^x\}_{i \in \mathbb{N}}$ and
$\{\basisfun_j^u\}_{j \in \mathbb{N}}$, respectively. The lifted
observable space is then taken as the tensor product
$\mathcal{F} := \mathcal{F}_x \otimes \mathcal{F}_u$, which is
itself a separable Hilbert space admitting the product Riesz basis
$\{\basisfun_i^x \otimes \basisfun_j^u\}_{i,j \in \mathbb{N}}$ with
$(\basisfun_i^x \otimes \basisfun_j^u)(x,u) = \basisfun_i^x(x)\basisfun_j^u(u)$. On this product space, the action of the controlled dynamics is
captured by the generalized Koopman (GeKo) operator
$\mathcal{K} : \mathcal{F}_x \to \mathcal{F}$, defined by
$(\mathcal{K}\observable^x)(x,u) = \observable^x(F(x,u))$ for
$\observable^x \in \mathcal{F}_x$. Expanding the propagated state observables in the product basis and
collecting coefficients yields the \emph{exact bilinear
Koopman system}
\begin{equation}\label{eq:GeKo:bilinear}
    z^+ = K \, (z \otimes v),
\end{equation}
where $z = (\basisfun_i^x(x))_{i \in \mathbb{N}}$ is the lifted state,
$v = (\basisfun_j^u(u))_{j \in \mathbb{N}}$ is the lifted input, and
$K$ is the matrix representation on $\ell^2$ of the GeKo operator
$\mathcal{K}$ in the product basis. Compared with other Koopman representations
for systems with input, the GeKo approach yields an exact
infinite-dimensional bilinear Koopman model in the lifted state and
input space, which can be further exploited to derive a
\emph{nonlinear fundamental lemma} and to analyze operator convergence via Galerkin projections, see \cite{lazar2026product} for more details.

\begin{remark}\longthmtitle{(Bi)Linear forms as finite-$\mathcal{F}_u$ truncations}\label{rem:GeKo:linbilin}
The linear and bilinear Koopman models are recovered from
\eqref{eq:GeKo:bilinear} by truncating $\mathcal{F}_u$ to the
finite input basis $\psi^u(u) = [\,1,\ u_1,\ \dots,\ u_m\,]^\top$,
so that
\begin{equation*}
    z \otimes v
    = \big[\, z^\top,\ u_1 z^\top,\ \dots,\ u_m z^\top \,\big]^\top .
\end{equation*}
Partitioning the GeKo matrix as
$K \!=\! [A \!\mid\! B_1 \!\mid \dots \mid\! B_m]$ gives\looseness=-1
\begin{equation}\label{eq:GeKo:bilinear:rec}
    z^+ = A z + \sum\nolimits_{i=1}^{m} B_i\, z\, u_i .
\end{equation}
When the lifted state contains the constant (unit) observable
$\basisfun_0^x \equiv 1$, i.e.\ $z = [\,1,\ \tilde z^\top\,]^\top$, we obtain
$B_i z u_i = b_i u_i + N_i \tilde z\, u_i$, with $b_i$ the column of $B_i$ acting on the unit observable and
$N_i$ acting on $\tilde z$. Then \eqref{eq:GeKo:bilinear:rec}
becomes the \emph{bilinear} lifted model
\begin{equation*}
    z^+ = A z + B u + \sum\nolimits_{i=1}^{m} N_i\, \tilde z\, u_i ,
    \qquad B := [\,b_1,\ \dots,\ b_m\,] .
\end{equation*}
The \emph{linear} model is the special case in which the input
modes couple only to the constant state observable, i.e.
$N_i = 0$ for all $i$,\looseness=-1
\begin{equation*}
    z^+ = A z + B u .
\end{equation*}
Endowing $\mathcal{F}_u$ with a richer basis, e.g. the Chebyshev
family $\{T_0, \dots, T_{d_u}\}$ of degree $d_u$, yields input
couplings beyond the bilinear case.
\end{remark}

\mysubsec{EDMD with control}{\secvcpages}
\label{subsec:EDMD:control}

Following \cite{KordMezi18}, we consider the linear 
Koopman
predictor with a state dictionary $\{\basisfun_i\}_{i \in [n]}$ and
lifted state $z = \basisfunvec(x) \in \mathbb{R}^n$,
\begin{equation}\label{eq:EDMDc:predictor}
    z^+ = A \, z + B \, u,
    \qquad A \in \mathbb{R}^{n \times n},
    \quad B \in \mathbb{R}^{n \times m}.
\end{equation}
Given data
$\{(x^{(i)}, u^{(i)}, y^{(i)} = F(x^{(i)}, u^{(i)})) \mid i \in [N]\}$, define
\begin{equation*}
    [\basismat_X]_{i,j} = \basisfun_i(x^{(j)}), \quad
    [\basismat_Y]_{i,j} = \basisfun_i(y^{(j)}), \quad
    [U]_{:,j} = u^{(j)}.
\end{equation*}
Analogously to \eqref{eq:LS_EV}, the matrices $(A,B)$ follow from
\begin{equation}\label{eq:EDMDc:LS}
    \begin{split}
    [A \ \ B]
    &= \arg\min\nolimits_{A,B}
      \left\| \basismat_Y - A\,\basismat_X - B\,U \right\|_{\mathrm{Fr}}^2\\
    &= \basismat_Y W^\top (W W^\top)^{\dagger},
\end{split}
\end{equation}
with $W := [\basismat_X^\top \ U^\top]^\top$, recovering the autonomous
estimator of \eqref{eq:LS_EV} when $B = 0$.

When this approach is extended to bilinear Koopman models as in
\cite{BrudFu21,StraScha26:SafEDMD} and~\cite{strasser:schaller:berberich:worthmann:allgower:2025csl} for kernel EDMDc, EDMDc yields the predictor
\begin{equation}\label{eq:EDMDc:bilinear:predictor}
    z^+ = A \, z + B \, u + \sum\nolimits_{k=1}^{m} N_k \, z \, u_k,
\end{equation}
with matrices $A, N_k \in \mathbb{R}^{n \times n}$,
$B \in \mathbb{R}^{n \times m}$, where $u_k$ denotes the {$k$-th}~component of $u \in \mathbb{R}^m$. The bilinear
regressors are conveniently encoded via the column-wise Khatri--Rao
product $\odot$, defined for matrices $P \in \mathbb{R}^{a \times N}$
and $Q \in \mathbb{R}^{b \times N}$ as
$P \odot Q \in \mathbb{R}^{ab \times N}$ with column $j$ given by the
Kronecker product $p_j \otimes q_j$ of the corresponding columns. The
regressor matrix is then
\begin{equation*}
    \Theta := \begin{bmatrix} \basismat_X \\ U \\ U \odot \basismat_X
            \end{bmatrix} \in \mathbb{R}^{n + m + nm \times N},
\end{equation*}
and the parameter blocks $(A, B, N_1, \dots, N_m)$ follow from
\begin{equation}\label{eq:EDMDc:bilinear:LS}
    [A \ \ B \ \ N_1 \ \cdots \ N_m]
    = \basismat_Y \, \Theta^\top (\Theta \Theta^\top)^{\dagger},
\end{equation}
in direct analogy with \eqref{eq:EDMDc:LS}.
\miniskip

\noindent\textbf{EDMDc for the GeKo model}. The GeKo model \eqref{eq:GeKo:bilinear} admits an analogous EDMDc
estimator. Given state and input dictionaries
$\{\basisfun_i^x\}_{i \in [n]}$ and $\{\basisfun_j^u\}_{j \in [p]}$ (e.g., RBFs on
$x$ and $u$), define the lifted state and input
$z = \basisfun_{[n]}^x(x) \in \mathbb{R}^n$,
$v = \basisfun_{[p]}^u(u) \in \mathbb{R}^p$, and the data matrices\looseness=-1
\begin{equation}\label{eq:GeKo:EDMDc:data}
    [Z_X]_{i,j} = \basisfun_i^x(x^{(j)}), ~
    [Z_Y]_{i,j} = \basisfun_i^x(y^{(j)}), ~
    [V]_{j,\ell} = \basisfun_j^u(u^{(\ell)}).
\end{equation}
Truncating \eqref{eq:GeKo:bilinear} to $n,p$ and using the
Khatri--Rao product $\odot$ introduced above, the GeKo matrix
$K \in \mathbb{R}^{n \times np}$ is obtained from\looseness=-1
\begin{equation}\label{eq:GeKo:EDMDc:LS}
    K = Z_Y \, (Z_X \odot V)^\top
        \bigl( (Z_X \odot V)(Z_X \odot V)^\top \bigr)^{\dagger},
\end{equation}
in direct analogy with \eqref{eq:EDMDc:LS} and
\eqref{eq:EDMDc:bilinear:LS}. 
\miniskip

\noindent\textbf{EDMDc for Koopman control family}.
For the Koopman control family of \Cref{subsec:control:family},
EDMDc specializes to the universal \emph{input-state separable} form
\eqref{eq:composition-to-separation},
\begin{equation*}
    \basisfunvec(x^+) = \mathcal{A}(u)\,\basisfunvec(x),
    \qquad \mathcal{A} : \mathbb{U} \to \mathbb{R}^{\nD \times \nD},
\end{equation*}
in which the input enters through a general matrix-valued function
$\mathcal{A}(u)$ rather than affinely or bilinearly. A practical instance follows from the normal-form basis
\eqref{eq:normal}, $\Upsilon(x,u) = [\,I_{\nD}; G(u)\,]\basisfunvec(x)$,
whose top block is the identity and whose lower block $G(u)$ carries
the input dependence. Applying EDMD to the augmented data matrices
$Z = [X; U]$, $Z^+ = [X^+; U]$ yields
$\tilde{A} = \basisfunvec(Z^+)\basisfunvec(Z)^{\dagger} \in \mathbb{R}^{s \times s}$,
whose block decomposition along $(\nD, s-\nD)$ gives
\begin{equation}\label{eq:EDMDc:KCF:predictor}
    \basisfunvec(x^+) = \mathcal{A}(u)\,\basisfunvec(x),
    \qquad
    \mathcal{A}(u) = A_{11} + A_{12}\,G(u),
\end{equation}
with $A_{11} \in \mathbb{R}^{\nD \times \nD}$ and
$A_{12} \in \mathbb{R}^{\nD \times (s-\nD)}$, in agreement with
\Cref{subsec:control:family}. The input-dependent block $G(u)$, together with the state basis
$\basisfunvec$, may be parameterized as neural networks and jointly learned
by minimizing the invariance proximity of the augmented subspace, as
proposed in \cite{HaseCort26:control:family}. 

Alternatively, and as
pursued in \Cref{subsec:EDMDc:implementation}, in this paper we fix a priori both
the state basis and the coupling using kernel or
polynomial dictionaries, i.e., the state basis $\basisfunvec$ is a fixed dictionary
as in \eqref{eq:GeKo:EDMDc:data}, and $G(u)$ couples the input
dictionary to a \emph{subset} of the lifted state coordinates,
yielding a sparse input-state coupling that retains the
input-state separable structure \eqref{eq:EDMDc:KCF:predictor} while
avoiding the non-convex training of the learned parameterization.
For an extension of the error analysis discussed in \Cref{subsec:EDMD_error} and \Cref{subsec:EDMD_error_full}, we refer to the prototypical extensions~\cite{nuske2023finite,BoldPhil25} and~\cite{bevanda2026nonparametric}.

\begin{remark}\longthmtitle{Multi-step Koopman learning overview}\label{rem:multistep:overview}
Data-driven learning of Koopman operators via EDMD typically solves the above-presented one-step regression problems. The advantage of these approaches is that an operator is learned which allows prediction over arbitrary horizons. Alternatively, in order to improve the multi-step prediction error, fitting a Koopman operator using more than a single time step has been
explored in \emph{linear-in-control} setting.
\cite{KordMezi18} uses a multi-step prediction loss to identify the input
matrix of a lifted linear predictor, after the autonomous part has been
fixed. In \cite{DeJongCDC2024} the observables are parameterized as a
neural network and the multi-step output prediction error is minimized
during training; the horizon-stacked prediction matrices of the resulting
linear multi-step Koopman model are then directly refitted by least
squares. Recently, \cite{WuTan2026:BraatzDrgona} develops an alternative
multi-step learning framework for linear Koopman models in which the
condensed horizon-dependent state--input mappings are fitted
\emph{independently} per prediction step, preserving a convex per-step
least-squares structure and supporting parallel computation and
$\ell_1$-regularized dictionary pruning. The common feature of these
approaches is that they trade the recursive Koopman operator structure for
a fixed-horizon stacked predictor: error compounding is avoided by design,
but the resulting model is no longer a single Koopman operator iterable at
arbitrary horizon. Beyond the linear-in-control setting, two approaches that optimize multi-step prediction capabilities for Koopman operators have been developed recently. \cite{lazar2026khatrirao} develops a multi-step optimized regression problem using time-sequenced data for learning a single generalized Koopman operator \cite{lazar2026product}, which is iterable over an arbitrary horizon and yields multi-step regression solutions for both linear and billinear Koopman forms with non-lifted inputs. The recent paper~\cite{MH-JC-JWB:26-cdc} introduces the control consistency index and exploits its underlying geometric invariance properties to learn finite-dimensional representations with strong multi-step prediction capabilities.
\end{remark}

\mysubsec{Implementation and numerical comparison}{\secvcpages}
\label{subsec:EDMDc:implementation}
We compare the four EDMDc variants, linear, bilinear,
GeKo, and KCF, on the same pipeline. The complete Matlab implementation
is available at \url{https://github.com/KOT-tutorial/CDC26}.

\miniskip
\noindent\textbf{Common pipeline.}
For each example we fix a state dictionary
$\basisfun^x : \mathbb{R}^d \to \mathbb{R}^{n}$ and an input dictionary
$\basisfun^u : \mathbb{R} \to \mathbb{R}^{p}$. The state dictionary is
selected from Gaussian RBFs, rational quadratic, Mat{\'e}rn-$5/2$ or
multivariate polynomials; in the kernel cases, centers
$\{c_i\}_{i \in [n]}$ are placed by $k$-means on the state data and
the lifted features are scale-normalized as
$\basisfun_i^x(x) = \sqrt{2/n}\,k(x, c_i)$. The input dictionary is taken as the Chebyshev family
$\{T_0, T_1, \dots, T_{d_u}\}$ of degree $d_u$ on the pre-scaled input
$S_u u \in [-1, 1]$, the natural domain on which the Chebyshev
features stay well-conditioned independently of the physical input
range.
\miniskip

\noindent\textbf{Multi-step regression.}
The standard EDMDc estimators of
\eqref{eq:EDMDc:LS}, \eqref{eq:EDMDc:bilinear:LS},
\eqref{eq:GeKo:EDMDc:LS}, and \eqref{eq:EDMDc:KCF:predictor}
minimize the one-step-ahead prediction error, which can propagate
unfavorably when the identified models are deployed over long
horizons, e.g., within MPC. To mitigate multi-step error propagation
while still using kernel regression, we borrow the idea from
\cite{lazar2026khatrirao} and fit the four models on \emph{multi-step} training pairs: from each training trajectory we extract sliding windows of
horizon $H_t$, select a diverse subset of $M$ windows by $k$-means on the stacked window descriptor $[\,z^{(0)};\, z^{(H/2)};\, z^{(H)}\,]$ (start,
midpoint, and end of the lifted window). Then we keep the window nearest
to each cluster centre, and unfold each window into $H$ one-step pairs
$\{(z^{(k)}, u^{(k)}, z^{(k+1)})\}_{k=0}^{H_t-1}$. The resulting $N = H M$ pairs are stacked into $(Z_X, U, Z_Y)$ and used to
solve \eqref{eq:EDMDc:LS}, \eqref{eq:EDMDc:bilinear:LS},
\eqref{eq:GeKo:EDMDc:LS}, and \eqref{eq:EDMDc:KCF:predictor} with
Tikhonov regularization $\gamma I$. For the comparison against
standard EDMD reported in Tables~\ref{tab:duffing:poly} and
\ref{tab:dcmotor:results}, the one-step baseline is fit on a random
subsample of the same size $N = H_t M$ drawn uniformly from all available
one-step transitions in the training trajectories, so the two
regimes differ only in the regression objective and not in the
amount of data. For the KCF estimator, the constant Chebyshev mode
$T_0 \equiv 1$ makes $z$ appear both as the $A_{11}$ block and as
the $T_0$ slice of $v \otimes z$; we drop $T_0$ from the bilinear
block so that $A_{11}$ uniquely carries the autonomous channel. A linear decoder
\begin{equation}
\label{eq:decoder}
D = X Z_X^\top (Z_X Z_X^\top + \gamma I)^{-1}
\end{equation}
maps predicted lifted states back to $\mathbb{R}^d$ for trajectory and error plots.

\begin{remark}\longthmtitle{Prepending the state}
\label{rem:prepend}
When the state dictionary does not already contain the coordinate
functions, one may prepend $[1, x^\top]^\top$ to the lift, so that
the original state occupies known rows of $z$. The decoder
\eqref{eq:decoder} then reduces to a fixed projection onto those
rows, shared by all four forms, rather than a regressed map. Although our KOT software toolbox implements both options, below we report the results that use the plain regressed decoder \eqref{eq:decoder} for brevity.
\end{remark}
\miniskip
\noindent\textbf{Representative system 1: forced Duffing oscillator}. We consider an 
oscillatory system with mulitple fixed points, i.e., \looseness=-1
\begin{align*}
\dot{x}_1 = x_2,
&&\dot{x}_2 = -\delta x_2 - \alpha x_1 - \beta x_1^3 + u,
\end{align*}
with $(\delta, \alpha, \beta) \!=\! (0.5, -1, 1)$, $T_s \!=\! 0.05$\,s, and
$|u| \le 3$. 
Training: $L \!=\! 6$ multi-sine trajectories of length
$T_{\mathrm{long}} \!=\! 2000$, with initial conditions balanced across
both wells and the saddle; $M \!=\! 800$ multi-step windows of horizon
$H_t \!=\! 20$; $\gamma \!=\! 10^{-4}$. State dictionary: degree-$4$
multivariate polynomial ($n \!=\! 15$); input dictionary: Chebyshev
$T_0,\dots,T_3$ ($p \!=\! 4$).\looseness=-1
\miniskip

\begin{figure}[t]
\centering
\includegraphics[width=0.75\columnwidth]{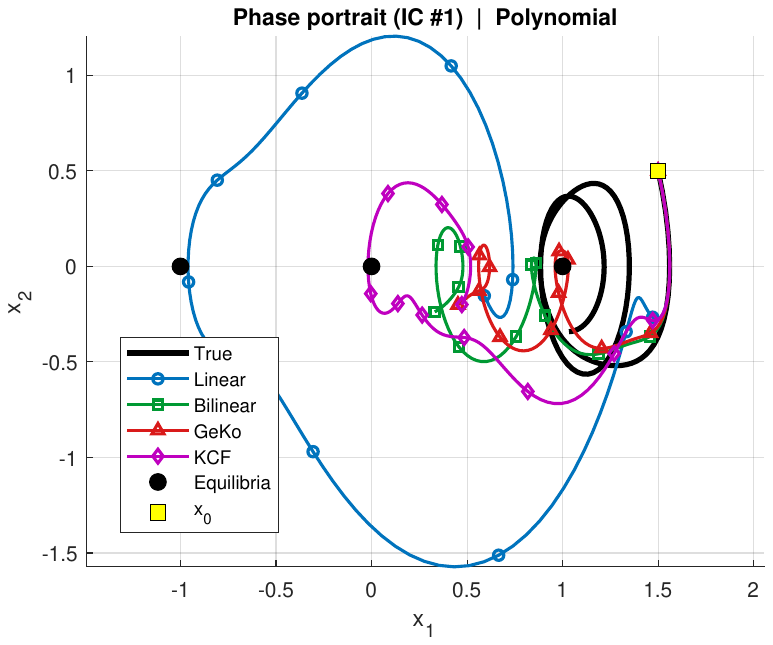}
\vspace{-0.35cm}
\caption{Phase trajectory plot, Duffing.}
\label{fig:duf:phase}
\end{figure}

\begin{figure}[t]
\centering
\includegraphics[width=0.8\columnwidth]{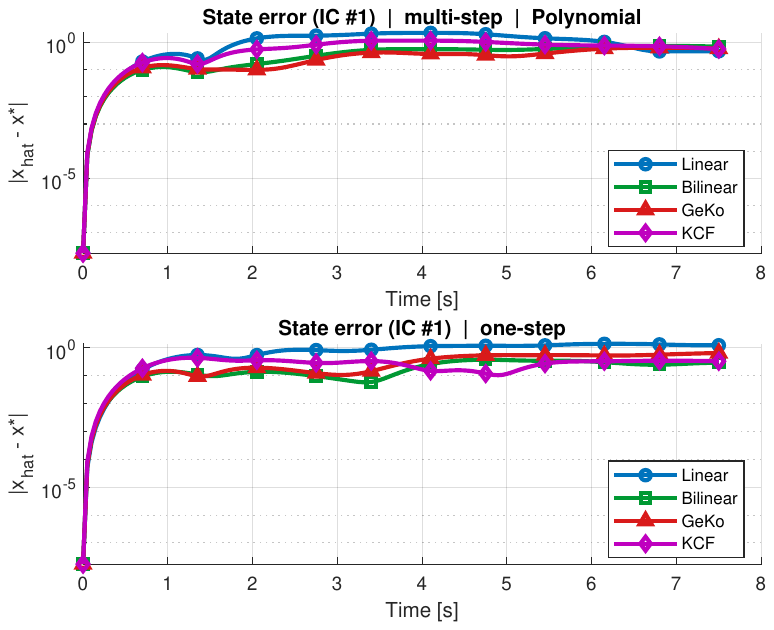}
\vspace{-0.35cm}
\caption{State error rollout, Duffing.}
\label{fig:duf:error}
\end{figure}

\noindent\textbf{Representative System~2: DC motor with input nonlinearity}.
Next, we consider the benchmark example of \cite[Ex.~9.1]{HaseCort26:control:family}:
\begin{equation*}
\begin{split}
\dot{x}_1 &= -(R_a/L_a) x_1 - (k_m/L_a) x_2 f(u) + u_a/L_a,\\
\dot{x}_2 &= -(B/J) x_2 + (k_m/J) x_1 f(u) - \tau_l/J,
\end{split}
\end{equation*}
with $f(u) = 2\tanh(u\cos u)$ (non-monotone, hardest case in
\cite{HaseCort26:control:family}), $T_s = 5$\,ms, and $|u| \le 4$.
Training as in Example~1 with the same $(L, H_t, M, \gamma)$, except
$\deg\basisfun^x = 3$ ($n = 10$) and a richer input dictionary
$T_0,\dots,T_8$ ($p = 9$), required by the strongly input
nonlinearity $f(u)$. State and input are pre-scaled by
$S_x = \mathrm{diag}(0.1, 0.004)$ and $S_u = 0.25$ to bring all
quantities to comparable magnitudes.

\begin{figure}[htb]
\centering
\includegraphics[width=0.75\columnwidth]{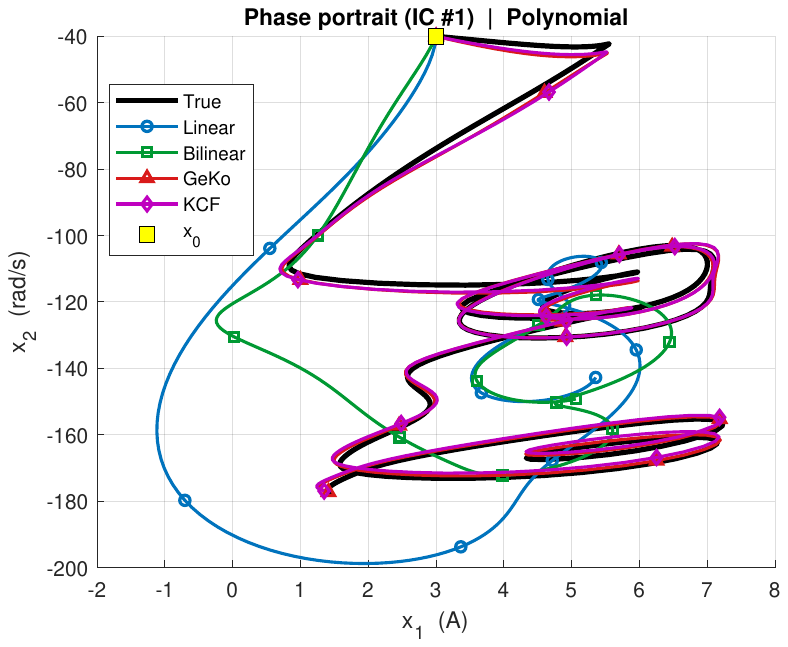}
\vspace{-0.35cm}
\caption{Phase trajectory plot, DC motor.}
\label{fig:dc:phase}
\end{figure}

\begin{figure}[htb]
\centering
\includegraphics[width=0.8\columnwidth]{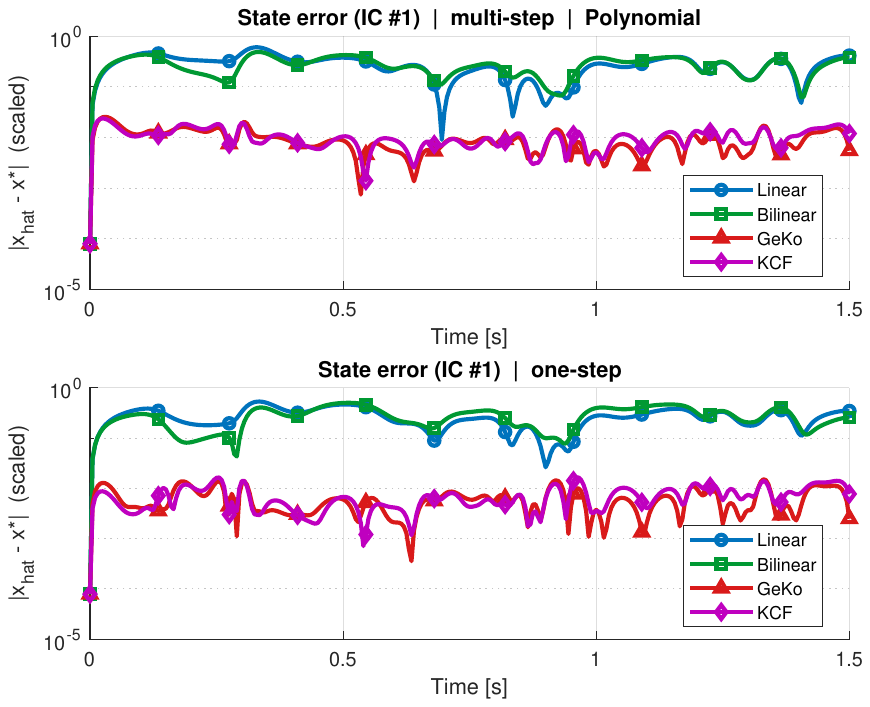}
\vspace{-0.35cm}
\caption{State error rollout, DC motor.}
\label{fig:dc:error}
\end{figure}

\noindent\textbf{Training and evaluation setting}.
We fit each form with one-step and multi-step regression: one-step
pairs are subsampled uniformly from all training transitions, and
multi-step pairs are extracted from the selected sliding windows
(\,$H_t = 20$, $\gamma = 10^{-4}$\,), matched in number so the two
regimes differ only in the regression objective. Over
$N_{\mathrm{test}} = 10$ test initial conditions we report the
median, mean, and worst per-trajectory Euclidean state error,
$\operatorname{mean}_t \|\hat{x}_t - x^*_t\|$. The total operator
Frobenius norm is $\sqrt{\|A\|^2 + \|B\|^2}$ (Linear),
$\sqrt{\|A\|^2 + \|B\|^2 + \|N\|^2}$ (Bilinear), $\|K\|$ (GeKo), and
$\sqrt{\|A_{11}\|^2 + \|A_{12}\|^2}$ (KCF).

\begin{table}[htb]
\centering
\caption{Forced Duffing oscillator.  KCF couples the input to $6$ of the $15$ lift features
($s = 33$). Multi-step open-loop state error over $N_{\mathrm{test}}
= 10$ ICs, $T_{\mathrm{test}} = 150$.}
\vspace{-0.2cm}
\label{tab:duffing:poly}
\setlength{\tabcolsep}{4pt}
\footnotesize
\begin{tabular}{@{}l ccc c@{}}
\toprule
& \multicolumn{3}{c}{MAE $\|\hat{x} - x^*\|$ (multi-step)}
& Op.\ norm \\
\cmidrule(lr){2-4}\cmidrule(lr){5-5}
Form & median & mean & worst & $\|\cdot\|_F$ \\
\midrule
Linear   & 1.06 & 1.16 & 2.11 & 4.09 \\
Bilinear & 0.52 & 0.66 & 1.62 & 4.26 \\
GeKo     & 0.55 & 0.83 & 2.98 & 4.78 \\
KCF      & 0.74 & 0.74 & 1.05 & 4.31 \\
\bottomrule
\end{tabular}
\end{table}

\begin{example}[Forced Duffing oscillator]
On the Duffing oscillator the input enters linearly, so a degree-$4$
polynomial state dictionary already captures the dynamics and the
typical-case accuracy is best for the forms with the least
input-coupling overhead: Bilinear and GeKo attain the lowest median
errors ($0.52$ and $0.55$). The ranking inverts on the worst case.
Initial condition~$9$ (see the GitHub repository) is hard for every form, i.e., it is the largest
error for all four, but the amount each degrades there tracks how
much input--state coupling it carries: KCF, which couples the input
to only $6$ of the $15$ lift features, is the most robust ($0.74$
median, $1.05$ worst), whereas the fully coupled GeKo is the least
($0.55$ median but $2.98$ worst), with Linear and Bilinear in
between. The effect is not removed by regularization or by changing
the number of training windows; it reflects the high variance of the
unconstrained coupling on an out-of-distribution initial condition.
KCF also retains the smallest operator of the input-lifted forms
($4.31$ vs.\ GeKo's $4.78$). A coupling sweep confirms an interior
optimum: coupling to the degree-$\le 2$ monomials ($q=2$,
$6$ features) outperforms both the leanest ($q=1$) and the
full-coupling ($q=4$) budgets.\looseness=-1 \hfill$\blacksquare$
\end{example}

\begin{table}[htb]
\centering
\caption{DC motor with non-monotone input nonlinearity. 
KCF couples the input to $6$ of the $10$ lift features ($s = 58$).
Multi-step open-loop error in scaled state coordinates over
$N_{\mathrm{test}} = 10$ ICs, $T_{\mathrm{test}} = 300$.}
\vspace{-0.2cm}
\label{tab:dcmotor:results}
\setlength{\tabcolsep}{4pt}
\footnotesize
\begin{tabular}{@{}l ccc c@{}}
\toprule
& \multicolumn{3}{c}{MAE $\|\hat{x} - x^*\|$ (multi-step)}
& Op.\ norm \\
\cmidrule(lr){2-4}\cmidrule(lr){5-5}
Form & median & mean & worst & $\|\cdot\|_F$ \\
\midrule
Linear   & 0.276  & 0.292  & 0.425  & 3.20 \\
Bilinear & 0.272  & 0.288  & 0.316  & 4.07 \\
GeKo     & 0.0106 & 0.0114 & 0.0195 & 2.77 \\
KCF      & 0.0122 & 0.0137 & 0.0242 & 2.84 \\
\bottomrule
\end{tabular}
\end{table}

\begin{example}[DC motor with input nonlinearity]
On the DC motor the input enters through the non-monotone
nonlinearity $f(u) = 2\tanh(u\cos u)$, which violates the
control-affine assumption underlying the Linear and Bilinear forms.
Both fail to capture the dynamics (median error $\approx 0.27$ in
scaled coordinates), confirming the benchmark observation of
\cite[Fig.~3]{HaseCort26:control:family}: a larger operator does not
help Bilinear here, since the deficiency is structural. The
input-lifted forms recover the dynamics, reducing the error by more
than an order of magnitude, i.e., GeKo and KCF reach median errors of
$0.011$ and $0.012$ and, unlike the Duffing, the errors are
uniform across initial conditions, with no tail (worst-case $0.020$
and $0.024$). Here GeKo and KCF are essentially tied, and the
coupling budget is immaterial: every nonzero budget in the sweep
reaches the same $\sim\!10^{-2}$ noise floor of the polynomial state
dictionary, so coupling the input to $6$ of the $10$ features is
already sufficient to capture the input nonlinearity. KCF attains
this with a slightly smaller operator norm than the affine Bilinear form
($2.84$ vs.\ $4.07$) despite modeling a genuinely input-nonlinear
map.\looseness=-1 \hfill$\blacksquare$
\end{example}

\begin{remark}[KCF versus GeKo subtleties]
\label{rem:kcf:vs:geko}
The Koopman control family (KCF) and the generalized Koopman (GeKo)
operator are structurally different routes to a Koopman description of
control systems, yet both arrive at a similar finite-dimensional
\emph{bilinear} form, i.e., \eqref{eq:EDMDc:KCF:predictor} and
\eqref{eq:GeKo:bilinear}, respectively,  despite originating from
(families of) operators acting on different function spaces. What separates them
theoretically is the underlying Koopman-invariance requirement: KCF
presumes a \emph{common} invariant subspace shared by the whole
input-parameterized family $\{\mathcal{K}_u\}_{u \in \mathbb{U}}$
\cite{HaseCort26:control:family}, whereas GeKo works with the single
composition operator on the product space $\mathcal{F}_x \otimes
\mathcal{F}_u$ under a relaxed invariance condition
\cite{lazar2026product}.

Once a finite-dimensional parameterization is fixed, learning either
operator from data is a regression problem that may be solved with
\emph{full} or \emph{sparse} input-state coupling; the sparsity pattern
can be guided by different scores, e.g.\ the multi-step roll-out
prediction error, the consistency index of \cite{HaseCort26:control:family} (see \Cref{rem:consistency_index}), or plain sparse least
squares. Throughout this tutorial paper we have adopted a fixed-order (tunable by the user) sparse KCF implementation to contrast with the full-coupling GeKo implementation and observe different performances. Via allowing full coupling  for KCF or imposing sparsity for GeKo, the two finite-dimensional Koopman forms can be rendered structurally identical. The GeKo operator yields a bounded on $\ell^2$ Riesz-basis  coordinates representation, which
enables Galerkin (finite-section) convergence analysis, a useful starting point for deriving tight approximation-error bounds \cite{lazar2026product}. Sharp, closed-form computable error bounds for finite-dimensional KCF models can be found in~\cite{MH-JC-JWB:26-cdc,HaseCort26:control:family}.
\looseness=-1
\end{remark}

\mysubsec{Koopman observers}{\secvdpages}
\label{subsec:observer}
While Sections~\ref{subsec:LTI}--\ref{subsec:EDMD:control} focus on
the use of Koopman models for prediction and control, the same
operator-theoretic viewpoint enables \emph{state estimation} for
nonlinear systems via linear infinite-dimensional observers. Starting
from the autonomous Koopman semigroup $(\mathcal{K}^t)_{t \ge 0}$ of
\Cref{sec:KOT} with output $y(t) = h(x(t; x_0))$, a natural idea
is to design a Luenberger-type observer on the lifted space and
recover the physical state by reading off the principal eigenfunction
coordinates.

This idea was made precise for discrete-time autonomous systems in
\cite{SuraBana16:observer, Sura16:observer} through the
\emph{Koopman observer form} (KOF). One selects a finite set of
Koopman eigenfunctions $\{\eigfun_i\}_{i\in[n]}$ whose span contains the
state and output observables, so that, by the eigenfunction property
$\mathcal{K}\eigfun_i = \lambda_i \eigfun_i$, the lifted coordinates
$\xi = (\eigfun_1, \dots, \eigfun_n)^\top$ evolve linearly,
\begin{equation}\label{eq:KOF}
    \xi^+ = \Lambda\, \xi, \qquad
    x = M_x\, \xi, \qquad
    y = M_y\, \xi,
\end{equation}
with $\Lambda = \mathrm{diag}(\lambda_1, \dots, \lambda_n)$ and
constant Koopman-mode matrices $M_x \in \mathbb{R}^{d \times n}$,
$M_y$ mapping the eigenfunction coordinates back to the state and
output. Equation~\eqref{eq:KOF} is a linear time-invariant system in
$\xi$, so a Luenberger/Kalman observer can be placed directly on it,
$\hat\xi^+ = \Lambda \hat\xi + L\,(y - M_y \hat\xi)$, and the state
estimate recovered as $\hat{x} = M_x \hat\xi$. The estimation error
$\xi - \hat\xi$ is governed by $\Lambda - L M_y$, so the gain $L$
follows from standard linear design; convergence holds under an
observability condition expressed directly in terms of the chosen
Koopman eigenvalues $\lambda_i$ and modes $M_y$, giving the first
link between the Koopman spectrum and observer synthesis. The
construction is exact wherever the selected eigenfunctions span the
state and output, often a large part of the basin of attraction or
the whole state space. Observability and detectability in this
framework have since been characterized in terms of eigenfunction
symmetries \cite{MesbBu21:observability}, of the Koopman Gramian
\cite{YeunLiu17:gramian}, and through data-driven constructions
\cite{DahdForb24:observer}.

A central limitation of these approaches is the standing assumption
that the output lies exactly in a finite-dimensional invariant
subspace, which is rarely satisfied and is in general only
approximately true. The recent work in
\cite{MoheMaurWinki25:dual} removes this assumption by reformulating
the estimation problem in the \emph{dual} Koopman system: rather than
propagating observables forward by $\mathcal{K}^t$, one propagates
reproducing-kernel functions $k_{x_0} \in \mathbb{H}$ backward by
$(\mathcal{K}^t)^\ast$, with $\mathbb{H}$ a reproducing kernel Hilbert space
(specifically the Hardy space on the polydisc $\mathbb{D}^d$ in the
setting of \cite{MoheMaurWinki25:dual}). The state estimate is then
recovered via the inner product with the coordinate functions
$p_k(x) = x_k$,\looseness=-1
\begin{equation*}
    \hat{x}_k(t) = \innerprod{p_k}{\hat{f}(t)},
    \qquad k = 1, \dots, d.
\end{equation*}
Under standard non-resonance and stability conditions at a
hyperbolic equilibrium, \cite{MoheMaurWinki25:dual} establishes that
(i) pointwise approximate observability of the dual Koopman system
is equivalent to observability of the underlying nonlinear system,
and (ii) a Luenberger observer placed on the $\beta$-unstable
spectral block of $\mathcal{L}^\ast$ delivers exponential
convergence of the estimation error at an arbitrarily prescribed
rate $\beta < 0$. The design reduces to a finite-dimensional pole
placement on the $N_\beta$ slowest eigenvalues of the truncated
operator matrix, exactly the setting where standard linear tools
apply.

\mysec{Controller design}{\secvipages}
\label{sec:controller}

In this section, we recap key concepts for controller design using Koopman operator theory with closed-loop guarantees, while referring to the recent overview article~\cite{StraWort26} and the references therein for a more detailed elaboration. 
Hereby, we focus on robust controller design and model predictive control, where data-driven surrogate models governed by
\begin{equation}\label{eq:controlled-dynamics:F-eps}
    x^+ = F^\varepsilon(x,u)
\end{equation}
are used, which satisfy a proportional error bound w.r.t.\ the original system dynamics $x^+ = F(x,u)$ given by~\eqref{eq:controlled-dynamics:DT} (\textit{ground truth}), i.e., 
\begin{equation*}
    \| F(x,u) - F^\varepsilon(x,u) \| \leq c_x \| x \| + c_u \| u \|
\end{equation*}
on convex and compact sets $\mathbb{X} \subseteq \mathbb{R}^n$ and $\mathbb{U} \subseteq \mathbb{R}^m$ containing the origins in their interiors, see, e.g., \cite{strasser:schaller:worthmann:berberich:allgower:2025tacon} and~\cite{schimperna2025data} for kernel-based Koopman approximants.
To this end, we briefly look at Lyapunov arguments as typically used in the stability analysis of autonomous dynamical systems, which will serve as a blueprint for the upcoming developments, see~\cite[Section~3]{BoldPhil25} for details. 
The decisive condition to infer asymptotic
stability is a Lyapunov decrease: we consider the function~$V: \mathbb{R}^d \rightarrow \mathbb{R}$ satisfying
\begin{equation}\nonumber
    \alpha_1(\|x\|) 
    \leq V(x) \leq 
    \alpha_2(\|x\|)
\end{equation}
for $\mathcal{K}_\infty$-functions $\alpha_1,\alpha_2$.\footnote{A continuous function $\alpha: [0,\infty) \rightarrow [0,\infty)$ is said to be of class $\mathcal{K}_\infty$, if $\alpha$ is strictly monotonically increasing, unbounded, and zero at zero.} 
Moreover, we assume that $V$ satisfies the decrease condition
\begin{equation}\label{eq:Lyapunov:F-eps}
    V(F^\varepsilon(x,\mu^\varepsilon(x)) \leq V(x) - \alpha_3(\| x \|)
\end{equation}
w.r.t.\ the closed-loop dynamics $x^+ \!=\! F^\varepsilon_{\mu^\varepsilon}(x) \!:=\! F^\varepsilon(x,\mu^\varepsilon(x))$
resulting from applying the static state feedback law $\mu\!:\! \mathbb{X} \!\rightarrow\! \mathbb{U}$ to the (data-driven) surrogate model~\eqref{eq:controlled-dynamics:F-eps}. 
However, we require a decrease w.r.t.\ the closed-loop dynamics
\begin{equation}\label{eq:cl-dynamics:DT}
    x^+ = F_{\mu^\varepsilon}(x) := F(x,\mu^\varepsilon(x)),
\end{equation}
i.e., the dynamics of the original nonlinear system~\eqref{eq:controlled-dynamics:DT} controlled by the feedback law $u = \mu^\varepsilon(x)$ designed for the surrogate model~\eqref{eq:controlled-dynamics:F-eps}.
Hence, making use of the Lyapunov decrease~\eqref{eq:Lyapunov:F-eps}, we estimate
\begin{align*}\nonumber
    V(F_{\mu^\varepsilon}(x)) =\ & V(F^\varepsilon_{\mu^\varepsilon}(x)) + \Big(V(F_{\mu^\varepsilon}(x)) - V(F^\varepsilon_{\mu^\varepsilon}(x)) \Big)  \\ 
    \leq\ & V(x) - \alpha_3(\|x\|) + |V(F^\varepsilon_{\mu^\varepsilon}(x)) - V(F_{\mu^\varepsilon}(x))| \nonumber
\end{align*}
In conclusion, the ramification $|V(F^\varepsilon_{\mu^\varepsilon}(x)) - V(F_{\mu^\varepsilon}(x))|$ of the approximation error measured w.r.t.\ the Lyapunov function~$V$ has to be compensated by the Lyapunov decrease, i.e., $\alpha_3(\|x\|)$. Let, e.g., the inequality
\[
    \rho \cdot 
    \alpha_3(\|x\|) \geq |V(F^\varepsilon_{\mu^\varepsilon}(x)) - V(F_{\mu^\varepsilon}(x))| \qquad\forall\,x \in \mathbb{X}
\]
with some factor $\rho \in (0,1)$ hold, then we may infer asymptotic stability of the origin w.r.t.\ the closed-loop dynamics~\eqref{eq:cl-dynamics:DT}, see e.g., \cite{BoldPhil25} 
for autonomous systems and~\cite{schimperna2025data} for controlled systems.\looseness=-1

\subsection{Closed-form control laws}
\label{subsec:closed-form_control}

There exists a plethora of Koopman-based control schemes, see, e.g., the recent overviews~\cite{duran2025control,StraWort26} and the references therein. 
Thus, we only discuss a few representative works in this subsection to illustrate the broad applicability of Koopman-based models for controller design, but focus our more detailed explanations on Koopman MPC in the subsequent subsections. LTI models as discussed in \Cref{subsec:LTI} lend themselves naturally to linear design techniques such as LQR \cite{bevanda2022towards}. Bilinear surrogate models, as discussed in Sections \ref{subsec:control:family} and \ref{subsec:product:Hilbert}, can be used in combination with robust control using linear matrix inequalities (LMIs) and semi-definite programming (SDP) for controller design~\cite{strasser:schaller:worthmann:berberich:allgower:2025tacon,StraScha26:SafEDMD,strasser:berberich:schaller:worthmann:allgower:2025at}. 
Leveraging extensions of the learning error analysis discussed in \Cref{subsec:EDMD_error_full}, these approaches enable the derivation of rigorous closed-loop guarantees, see also the recent overview~\cite{StraWort26}. 
Recent work highlights the impact of the dimensionality of bilinear surrogate models for control design underlining the significance of low-dimensional Koopman-invariant models \cite{hanna26:existence}. 
Alternative approaches rely on the integration of feedback linearization into learning a Koopman operator approximation \cite{gadginmath2024data}, 
contraction-theoretic approaches~\cite{yi2023equivalence} or approximately solving the stochastic Hamilton-Jacobi-Bellman equations~\cite{bevanda2025kernel,hoischen2024data}, see also~\cite{vaidya2025koopman} for results for deterministic systems linking HJB and Koopman theory.

\subsection{Model predictive control (MPC)}
\label{subsec:MPC:theory}

We focus on the prototypical problem of set-point stabilization, which we shift w.l.o.g.\ to the origin, i.e., $F(0,0) = 0$, where $(0,0)$ is assumed to be contained in the interior of the control and state constraint sets $\mathbb{U} \subseteq \mathbb{R}^m$ and $\mathbb{X} \subseteq \mathbb{R}^d$.
Then, using the (data-driven) surrogate model~\eqref{eq:controlled-dynamics:F-eps} in the optimization step, the MPC scheme is provided in \Cref{alg:MPC}.\looseness=-1 
\begin{algorithm}[htb]
	\caption{Model Predictive Control Scheme}
	\begin{algorithmic}[1]
		\label{alg:MPC}
        \STATE \textbf{Initialisation}. Set prediction horizon~$H$, $H \in \mathbb{N}_{\geq 2}$
        \FOR{$k=0,\ldots,\infty$}
        \STATE Measure (or estimate) 
        current state $\hat{x} := x(k)$
        \STATE Minimize 
        cost functional 
            \[
                J_H(\hat{x},\bar{\mathbf{u}}) = \sum\nolimits_{i=0}^{H-1} \ell(\bar{x}_{\mathbf{u}}(i;\hat{x}),\bar{u}(i)) + V_f(\bar{x}_{\mathbf{u}}(H;\hat{x}))
            \]
            subject to $\bar{\mathbf{u}} = (\bar{u}(i))_{i=0}^{H\!-\!1}$
			\begin{itemize}
				\item [$\bullet$] \hspace{-0.4cm} 
                dynamics~\eqref{eq:controlled-dynamics:F-eps} with initial value $\bar{x}_{\mathbf{u}}(0;\hat{x}) = \hat{x}$
                    \[
                        \bar{x}_{\mathbf{u}}(i+1;\hat{x}) = F^\varepsilon(\bar{x}_{\mathbf{u}}(i;\hat{x}),\bar{u}(i))
                    \]
				\item [$\bullet$] \hspace{-0.4cm} state constraints $\bar{x}(i+1) \in \mathbb{X}$, $i \in \{0,1,\ldots,H-1\}$
                \item [$\bullet$] \hspace{-0.4cm} control constraints $\bar{u}(i) \in \mathbb{U}$, $i \in \{0,1,\ldots,H-1\}$
			\end{itemize}	
			to compute 
            optimal control sequence $\bar{\mathbf{u}}^\star = (\bar{u}^\star(i))_{i=0}^{H\!-\!1}$%
        \STATE Apply first input  
        $\bar{u}^\star(0) \in \mathbb{U}$ at 
        plant governed by~\eqref{eq:controlled-dynamics:DT}
        \ENDFOR
	\end{algorithmic}
\end{algorithm}

\Cref{alg:MPC} yields an implicitly defined static \textit{state feedback law} $\mu = \mu_{H}^\varepsilon: \mathbb{X} \rightarrow \mathbb{U}$ by $\mu_H^\varepsilon(x(k)) := \bar{u}^\star(0)$, which depends on the horizon length~$H$ and the employed model~\eqref{eq:controlled-dynamics:F-eps}, while the resulting MPC closed loop is governed by the dynamics
\begin{equation}\label{eq:dynamics:MPC}
    x_{\mu_H^\varepsilon}(k+1) = F(x_{\mu_H^\varepsilon}(k),\mu_H^\varepsilon(x_{\mu_H^\varepsilon}(k))),\quad x_{\mu_H^\varepsilon}(0) = \hat{x}. \nonumber
\end{equation}
Further, we define the (optimal) value function $V_H: \mathbb{X} \rightarrow [0,\infty]$ of the optimal control problem solved in each MPC iteration (Step~2 of \Cref{alg:MPC}) by
\begin{equation*}
    V_H^\varepsilon := \inf\nolimits_{\textbf{u} \in \mathcal{U}_H^\varepsilon(x)} J_H^\varepsilon(x,\mathbf{u}),
\end{equation*}
where the (state-dependent) set~$\mathcal{U}_H^\varepsilon(x)$ denotes the set of \textit{admissible} control sequences, i.e., control sequences $\mathbf{u} \in \mathbb{U}^{H}$ such that the state trajectory emanating from the initial value~$x$ and governed by the dynamics~\eqref{eq:controlled-dynamics:F-eps} satisfies the state constraints, i.e., $x_{\mathbf{u}}(k;x) \in \mathbb{X}$ for all $k \in \{0,1,\ldots,H\}$. 
Moreover, we define the infinite-horizon (closed-loop) costs
\[
    J_\infty^{\mu^\varepsilon_H}(\hat{x}) = \sum\nolimits_{k=0}^\infty \ell(x_{\mu_H^\varepsilon}(k;\hat{x}),\mu_H^\varepsilon(x_{\mu_H^\varepsilon}(k;\hat{x}))).
\]
Note that the optimal control problem in Step~2 of \Cref{alg:MPC} is solved based on the surrogate model~\eqref{eq:controlled-dynamics:F-eps}, while the computed input $\mu_H^\varepsilon(x(k)$ at time $k \in \mathbb{N}_0$ is applied at the plant, which is governed by the original dynamics~\eqref{eq:controlled-dynamics:DT}.
Hence, some robustness to model-plant mismatch is indispensable to successfully apply Koopman MPC, see~\cite{allan2017inherent} and \cite{BoldGrun25,schimperna2025data,moldenhauer2026discounted} as well as~\cite{DeJongCDC2024,  kuntz_beyond_2026,schimperna2025stability} for Koopman MPC with terminal conditions.

A prevalent choice are quadratic stage cost
\begin{equation}\label{eq:stage-cost:quadratic}
    \ell(x,u) := \| x \|_Q^2 + \| u \|_R^2 := x^\top Q x + u^\top R u
\end{equation}
with positive definite matrices $Q \in \mathbb{R}^{d \times d}$ and $R \in \mathbb{R}^{m \times m}$ in combination with linear surrogate models~$F^\varepsilon$ given by~\eqref{eq:controlled-dynamics:F-eps}, see, e.g., \cite{KordMezi18} and the many follow-up works. 
Using linear surrogate models is particularly attractive due to rendering the optimal control problem in Step~2 of \Cref{alg:MPC} computationally tractable even for large-scale models~$F^\varepsilon$ as shown, e.g., in~\cite{WuTan2026:BraatzDrgona}.
In addition, tailored parametric Koopman decompositions may be leveraged to increase the computational efficiency.

For the stability analysis of the set point w.r.t.\ the MPC closed-loop dynamics~\eqref{eq:dynamics:MPC}, the relaxed Lyapunov inequality
\begin{equation*}
    V_H^\varepsilon(F(x,\mu_H^\varepsilon(x))) \leq V_H^\varepsilon(x) - \alpha \ell(x,\mu_H^\varepsilon(x)) 
\end{equation*}
with $\alpha = \alpha_H^\varepsilon \in (0,1]$ provides a verifiable sufficient condition for asymptotic and/or exponential stability of the MPC closed loop, see, e.g., \cite{GrunRant08,grune_nonlinear_2017} and~\cite{lincoln2006relaxing} for relaxed dynamic programming.
To this end, either MPC with terminal conditions or MPC without (stabilizing) terminal conditions
are used. 

\textit{Terminal conditions} consist of a terminal region $\mathbb{X}_f$, $\mathbb{X}_f \subseteq \mathbb{X}$, and a suitably designed terminal cost $V_f:\mathbb{X}_f \rightarrow \mathbb{R}_{\geq 0}$ and controller $\mu_f: \mathbb{X}_f \rightarrow \mathbb{U}$ such that the following two properties hold for all $x \in \mathbb{X}_f$:
\begin{itemize}
    \item forward invariance of the terminal region, i.e., 
        $
            F(x,\mu_f(x)) \in \mathbb{X}_f$ and $\mu_f(x) \in \mathbb{U}
        $
    \item decrease condition w.r.t.\ the stage cost, i.e., 
        $
            V_f(F(x,\mu_f(x))) \leq V_f(x) - \ell(x,\mu_f(x)) 
        $
\end{itemize}
Note that terminal conditions can be easily constructed if the linearization at the desired set point is controllable, see, e.g., the textbook~\cite{RawlMayn17}.
However, if the linearization is not stabilizable, non-quadratic stage or terminal costs may be necessary, see~\cite{MullWort17:quadratic}.

If no (stabilizing) terminal conditions are imposed, the combination of some controllability property and a sufficiently long horizon~$H$ is typically required.
Here, we resort to \textit{cost controllability} as proposed in~\cite{grimm2005model,GrunPann10}, see also the recent paper~\cite{CoroGrun20} and the references therein.
Cost controllability corresponds to the existence of a sequence $(B_k)_{k=2}^\infty$ satisfying\looseness=-1
\begin{equation}\label{eq:growth-bound}
    V_k(\hat{x}) :=\inf_{\mathbf{u} \in \mathcal{U}_k(\hat{x})} J_k(\hat{x},\mathbf{u}) \leq B_k \inf_{\mathbf{u} \in \mathcal{U}_1^\varepsilon(\hat{x})} \ell(\hat{x},u(0)) 
\end{equation}
for all $k \in \{2,\ldots,H
\}$ with degree of suboptimality 
\[
    \alpha_{H}
    := 1 - \tfrac {(B_{H}
    -1) \prod_{k=2}^{H
    } (B_k-1)}{\prod_{k=2}^{H} 
    B_k - \prod_{k=2}^{H}
    (B_k-1)}
\]
Note that the sufficient stability condition $\alpha_H \in (0,1]$ can always be ensured for a \textit{sufficiently long horizon~$\alpha_H$} if the system is cost controllable such that the condition $\limsup_{H 
\rightarrow \infty} B_{H}
/H 
< 1$ holds, see~\cite{MullWort17:quadratic}. Moreover, it implies suboptimality of the MPC closed-loop cost $J_\infty^{\mu_H^\varepsilon}$ in comparison to infinite-optimal costs, see also \cite{BoccGrun14} for the incorporation of state constraints.

In the following, we use quadratic stage cost~\eqref{eq:stage-cost:quadratic} for ease of exposition and refer to~\cite{grimm2005model,worthmann2011stability,CoroGrun20} for MPC results referring to more general stage costs. 
We provide conditions on the Koopman-based surrogate model, under which it can be rigorously shown that the origin is exponentially stable despite model-plant mismatch. 
Indeed, these conditions are compatible for MPC with and without (stabilizing) terminal conditions, see~\cite{schimperna2025data,schimperna2025stability}, see also~\cite{ShanCort25:MPC}, where similar results are shown under the assumption that the linear Koopman model is either exact or satisfies a proportional error bound.
So far, one advantage of bilinear Koopman models as proposed in~\cite{strasser:schaller:berberich:worthmann:allgower:2025csl} using approximation results derived in~\cite{BoldPhil25} is that such a proportional error bound can be ensured using kernel EDMD approximants using the concept of Koopman control family as elaborated in Subsection~\ref{subsec:control:family}. 
Then, stability of the MPC closed loop can be proved by, e.g., first showing that the growth bound~\eqref{eq:growth-bound} is preserved for all $k \in \{2,3,\ldots,\bar{H
}\}$, where $\bar{H
}$ can be chosen arbitrary, but finite (for a sufficiently high approximation accuracy). 
Then, a relaxed Lyapunov inequality can be established as rigorously shown in~\cite{schimperna2025data}, see also~\cite{schimperna2025stability} for comparable results w.r.t.\ terminal conditions. Finally, leveraging stability, cost controllability can be fully recovered, i.e., the growth bound~\eqref{eq:growth-bound} also holds for $k \in \mathbb{N}_{\geq \bar{H
}}$, see~\cite{schimperna2025data,moldenhauer2026discounted}.

A topic of particular importance in Koopman-based MPC is the learning of suitable subspaces~\cite{mamakoukas2023learning} leveraging, e.g., Koopman eigenfunctions~\cite{KordMezi20} and the concept of invariance proximity discussed in \Cref{subsec:Koopman-invariance}. 
MPC using bilinear surrogate models were, e.g., successfully used in robotics~\cite{folkestad2021koopman}, see also~\cite{zhao2024deep} for recent results on deep Koopman MPC.

\begin{remark}\longthmtitle{Input-output data}
    Recently, extensions to Koopman models using input-output data only were proposed, see, e.g., the recent preprints~\cite{iacob:szecsi:mate:beintema:schoukens:toth:2025} on Koopman modeling and~\cite{StraBerb26:Koopman:output} for Koopman control with closed-loop guarantees. Such input-output models may also be leveraged within the presented MPC theory to infer exponential stability of the closed loop, see, e.g., \cite{BoldSchi26}. 
    We also want to point out recent work~\cite{deutscher2024data,deutscher2025koopman} on data-driven robust output regulation for systems governed by partial differential equations.\looseness=-1
\end{remark}

Furthermore, we clearly acknowledge that this subsection is not close to being complete and does not provide a full overview about the current state of the art, e.g., tube-based MPC or Koopman-based approaches related to the various notions of input-to-state stability (ISS) are completely missing. 
While these research directions are highly interesting, they are also out of scope within our tutorial-style exposition and should be treated (in detail) in a potential extended version in the hopefully not-so-far future.

\subsection{A hands-on tutorial on Koopman MPC}
\label{subsec:MPC:tutorial}
We deploy the four EDMDc models of \Cref{subsec:EDMDc:implementation}
as prediction models within an MPC controller. The MPC prediction
horizon is denoted $H\!\leq\! H_t$ in this subsection, to make a distinction  with the training horizon $H_t$ in \Cref{subsec:EDMDc:implementation}.
At each sampling instant $t$,
given the current state $x_t$ and the previous applied input
$u_{t-1}$, we lift $z_t \!=\! \basisfun^x(x_t)$ and solve \looseness=-1
\begin{subequations}\label{eq:MPC:ocp}
\begin{align}
\!\min_{\mathbf{u}_t} \;\; & J_t(\mathbf{u}_t) =
    \sum_{k=0}^{H-1} \ell\bigl(z_{k|t}, u_{k|t}, u_{k-1|t}\bigr)
    + \ell_f\bigl(z_{H|t}\bigr)\! \label{eq:MPC:cost} \\
\text{s.t.}\;\;
& z_{0|t} = \basisfun^x(x_t), \label{eq:MPC:ic} \\
& z_{k+1|t} = F_{\mathrm{Koop}}(z_{k|t}, u_{k|t}),
    ~~ k = 0, \dots, H\!-\!1,\!\! \label{eq:MPC:dyn} \\
& \|u_{k|t}\|_\infty \le u_{\max}, \quad k = 0, \dots, H-1,
    \label{eq:MPC:box}
\end{align}
\end{subequations}
with $\mathbf{u}_t := (u_{0|t}, \dots, u_{H-1|t})$, stage and
terminal costs
\begin{align}
\ell(z, u, u^-) &= (z - z_{\mathrm{ref}})^{\!\top} Q_z (z - z_{\mathrm{ref}})
    + u^\top R_u u \nonumber\\&+ (u - u^-)^\top R_{\Delta u} (u - u^-), \label{eq:MPC:stage} \\
\ell_f(z) &= (z - z_{\mathrm{ref}})^{\!\top} Q_z (z - z_{\mathrm{ref}}),
    \label{eq:MPC:terminal}
\end{align}
and $F_{\mathrm{Koop}}$ instantiated as one of
\eqref{eq:EDMDc:predictor},
\eqref{eq:EDMDc:bilinear:predictor},
\eqref{eq:GeKo:bilinear},
\eqref{eq:EDMDc:KCF:predictor}. The lifted-state weight is pulled
back from a physical-coordinate penalty $Q_x$ via the linear decoder
$D$ in \eqref{eq:decoder} of \Cref{subsec:EDMDc:implementation},
\begin{equation*}
    Q_z = D^{\!\top} Q_x D,
    \qquad
    z_{\mathrm{ref}} = \basisfun^x(x_{\mathrm{ref}}),
\end{equation*}
so that $(z - z_{\mathrm{ref}})^{\!\top} Q_z (z - z_{\mathrm{ref}})
= \|D(z - z_{\mathrm{ref}})\|_{Q_x}^2$ approximates the physical
tracking error in the lifted space. The closed-loop law applies
$u_t = u_{0|t}^\star$ to the plant, advances to $t+1$, and re-solves
\eqref{eq:MPC:ocp}.

\noindent\textbf{Numerical implementation.}
We use a \emph{single-shooting} formulation: the only decision
variables are the input sequence $\mathbf{u}_t := (u_{0|t}, \dots,
u_{H-1|t}) \in \mathbb{R}^{m H}$; the lifted-state sequence is
generated on the fly by rolling the predictor forward. The dynamics
constraint \eqref{eq:MPC:dyn} thus disappears from the optimizer and
\eqref{eq:MPC:ocp} reduces to the dense box-constrained program\looseness=-1
\begin{equation*}
    \mathbf{u}_t^\star
    = \arg\min_{\mathbf{u}_t \in \mathbb{U}^{H}} \; J_t(\mathbf{u}_t),
    \,\,
    \mathbb{U}:= \{\,u \in \mathbb{R}^m : \|u\|_\infty \le u_{\max}\,\},
\end{equation*}
which we solve with \texttt{fmincon}/SQP, warm-started by the shifted
previous solution $(u_{1|t-1}^\star, \dots, u_{H-1|t-1}^\star, 0)$.
Each cost evaluation requires the predicted trajectory and, for SQP
to converge quickly, an accurate gradient $\nabla_{\mathbf{u}_t} J_t
\in \mathbb{R}^{m H}$. We compute the gradient analytically by
\emph{adjoint backpropagation} in four steps.

\emph{Step 1 -- Forward pass.} Starting from $z_{0|t} = \basisfun^x(x_t)$,
propagate the predictor for $k = 0, \dots, H - 1$:
\begin{equation*}
    z_{k+1|t} = F_{\mathrm{Koop}}(z_{k|t}, u_{k|t}),
\end{equation*}
storing each $z_{k|t}$ and the Jacobians
\begin{equation*}
\begin{split}
    A_k &:= \tfrac{\partial F_{\mathrm{Koop}}}{\partial z}\big|_{(z_{k|t}, u_{k|t})}
    \in \mathbb{R}^{n \times n}, \\
    B_k &:= \tfrac{\partial F_{\mathrm{Koop}}}{\partial u}\big|_{(z_{k|t}, u_{k|t})}
    \in \mathbb{R}^{n \times m},
    \end{split}
\end{equation*}
both available in closed form for all four predictors. Accumulate
$J_t$ from \eqref{eq:MPC:cost} during the same sweep. 

\emph{Step 2 -- Adjoint backward pass.} Define the adjoint vector
$\lambda_k := \partial J_t / \partial z_{k|t} \in \mathbb{R}^{n}$.
Initialize at the terminal stage and propagate backwards, storing
each $\lambda_k$:
\begin{equation*}
\begin{aligned}
    \lambda_{H_p} &= 2 Q_z (z_{H_p|t} - z_{\mathrm{ref}}), \\
    \lambda_k &= 2 Q_z (z_{k|t} - z_{\mathrm{ref}})
        + A_k^{\!\top} \lambda_{k+1},
    \qquad k = H - 1, \dots, 1.
\end{aligned}
\end{equation*}

\emph{Step 3 -- Input gradient.} With all $\lambda_k$ now available,
assemble the gradient block for each $u_{k|t}$:
\begin{equation*}
    \tfrac{\partial J_t}{\partial u_{k|t}}
    = 2 R_u u_{k|t} + 2 R_{\Delta u}\, \delta_k    + B_k^{\!\top} \lambda_{k+1},
\end{equation*}
where $\delta_k \in \mathbb{R}^m$ captures the $\Delta u$-coupling
between $u_{k|t}$ and its neighbours,
\begin{equation*}
    \delta_k = (u_{k|t} - u_{k-1|t}) - (u_{k+1|t} - u_{k|t}),
\end{equation*}
with the convention $u_{-1|t} = u_{t-1}$ (previous applied input)
and the second term omitted when $k = H - 1$.

\emph{Step 4 -- Receding-horizon step.} Pass $J_t$ and
$\nabla_{\mathbf{u}_t} J_t = (\partial J_t / \partial u_{0|t}, \dots,
\partial J_t / \partial u_{H-1|t})$ to \texttt{fmincon}, retrieve
$\mathbf{u}_t^\star$, apply $u_t = u_{0|t}^\star$ to the plant,
shift the optimizer's warm-start, and advance to $t + 1$.
\miniskip

\noindent\textbf{Output tracking variant.}
For the DC motor we track a scalar reference $r_t$ on the angular
velocity via the decoder $y_t = D z_t$. Steps 1--4 carry over with
the substitutions $Q_z \leftarrow D^{\!\top} Q_y D$ in both
\eqref{eq:MPC:stage}--\eqref{eq:MPC:terminal} and $z_{\mathrm{ref}}
\leftarrow D^{\dagger} r_k$ (time-varying through $k$). We report
the integral squared error $\mathrm{ISE} = T_s \sum_t \|x_t -
x_t^\star\|^2$ and the CPU time of the solution per-iteration (mean and
worst) in \Cref{tab:MPC:duffing,tab:MPC:dcmotor}.

\begin{table}[t]
\centering
\caption{Closed-loop MPC, forced Duffing oscillator (regulation to
origin, $H = 15$, $T_{\mathrm{mpc}} = 100$ steps): CPU time and tracking ISE for the 4 EDMDc forms
(multi-step training). (Runs 100/100).}
\vspace{-0.2cm}
\label{tab:MPC:duffing}
\setlength{\tabcolsep}{6pt}
\footnotesize
\begin{tabular}{@{}l ccc@{}}
\toprule
Form & ISE & mean CPU [ms] & worst CPU [ms] \\
\midrule
Linear   & 3.291 & 2.2 & 20.7 \\
Bilinear & 3.287 & 3.1 &  6.3 \\
GeKo     & 3.286 & 4.7 & 11.0 \\
KCF      & 3.286 & 8.2 & 29.4 \\
\bottomrule
\end{tabular}
\end{table}

\begin{figure}[t]
\centering
\includegraphics[width=0.8\columnwidth]{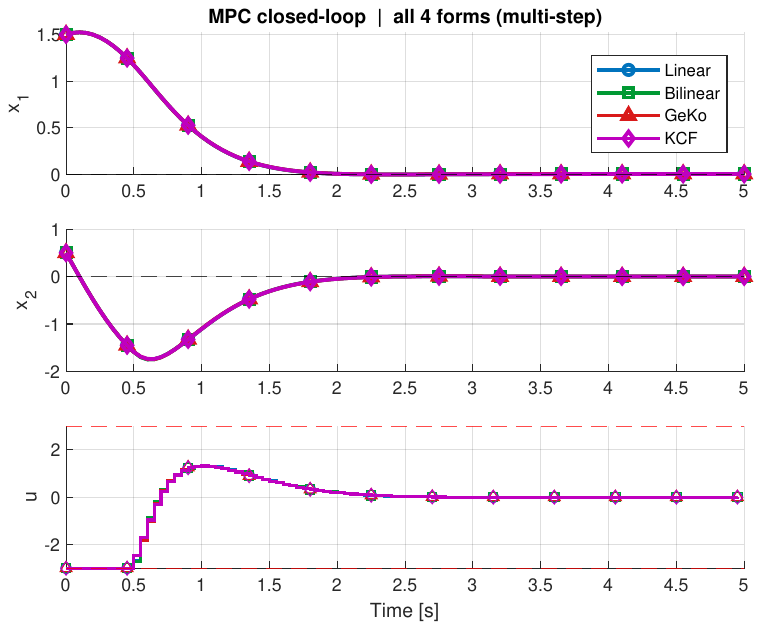}
\vspace{-0.35cm}
\caption{Closed-loop Koopman MPC results, Duffing oscillator.}
\label{fig:duf:MPC}
\end{figure}
\begin{example}[forced Duffing oscillator]
For the Duffing oscillator, all four EDMDc forms achieve nearly
identical tracking performance (ISE within $0.2\%$) -- expected,
since the plant is bilinear-in-$u$ and any of the four predictors
captures the local dynamics well enough inside a horizon-$H$
receding window. The forms separate only in per-step computational
cost, which grows with the lifted dimension: Linear is cheapest
($2.2$\,ms mean), followed by Bilinear and GeKo ($3$--$5$\,ms),
while the input-lifted KCF is the most expensive ($8.2$\,ms). All
four remain well within the $T_s = 50$\,ms sampling rate, with
worst-case times under $30$\,ms. The closed-loop trajectories are
given in Figure~\ref{fig:duf:MPC}. \hfill$\blacksquare$
\end{example}

\begin{table}[t]
\centering
\caption{Closed-loop MPC, DC motor (tracking on $x_2$, $H = 20$, $T_{\mathrm{mpc}} = 600$ steps,
$T_s = 5$\,ms): CPU time and tracking ISE for
the 4 EDMDc forms (multi-step training). (Runs 600/600).}
\vspace{-0.2cm}
\label{tab:MPC:dcmotor}
\setlength{\tabcolsep}{6pt}
\footnotesize
\begin{tabular}{@{}l r r r@{}}
\toprule
Form & ISE & mean CPU [ms] & worst CPU [ms] \\
\midrule
Linear   & $58.7$        &  2.5 &   8.0 \\
Bilinear & $28{,}087$    &  2.6 &   6.1 \\
GeKo     & $32.4$        & 57.9 & 140.4 \\
KCF      & $32.2$        & 62.5 & 141.5 \\
\bottomrule
\end{tabular}
\end{table}

\begin{figure}[t]
\centering
\includegraphics[width=0.8\columnwidth]{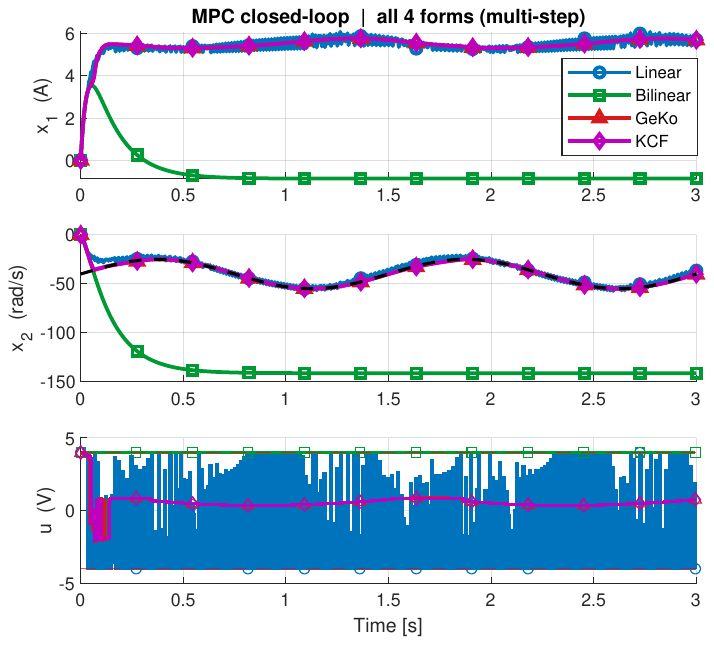}
\vspace{-0.35cm}
\caption{Closed-loop Koopman MPC results, DC motor.}
\label{fig:dc:MPC}
\end{figure}
\begin{example}[DC motor with input nonlinearity]
The picture inverts on the DC motor. Bilinear MPC blows up by nearly
three orders of magnitude in ISE ($28{,}087$ vs.\ $58.7$ for Linear)
because the non-monotone $f(u\cos u)$ breaks the control-affine
assumption underlying \eqref{eq:EDMDc:bilinear:predictor}, and the
receding-horizon loop amplifies that misfit. The lifted-input forms
GeKo and KCF make no such assumption and achieve the lowest tracking
error ($32.4$ and $32.2$, against $58.7$ for Linear). The binding
constraint is now computation: at $T_s = 5$\,ms, GeKo and KCF
average $58$\,ms and $63$\,ms per step, with worst cases near
$140$\,ms, i.e., $10$--$30\times$ over the sampling period, whereas
the affine forms stay in the few-millisecond range. The take-away
from \Cref{tab:MPC:duffing,tab:MPC:dcmotor} is that input lifting is
indispensable for closed-loop tracking when control-affineness
fails, but at the present \texttt{fmincon}/SQP cost precludes
hard real-time deployment at millisecond sampling rates; specialized
solvers would be
needed to close this gap. ~ \phantom{n} \hfill$\blacksquare$
\end{example}

\mysec{Koopman meets machine learning}{\secviipages}
\label{sec:ML}

The fields of Koopman operator methods and machine learning have had a growing mutual impact on each other in recent years. This section sheds light on some of the most significant of development of these interactions. \Cref{subsec:stat_learn} provides a statistical learning perspective on Koopman operator learning. An alternative Bayesian view is presented in \Cref{subsec:Bayesian}. The fundamental idea behind deep learning-based approaches for learning Koopman operators is explained in \Cref{subsec:deep_koopman}. Finally, examples of the exploitation of Koopman operator theory in modern machine learning approaches such as generative models, model pruning and reinforcement learning are discussed in \cref{subsec:ml_beyond}.

\mysubsec{Statistical learning theory for Koopman models}{\secviiapages}
\label{subsec:stat_learn}

While the previous sections focused mainly on system representations and methods together with their usage in control, they provided only limited insight into how these algorithms behave giving increasing amounts of data. To understand this aspect of learning Koopman operator models, statistical learning theory provides us with effective techniques to analyze the decay of estimation errors for a given learning algorithm \cite{vapnik99:overview}. 

At the core of statistical learning theory is the assumption that data is generated by an arbitrary, unknown but fixed distribution \cite{von2011statistical}, i.e., the distribution does not need to satisfy a certain parametric form or coverage conditions. Based on this assumption, the central quantity that is investigated is the risk\looseness=-1
\begin{align}\label{eq:risk}
    \mathcal{R}(M) = \mathbb{E}_{x,y}[l(x,y,M(x))]
\end{align}
of a model $M$ under a loss function $l$. For example, we have the model $M(x) = K\basisfun_{[n]}(x)$ and the loss function 
\begin{align*}
    l(x,y,M(x))=\|\basisfun_{[n]}(y)-M(x)\|^2
\end{align*}
for the EDMD algorithm introduced in \Cref{sec:EDMD} \cite{kostic22:learning}. Given a model class and a loss, we can then define an optimal model $M^*$ as the minimizer of the risk \eqref{eq:risk}. However, we typically cannot compute the expectation in \eqref{eq:risk} analytically since we do not have direct access to the necessary probability distribution generating the data. Therefore, we approximate the expectation using training samples and use the empirical risk
\begin{align}\label{eq:emp_risk}
    \hat{\mathcal{R}}(M)=\sum\nolimits_{i=1}^N \|\basisfun_{[n]}(y^{(i)})\!-\!M(x^{(i)})\|^2
\end{align}
as proxy in practice, giving rise to the name of this approach: empirical risk minimization \cite{von2011statistical}. Hence, it can be immediately observed by comparing \eqref{eq:LS_dictionary} and \eqref{eq:emp_risk} that EDMD is a form of empirical risk minimization. It is noteworthy that statistical learning approaches often minimize a regularized empirical risk, i.e., an objective of the form $\hat{\mathcal{R}}(M)+c\|M\|$ with parameter $c\in\mathbb{R}_{>0}$ and suitable norm, such that overfitting to noise is mitigated.

As empirical risk minimization employs \eqref{eq:emp_risk} merely as a proxy, it is important for a learning algorithm that the risk of the learning algorithm converges in probability to the risk of the best model, which is commonly referred to as consistency \cite{von2011statistical}. 
A more quantifiable, commonly used requirement in modern literature is that the generalization gap/error $|\hat{\mathcal{R}}(M)-\mathcal{R}(M)|$ of a learning algorithm must vanish asymptotically. The generalization gap is an informative quantity since it provides us with insight on how well we can predict the test error, which is the sum of generalization gap and empirical risk. On the other hand, we can analyze the convergence \textit{rate} of the generalization gap, such that we obtain additional information on how fast our algorithm allows us to learn. 
Note that it is crucial in general to restrict the considered function class for analyzing the consistency and decay rate of the generalization gap of an algorithm \cite{vapnik1971uniform}. Thus, kernel methods have become a popular tool in statistical learning as their induced RKHSs are flexible yet well-specified function classes \cite{vapnik2013nature}. 
Given a restricted function class, it is necessary to additionally quantify its expressivity to enable the derivation of decay rates for the generalization gap. Common metrics for this quantification are Rademacher and Gaussian complexities \cite{bartlett02:rademacher}.

Using techniques from statistical learning theory, \cite{kostic22:learning} shows that generalization gaps of regularized EDMD satisfy\looseness=-1
\begin{align*}
    \!|\hat{\mathcal{R}}(M)\!-\!\mathcal{R}(M)| \leq \mathcal{O}\Big(\! \tfrac{\sqrt{n\log(\sfrac{N^2}{\delta})}}{\sqrt{N}}\!+\!\tfrac{\sqrt{n}\log(\sfrac{N^2}{\delta})}{N}\!\Big)\!
\end{align*}
with probability at least $1-\delta$ under the assumption that training data is independently identically distributed (iid). Similar learning rates can be obtained for estimated eigenvalues \cite{kostic22:learning, kostic23:sharp}. Moreover, extensions for the Koopman generator exist \cite{kostic24:learning}. Note that the  iid data assumption directly stems from classical results in statistical learning theory \cite{von2011statistical}, but it conflicts with the common availability of trajectory data for learning Koopman operator models. Rigorously handling non-idd data requires modified concentration inequalities \cite{kostic22:learning}.\looseness=-1

A major benefit of the empirical risk minimization approach lies in its flexibility and adaptability. For example, it provides a natural framework to integrate constraints, e.g., on the sparsity of the Koopman operator approximation \cite{kostic22:learning}. Furthermore, the objective can be easily modified to consider further requirements such as Koopman invariance on data. This is demonstrated in \cite{bevanda21:koopman} by 
considering the modified objective \looseness=-1
\begin{align*}
    l(\hat{x},y,M)=\mathbb{E}_{\tau}\left\| \observable((x(t,\hat{x})) \!-\! \left( \mathcal{I}_{\lambda,t}^{[\tau_s,\tau_e]} M(x)\right) (\hat{x}) \right\|^2
\end{align*}
with $\tau\sim\mathcal{U}([\tau_s,\tau_e])$, a RKHS as function class, i.e., $M\in\mathcal{H}$, and the symmetrization operator 
\begin{align*}\nonumber
    (\mathcal{I}_{\lambda,t}^{[\tau_s,\tau_e]} \basisfun)(\hat{x})=\int\nolimits_{\tau=\tau_s}^{\tau_e} \exp(-\lambda(\tau-t)) \basisfun(x(\tau,\hat{x}))\,\mathrm{d}\tau.
\end{align*}
Notably, this symmetrization operator maps a function to its closest eigenfunction \cite{bevanda25:koopman}, such that $\mathcal{I}_{\lambda,t}^{[\tau_s,\tau_e]}\eigfun= \eigfun$. Moreover, the choice of a RKHS as a function class allows us to formulate the solution in terms of kernels. Thereby, this approach allows us to learn Koopman invariant features via kernels of the form
\begin{align}\label{eq:Koopman_kernel}
    k_{\lambda}([\hat{x};t],[\hat{x}';t'])= \mathcal{I}_{\lambda,t}^{[\tau_s,\tau_e]} k(\hat{x},\hat{x}')\overline{\mathcal{I}_{\lambda,t'}^{[\tau_s,\tau_e]}}.
\end{align}

\mysubsec{Bayesian perspective on kernel methods}{\secviibpages}
\label{subsec:Bayesian}

A common alternative to the frequentist perspective of statistical machine learning is the probabilistic view of Bayesian methods \cite{ghahramani2015probabilistic}. While Bayesian  techniques have attracted only limited attention for learning Koopman operator models in the past, many kernelized approaches can be equivalently formulated in a Bayesian setting due to the relationship between kernel and Gaussian process regression \cite{kanagawa18:gaussian}. 

A Gaussian process (GP) is the generalization of the normal distribution to function space, which defines a joint Gaussian distribution for any finite number of test points. This is compactly expressed by specifying the model $M$ as $M\sim \mathcal{GP}(0,k)$, whereby the kernel $k$ defines a covariance between points. 
Given a GP prior $M\sim \mathcal{GP}(0,k)$, regression is formulated as Bayesian inference problem. That is, we condition the prior on training pairs $(x,y)$, which results in a posterior distribution that is also Gaussian. Assuming Gaussian noise on the data $y$, the point-wise posterior distribution can be specified in closed-form via the mean and variance function
\begin{align*}
    m_M(x) &= k_{Xx}^\top (k_{XX}+\sigma_n^2)^{-1}y_{[N]}\\
    \sigma_M^2(x)&= k(x,x)-k_{Xx}^\top (k_{XX}+\sigma_n^2)^{-1}k_{Xx},
\end{align*}
where $[y_{[N]}]=y_i$, $\sigma_n^2$ corresponds to the noise variance and $k_{Xx}$, $k_{XX}$ are defined in \Cref{subsec:kEDMD}. While we often do not know the noise variance $\sigma_n^2$ in practice, the probabilistic approach provides us with a straightforward approach to automatically tune this parameter: maximum likelihood estimation. In the case of GP regression, the logarithm of the likelihood admits the closed-form expression
\begin{align*}
    \log p(y|X) =& -\tfrac{1}{2} y_{[N]}^\top (k_{XX}+\sigma_n^2)^{-1}y_{[N]}\\ &- \tfrac{1}{2}\log(\det( k_{XX}+\sigma_n^2 )) - \tfrac{N}{2}\log(2\pi),\nonumber
\end{align*}
whose gradients can be described by closed-form expressions as well. Therefore, it lends itself to gradient-based optimization. Notably, this log-likelihood maximization is not restricted to the noise variance $\sigma_n^2$, but can be equally employed to tune the parameters of kernels such as lengthscales, c.f., \Cref{tab:common_kernels}. Thereby, Gaussian process regression and Bayesian methods more generally provide a significant advantage over frequentist approaches, which do not offer such a structured approach for parameter optimization.

The beneficial properties of the probabilistic perspective for machine learning extend far beyond structured hyperparameter optimization. Variational approximations provide a theoretically founded approach toward scalable kernelized learning by allowing us to optimize lower bounds of the log-likelihood \cite{hensman13:gaussian}. More broadly, the wide application of GP models has led to a large variety of scalable GP approximations \cite{liu20:gaussian} and software packages that make them ready to use \cite{gardner18:gpytorch, virtanen20:scipy, rasmussen10:gaussian, lederer2021gaussian}. Moreover, the availability of regression error bounds \cite{srinivas12:information,chowdhury2017kernelized,lederer2019uniform, fiedler2021practical} and information-theoretic sample complexity guarantees \cite{srinivas12:information} make GP regression appealing. Note that these guarantees immediately extend to the parametric setting since linear Bayesian regression is a special case of GP regression.

Using dedicated Koopman kernel designs such as \eqref{eq:Koopman_kernel}, these beneficial properties can be straightforwardly exploited~\cite{bevanda25:koopman}. However, it mostly remains an open research question how the Bayesian perspective can be beneficially employed more widely for Koopman operator learning.

\mysubsec{Deep learning-based Koopman models}{\secviicpages}
\label{subsec:deep_koopman}

The end of the 20th and the beginning of the 21st century has seen a revolutionary 
increase in the availability of data. Indeed, we are in the middle of the {\it sensing} revolution,
where sensing is used in the broadest meaning of data acquisition. Most of this data goes unprocessed,
unanalyzed, and consequently, unused. This causes missed opportunities, in 
domains of vast societal importance - health, commerce, technology, security, just to mention some.

A variety of mathematical methods have emerged out of that need. Perhaps the most popular methodology, Deep Neural Networks, has as the underlying learning elements ``neuron functions", that
are modeled after biological neurons. The algorithms based on deep learning have
achieved substantial success in image recognition, speech recognition and natural language processing, deploying the ``supervised" machine learning philosophy.
Convolutional neural networks \cite{lecun2018power} provided a superstructure to the deep neural network architecture   that resembles the organization of the animal visual cortex.
This led to an enormous success in image recognition, and even in realistic image generation, via Generative Adversarial Networks (GAN's) \cite{goodfellow2014generative} and Flow Matching \cite{lipman2023flow}. 
These examples motivate the exploitation of the flexibility and scalability of deep learning techniques for Koopman operator learning approaches.

``Deep Koopman" methods \cite{lusch18:deep,yeung2019learning,otto2019linearly} seek a finite-dimensional approximation of the
Koopman operator by combining  operator theory with deep
learning. The objective of the Deep Koopman architectures is to learn observables spanning a
finite-dimensional invariant subspace.
To this end, a deep encoder network $\basisfunvec:\mathcal{X}\to\mathbb{R}^n$
and decoder $\decoderfunvec:\mathbb{R}^n\to\mathcal{X}$ are trained jointly with
a matrix $K\in\mathbb{R}^{n\times n}$ so that the lifted coordinates
$z=\basisfunvec(x)$ evolve linearly, $z^+ = Kz$. 
The training objective balances three losses. The \emph{prediction loss}\looseness=-1
\[
  \mathcal{L}_{\mathrm{pred}}
  = \sum\nolimits_{i=1}^{N}\bigl\|\basisfunvec((x^{(i)})^+)-K\basisfunvec(x^{(i)})\bigr\|^2
\]
enforces linear evolution in the latent space. The \emph{reconstruction
loss}
\[
  \mathcal{L}_{\mathrm{rec}}
  = \sum\nolimits_{i=1}^{N}\bigl\|\decoderfunvec(\basisfunvec(x^{(i)}))-x^{(i)}\bigr\|^2
\]
ensures the embedding is invertible. The \emph{multi-step linearity
loss}
\[
  \mathcal{L}_{\mathrm{lin}}
  = \sum\nolimits_{i=1}^{N}\sum\nolimits_{j=1}^{H}
    \bigl\|\basisfunvec((x^{(i)})^{(+j)})-K^j\basisfunvec(x^{(i)})\bigr\|^2
\]
with $(x^{(i)})^{(+j)}$ denoting the $j$ step successor state
suppresses error accumulation over long rollouts. The full objective is $\mathcal{L}
  = \mathcal{L}_{\mathrm{pred}}
  + \alpha\,\mathcal{L}_{\mathrm{rec}}
  + \beta\,\mathcal{L}_{\mathrm{lin}}$, with weights $\alpha,\beta>0$. 

This set of architectures is remarkably similar to the modern AI ``world model" architectures such as JEPA \cite{lecun2022path} as has been observed in \cite{ruiz2026koopman}.

\mysubsec{Exploiting Koopmanism in machine learning}{\secviicpages}
\label{subsec:ml_beyond}

Even though deep learning techniques have been successfully employed in many applications as discussed in \cref{subsec:deep_koopman}, their success is essentially limited to 
static pattern recognition or generation tasks. In particular, deep learning methodologies are less successful in dynamically changing contexts \cite{miller2024survey}, present, for example, in autonomous driving.
It is interesting that, in contrast to biological modules responsible for vision,  even the basic issue of finding specific brain structures that are responsible for perception of time, and thus understanding of dynamics, are still being investigated~\cite{roseboom2019activity}.\looseness=-1

As indicated in the prior text, Koopman operator theory has recently emerged as the main candidate for machine learning of dynamical processes \cite{mezi05,budisicetal:2012,Mauroy20Susu20:book}. The connections between the machine learning and Koopman operator communities are growing rapidly, in many directions, and Koopman operator methods are increasingly used in machine learning itself.\looseness=-1
\miniskip

\noindent \textbf{Flow matching and diffusion models.}
Flow matching and diffusion model technologies are generative machine learning methods that employ an iterative sampling procedure during inference to sample from a learned probability distribution. By interpreting the iterative sampling process as a dynamical system, Koopman operator techniques immediately become applicable. In \cite{berman25:one}, this insight is used to design a Koopman-based offline diffusion distillation algorithm, which compresses noisy-to-clean endpoint pairs into a single latent linear evolution to reduce the computational complexity of sampling.
Subsequent work addresses the trajectory information that this endpoint view leaves implicit: \cite{bai2026hierarchical} recovers intermediate diffusion states through hierarchical Koopman subspaces, while \cite{turan2025unfolding} enforces consistency with the teacher vector field in conditional flow matching, preserving the generative path rather than only its endpoints. Extending this line beyond image generation, \cite{zheng2026dynamic} proposes a dynamic Koopman distillation approach for real-time closed-loop robot control, where a state-dependent factorized transition enables low-latency diffusion-policy inference under receding-horizon execution. Together, these works show how Koopman operator methods benefits modern generative machine learning models.\looseness=-1
\miniskip

\noindent \textbf{Neural network pruning.} The removal of parameters to yield sparsity, which is commonly referred to as pruning, has attracted sustained 
interest since the
discovery of the \emph{lottery ticket hypothesis}~\cite{frankle2019}.
A central open question is why the na\"{i}ve strategy of removing
weights by magnitude is as competitive as far more elaborate
algorithms, and how the training trajectory itself encodes information
about which parameters are expendable.
Redman et al.~\cite{redman22:an} address this directly by recasting
the gradient-descent training trajectory as a \emph{dynamical system}
and applying Koopman operator theory to its analysis.
Viewing the sequence of parameter vectors $\{\mathbf{w}_n\}$ produced
during training as an orbit in weight space, the Koopman mode
decomposition 
of this trajectory extracts spectral information
that naturally identifies which weights are dynamically inert.
The key result is that the resulting Koopman-theoretic pruning
algorithms \emph{unify} magnitude and gradient-based pruning: both
emerge as special cases of projecting the weight trajectory onto
dominant Koopman modes, thereby providing the first operator-theoretic
justification for magnitude pruning's robustness, particularly in the
pre-convergence regime where empirical success had previously lacked
theoretical explanation.
Building on this spectral view of training, Redman et
al.~\cite{redman24:identifying} develop a framework for identifying
\emph{equivalent training dynamics} across architectures by comparing
Koopman eigenvalues extracted from their respective weight
trajectories.
Two networks are declared dynamically equivalent if their Koopman
spectra are conjugate, a condition that can be checked from data alone
without knowledge of the loss landscape geometry.
The framework is validated on convolutional architectures and on
Transformers that do and do not undergo \emph{grokking} (delayed
generalization), demonstrating that comparing Koopman eigenvalues
correctly identifies the known equivalence between online mirror
descent and online gradient descent.
Taken together, these works establish a principled dynamical-systems
language for two central phenomena in deep learning --- why sparse
networks can be found efficiently, and when two apparently different
training procedures are in fact equivalent --- and suggest that the
spectral geometry of the Koopman operator is a natural invariant for
characterizing neural network training dynamics.\looseness=-1
\miniskip

\noindent \textbf{Reinforcement learning.}
The intersection of reinforcement learning (RL) and Koopman operator
theory has emerged as a productive research direction, motivated by the
observation that the Bellman equation governing optimal control is
itself a dynamical object amenable to linearization. Recall that in a
Markov Decision Process (MDP) with state $s$, action $a$, reward $r$,
and discount factor $\gamma\in(0,1)$, the optimal value function
$V^*(s)$ satisfies the Bellman equation
\[
  V^*(s)
  = \max_{a}\Bigl[r(s,a)
    + \gamma\,\mathbb{E}_{s'\sim P(\cdot|s,a)}\bigl[V^*(s')\bigr]\Bigr],
\]
which is intractable for nonlinear, high-dimensional systems.
The key insight of \cite{rozwood23:koopmanassisted} is that the expectation operator
on the right-hand side is precisely the action of a
\emph{controlled Koopman operator} $\mathcal{K}^a$ on the observable
$V^*$:
\[
  \mathcal{K}^a V^*(s)
  = \mathbb{E}_{s'\sim P(\cdot|s,a)}\bigl[V^*(s')\bigr].
\]
By parameterizing $\mathcal{K}^a$ with the control action one obtains
a \emph{Koopman tensor} $\mathcal{T}$ such that
\[
  \mathcal{K}^a \basisfunvec^s(s)
  \approx \mathcal{T}\bigl[\basisfunvec^s(s)\otimes\basisfunvec^a{k}(a)\bigr],
\] 
where $\basisfunvec^s:\mathcal{S}\to\mathbb{R}^d$ and $\basisfunvec^a:\mathcal{A}\to\mathbb{R}^m$
are learned observable
dictionaries for the state and action spaces respectively, and
$\otimes$ denotes the Kronecker product.
Lifting the Bellman equation into this linear framework renders
Hamilton--Jacobi--Bellman-based methods substantially more tractable.\looseness=-1

A complementary direction exploits the spectral structure of the
Koopman operator to address the out-of-distribution problem endemic to
offline RL~\cite{weissenbacher2022}.
The central observation is that the Koopman eigenfunctions
$\varphi_i$ satisfying $\mathcal{K}\eigfun_i = \lambda_i\eigfun_i$ encode the \emph{symmetries} of the system dynamics: if
$\eigfun_i(s) = \eigfun_i(s')$ then $s$ and $s'$ are dynamically
equivalent and can be used to augment one another in the training
dataset.
By learning a Koopman latent representation of the environment and
reading off its symmetry group from the eigenstructure of $\mathcal{K}$,
\cite{weissenbacher2022} construct a principled data augmentation
scheme, referred to as Koopman Forward Conservative Q-learning (KFC), that
extends the static offline dataset along symmetry trajectories
parametrized by $\varepsilon$:
\[
  (s,a,r,s') \;\longmapsto\;
  \bigl(s+\varepsilon,\,a,\,r,\,s'+\varepsilon\bigr),
  \quad \varepsilon \in \mathcal{E}_{\mathrm{sym}},
\]
and is shown to consistently improve over state-of-the-art model-free
Q-learning baselines.
Further threads apply Koopman-linearized models directly to policy
optimization in robotics~\cite{retchin2023,sinha2022}.

What emerges is a powerful framework for unsupervised learning from small amounts of data, enabling self-supervised learning \cite{lecun2018power} that is much more in line with the theory of human learning than the machine learning methods of the second wave \cite{prabhakar2017powerful}.

\section{Outlook}
\label{sec:outlook}

Several promising directions and open research questions naturally emerge from this tutorial.
On the applied side, an important next phase for the field of Koopman operator methods is to connect the recent theoretical and methodological progress more directly with high-impact applications. This includes deploying Koopman-based models in robotics settings marked by contact-rich, dexterous, non-prehensile, soft, and hybrid dynamics, and exploiting the theoretical insights provided by the Koopman operator framework in modern, large-scale AI models for language, vision, and planning. 
From a methodological perspective, an important question concerns the refinement of dictionaries to reduce 
modeling error
while simultaneously capturing relevant dynamical behaviors. 
Here, subspace pruning methods with theoretical guarantees offer a promising direction for the future. Additionally, the probabilistic perspective on Koopman operator learning is severely underexplored despite its potential to enable scalable training methods with rigorous theoretical guarantees amenable to uncertainty quantification. Hybrid extensions that unify discrete-time and continuous-time systems are a further interesting research direction given the separate handling of these system classes in existing methods.
Finally, various important research questions about theoretical aspects of Koopman operator methods remain. For example, the benefits of (approximate) Koopman invariance on the complexity of learning problems is not well understood. Similarly, deep learning-based Koopman models are lacking a suitable extension of the existing theoretical framework for analyzing Koopman operator learning. Even though this outlook only covers exemplary research problems, we believe that it illustrates the exciting future research opportunities in the field of Koopman operator methods.

\section*{acknowledgements}

I.M.\ was supported by  AFOSR award number FA9550-22-1-0531 and Defense Advanced Research Projects Agency (DARPA) under Agreement No. HR00112590152.
A.L.\ and K.W.\ are thankful for support by the  Deutsche Forschungsgemeinschaft (DFG, German
Research Foundation) within the research unit ALeSCo – Active Learning
in Systems and Control, project-ID 535860958. Moreover, A.L.\ gratefully acknowledges support by a start-up grant of the National University of Singapore. J.C.\ was supported by ONR Award N00014-23-1-2353.

\label{sec:references}
\bibliographystyle{ieeetr}
\bibliography{references_clean.bib}

\end{document}